\documentclass[12pt]{amsart}
\usepackage[margin=2cm]{geometry}
\usepackage{amsfonts,mathrsfs}
\usepackage{amssymb}
\usepackage[dvips]{graphics}
\usepackage{epsfig}
\pagestyle{myheadings}
\usepackage{euscript}
\usepackage{color}
\usepackage[colorlinks,pdfstartview=FitB]{hyperref}
\hypersetup{allcolors=blue}
\usepackage{caption}
\usepackage{subcaption}

\vfuzz2pt 


\newtheorem{thm}{Theorem}[section]
\newtheorem{cor}[thm]{Corollary}
\newtheorem{lem}[thm]{Lemma}
\newtheorem{conj}[thm]{Conjecture}
\newtheorem{cl}[thm]{Claim}
\newtheorem{rem}[thm]{Remark}
\newtheorem{ex}[thm]{Example}

\theoremstyle{definition}


\numberwithin{thm}{section}

\newcommand{\R}{{\mathord{\mathbb R}}}
\newcommand{\N}{{\mathord{\mathbb N}}}

\newcommand{\s}{\mathcal{S}}

\newcommand{\C}{{\mathord{\mathbb C}}}
\newcommand{\Z}{{\mathord{\mathbb Z}}}

\def\idty{{\mathchoice {\mathrm{1\mskip-4mu l}} {\mathrm{1\mskip-4mu l}} %
{\mathrm{1\mskip-4.5mu l}} {\mathrm{1\mskip-5mu l}}}}

\DeclareMathOperator{\Span}{span}

\DeclareMathOperator{\diag}{diag}
%

%
\begin{document}

\title[Quantum Walks in a Periodic Local Field]{Exponentially decaying velocity bounds of quantum walks in periodic fields}

\author[H. Abdul-Rahman]{Houssam Abdul-Rahman}
\address{Houssam Abdul-Rahman\\ Mathematics, Science Division, NYU Abu Dhabi.}
\email{ha2271@nyu.edu}
\author[G. Stolz]{G\"unter Stolz}
\address{G\"unter Stolz\\ Department of Mathematics\\
University of Alabama at Birmingham\\
Birmingham, AL 35294 USA\\ Phone +1-205-934-2154}
\email{stolz@uab.edu}

\date{\today}


\begin{abstract}
We consider a class of discrete-time one-dimensional quantum walks, associated with CMV unitary matrices, in the presence of a local field. This class is parametrized by a transmission parameter $t\in[0,1]$. We show that for a certain range for $t$, the corresponding asymptotic velocity  can be made arbitrarily small by introducing a periodic local field with a sufficiently large period. In particular, we prove an upper bound for the velocity of the $n$-periodic quantum walk that is decaying exponentially in the period length $n$. Hence, localization-like effects are observed even after a long number of quantum walk steps when $n$ is large.
\end{abstract}

\maketitle


\tableofcontents

%
%

\allowdisplaybreaks

\section{Introduction}

Motivated by the widespread applications of classical random walks in all branches of science from search algorithms to the study of stock markets, quantum walks, the quantum analogue of random walks,  have received great  interest in different disciplines. In many settings they have been proven to be superior over their classical counterparts. 
For example, quantum walks are shown to give exponential speedups  over many classical algorithms, see e.g.,  \cite{Alg-1, Alg-2, Alg-3}.
Quantum walks represent now key operations in quantum computing, information, simulation, and communication. 
They have been modeled to implement universal quantum gates for quantum computers \cite{App-QGates1, App-QGates2}, and quantum simulations \cite{App-QS1,App-QS2}. Quantum walks also represent relatively simple models to study and understand quantum dynamics, see e.g. \cite{QD-1, BB1988, QD-2, QD-3} in the physics literature, and \cite{QDM-1, BHJ, QDM-2, HJS9} for their mathematical analysis.
Moreover,  recent highly controllable experimental realizations of quantum walks with photons are emerging with  promises for advances in dynamical quantum walk simulations and in different quantum technologies, e.g., \cite{Exp-1, Exp-2, Exp-3, Exp-4, Exp-5}. More recently, quantum walks are used in detecting topological order in condensed matter physics \cite{TopOrd-1, TopOrd-2,TopOrd3, TopOrd4}.

Among the several types of quantum walks that have been introduced and studied  in different contexts, we consider in this work a class of discrete time quantum walks on $\Z$ in the presence of a local field. Consider a quantum walker with two internal degrees of freedom (e.g., spin states) sitting on $\Z$. The unit time step dynamics is consisting of the application of three consecutive processes by the means of unitary transformations: An update of the internal degree, followed by a one step shift on $\Z$ conditioned on the internal state, then a certain phase is picked up by the walker. The first two steps, the update of inner degree and the conditional shift, correspond to a unitary operator $U_t$ with parameter $t\in[0,1]$ that we will refer to as the \emph{transmission parameter}. It determines the velocity of the walker in the range from no transport when $t=0$ to the trivial ballistic transport when $t=1$. The last step of the quantum walk that applies phases corresponds generally to a unitary diagonal operator $\mathcal{D} = \diag(e^{i\omega_j})$, $j\in \Z$, $\omega_j \in \R$, thought of as a ``local field'', and for our main result, this local field is taken to be periodic. Thus, generally speaking,  our model takes the form
\begin{equation}\label{GeneralModel}
U_{t,\mathcal{D}}=\mathcal{D}U_t.
\end{equation}
In this work, the field-free quantum walk $U_t$ has the same five diagonal structure as the so-called CMV-matrices (see e.g., \cite{CMV1, CMV-Simon}), see \eqref{U-CMV} below, which have found much interest in the theory of orthogonal polynomials on the unit circle, where they arise as unitary analogues of Jacobi matrices. The spectral properties of different CMV matrices have been investigated in the last decade, see e.g.,  \cite{CMV-Math1,CMV-Math2, CMV-Math3}.  Models of the form (\ref{GeneralModel}) were first used in the physics literature, e.g., \cite{BB1988}, and their mathematical study was initiated in \cite{BHJ}. 

As is well known, translation invariance of quantum walks on $\Z$  typically  yields large transport, see e.g., \cite{QWLectures9}, \cite[Thm 4]{QWVelocity}, or \cite[Thm 9.1]{QWVelocity-Damanic}. This is the case for our field-free quantum walk $U_t$ for $t\in(0,1]$, see Section \ref{sec:model}. In the disordered case, when the phases determined by  the local field $\mathcal{D}$ are chosen to be i.i.d random variables $\theta_k^{\omega}$, models of the form (\ref{GeneralModel}) are thought of as analogues of the Anderson model, where $U_t$ and $\mathcal{D}_\omega=\diag\{\theta_k^{\omega}\}$ play the roles of the discrete Laplacian and the random potential, respectively. For that, they are referred to as \emph{unitary Anderson models}, for which localization results are proven in different regimes, see, e.g., \cite{J4, J5, HJS6, HS7, HJS9, JM10, HJ14,QW-Loc1}. The Anderson localization phenomena of quantum walks started to receive greater interest in the physical community after their realization of photons in photonics lattices  \cite{Phy-Localization1,Phy-Localization2}.  This is due to  the fact that the generation and manipulation of photons is controllable, and unlike many other quantum technologies,  many photonics systems can be operated ideally in room temperature conditions. For that, \emph{localized quantum walks} are now of potential use for secure transmission of quantum information \cite{App-Info-Transport} and secure quantum memory \cite{App-secureQM}.

Very recent implementations of quantum walks in the physics literature  are stressing that localization-like effects for quantum walks can be observed also in disorder free regimes, e.g., \cite{Phs-PQW1,Phs-PQW2,Phs-PQW3,Phs-PQW4, Phs-QW5}. For example, quasi-periodic arrays of waveguides  are proposed to realize localized quantum walks in \cite{Phs-QW5}, and they have been demonstrated recently in a type of optical fibers with rings of cores constructed with quasi-periodic Fibonacci sequences in \cite{Phs-PQW2}.

In the periodic case, it is generally expected and in some cases rigorously known that periodic quantum systems exhibit non-trivial ballistic quantum transport \cite{AK, DLY}. For example, \cite{DLY} shows the following: If $J$ is a periodic block Jacobi matrix and $X$ the position operator, then $\frac{1}{t}e^{iJt}  X e^{-iJt} \psi \to Q\psi$ as $t\to \infty$, where $\psi \in D(X)$ and $Q$ a bounded selfadjoint  operator with $\text{ker}(Q) =0$, interpreted as the velocity operator. In \cite{AK} a similar result for continuous periodic Schr\"odinger operators in any dimension is proven. Here, we ask whether it is possible to choose a periodic local field (the analogue of periodic potential) such that the quantum walk (\ref{GeneralModel}) has arbitrarily small velocity. It is intuitively clear and not difficult to see, see Lemma \ref{lem:v<t}, that this happens if one chooses the transmission parameter $t$ (the off-diagonal entries) in $U_t$ arbitrarily small. What happens when varying the local field $\mathcal{D}$, in the periodic sense, and keeping the transmission parameter $t$ fixed is less obvious and it is our main interest here. 

We show in our main result Theorem~\ref{thm:main} that for any fixed transmission parameter $t\in(0,1/4)$, the velocity of the quantum walk $U_{t,\mathcal{D}}$ can be made arbitrarily small by choosing the  $n$-periodic field to be $\mathcal{D}={\mathcal D}_n = \diag(e^{2\pi i j/n}), j\in\Z$ and $n$ is sufficiently large.  In particular, we prove upper bounds for the quantum velocity  that are decaying \emph{exponentially} in the period $n$ of the local field $\mathcal{D}_n$ when the transmission parameter is in the range $t\in(0,1/4)$. That is, localization-like effects can be observed, even after a long time,  with periodic fields of large periods.
Moreover, our simulations of dynamics of this quantum walk in the periodic local field $\mathcal{D}_n$ led us to Conjecture \ref{conj}, that the velocity can be made arbitrarily small, in the presence of a periodic local field, for all fixed transmission parameter $t$ in the range $t\in(0,1)$.

In Section \ref{sec:model} we present the general setting of our class of quantum walks in the presence of local fields, and we highlight some basic properties and behaviors of the model with and without local fields. In Section \ref{sec:main-results} we show that the asymptotic quantum velocity cannot be more than $t$ in any local field, see Lemma \ref{lem:v<t}, then we define and discuss a class of periodic local fields in which the velocity of quantum walks can be made arbitrarily small, see  Theorem \ref{thm:main}.  The proof of Theorem \ref{thm:main} consists of two steps: We prove the result for a subsequence of average velocities, and then we use an interpolation argument with the linear  bound in Lemma \ref{lem:v<t}. The first step is the essential part and its proof spans the remaining sections as follows: The expected position of the walker after $N$ quantum walk steps is characterized by $(U_{t,\mathcal{D}})^N$,  this is our central object of interest to study the asymptotic velocity. Section \ref{sec:subsequence} explains how $(U_{t,\mathcal{D}})^N$ can be written as a sum of some banded operators of increasing bandwidths, written in terms of the so called  \emph{elementary symmetric polynomials} (see e.g., \cite{ESP-1,ESP-2}) of non-commuting operators. They are characterized by two parameters $0\leq m\leq \ell\leq n$, such that $\ell+m$ is the maximum possible bandwidth of the associated elementary symmetric polynomials of operators. Theorem \ref{thm:main-step} provides the crucial part in the proof and it shows that if the local field is taken to be the periodic field $\mathcal{D}=\mathcal{D}_n$ defined in (\ref{def:Dn}) below,  all elementary symmetric polynomials of operators that are associated with a bandwidth $<n$ collapse to constant multiples of the identity. This 
is the result of destructive interferences from the $n$-periodic field $\mathcal{D}_n$. Section \ref{sec:Induction} presents a proof of Theorem \ref{thm:main-step} by induction on $\ell$ and $m$.

\section{General setup}\label{sec:model}

We consider a class of quantum walks on $\Z$ of a walker with two internal degrees of freedom, e.g., a spin-1/2, in the presence of a local field. So the Hilbert space is
\begin{equation}\label{HilbertSpace}
\mathcal{H}=\C^2\otimes \ell^2(\Z) \cong  \ell^2(\Z;\C^2),
\end{equation}
where $\C^2=\Span\{|\uparrow\rangle,|\downarrow\rangle\}$ is called the the coin (or spin) space. Here we use the notations
\begin{equation}
|\uparrow\rangle:=\begin{pmatrix}1\\ 0\end{pmatrix} \text{ and }
|\downarrow\rangle:=\begin{pmatrix}0\\ 1\end{pmatrix},
\end{equation}
and we use $\{|n\rangle;\, n\in\Z\}$ to denote the canonical basis of  $\ell^2(\Z)$. We order the basis of $\mathcal{H}$ by defining 
\begin{equation}
\delta_{2j}:=|\uparrow\rangle\otimes|j\rangle \text{ and } \delta_{2j+1}:=|\downarrow\rangle\otimes|j\rangle, \text{ for all }j\in\Z. 
\end{equation}

The one step motion consists of the following consecutive operations. An update of the internal spin is applied via a unitary transformation known as the \emph{coin} operator.  Next, a shift on $\Z$ conditioned on the internal degree state of the walker is performed  using a unitary \emph{conditional shift} (or just \emph{shift}) operator.  Then, a \emph{local field} by means of a diagonal unitary transformation takes place. More precisely, we consider the family of quantum walks on $\mathcal{H}$, parametrized by $t\in[0,1]$, whose one step motion is given by the unitary
\begin{equation}\label{U}
U_{t,\mathcal{D}}=\mathcal{D}U_t, \text{ where }U_t=V_t W_t.
\end{equation}
Here the coin operator $W_t$ and the shift operator $V_t$ for a fixed $t\in[0,1]$ are given as 
\begin{equation}\label{def:WV}
W_t:=\bigoplus_{j\in\Z}
\begin{pmatrix} r & t\\ -t & r\end{pmatrix}_{2j,2j+1} \text{ and }
V_t:=\bigoplus_{j\in\Z}
\begin{pmatrix} r & t\\ -t & r\end{pmatrix}_{2j-1,2j},
\end{equation}
where $r=\sqrt{1-t^2}$. The subscripts $2j,2j+1$ in $W_t$ and $2j-1,2j$ in $V_t$ in (\ref{def:WV}) refer to the decompositions $\bigoplus_{j}\Span\{\delta_{2j},\delta_{2j+1}\}$ and $\bigoplus_{j}\Span\{\delta_{2j-1},\delta_{2j}\}$ of $\mathcal{H}$, respectively. That is, $W_t$ is a direct sum of identical $2\times 2$ orthogonal matrices starting at even indices, and $V_t$ is block diagonal with identical blocks starting at odd indices.

$\mathcal{D}$ in (\ref{U}) represents a generic local field, and it is a diagonal unitary operator   
\begin{equation}\label{def:D}
\mathcal{D} \delta_{j}=e^{i\theta_j}\delta_{j},
\end{equation}
 with phases $\theta_j\in\mathbb{T}=\mathbb{R}\setminus 2\pi\Z$, for $j\in\Z$. Our main result, Theorem \ref{thm:main} assumes a specific class of periodic local fields of the form (\ref{def:D}), it is defined in (\ref{def:Dn}) below. This model originates from the physics literature, see e.g., \cite{BB1988}, then it was investigated mathematically, see e.g., \cite{HJS9, JM10,HJ11, HJ14, J11}, mostly to answer questions about (dynamical) localization when $\theta_j$'s in (\ref{def:D}) are taken to be i.i.d random variables. 
 
Our model corresponds to the consecutive application of the unitary operator $U_{t,\mathcal{D}}$, in (\ref{U}).
Here we have the following remarks:
\begin{itemize}
\item The field-free one step quantum walk $U_t$ in $(\ref{U})$ has a five-diagonal structure. It is the CMV-type 2-periodic unitary operator. They represent unitary analogs of Jacobi matrices, see e.g., \cite{BHJ,CMV1,CMV-Simon}.
\begin{equation} \label{U-CMV}
U_t=V_t W_t=\begin{pmatrix}
 \ddots&&&&&&\\
          & r^2& rt & t^2&&&&&\\
           & -rt& r^2& rt&&&&&\\
           && -rt&r^2&rt&t^2&&\\
           && t^2&-rt&r^2&rt& &\\
           && &&-rt&r^2& rt&  \\
           && &&t^2&-rt&r^2 &\\
           &&&&&&&\ddots
\end{pmatrix}.
\end{equation}
\item The coin operator $W_t$ in (\ref{def:WV}) can be written as 
\begin{equation}
W_t=\begin{pmatrix} r & t\\ -t & r\end{pmatrix}\otimes \idty_\Z.
\end{equation}
It acts nontrivially only on the spin component of $\mathcal{H}$ as follows
\begin{eqnarray}\label{W-act}
W_t&:& |\uparrow\rangle\otimes|j\rangle \mapsto \left(r|\uparrow\rangle-t|\downarrow\rangle\right)\otimes |j\rangle \nonumber\\
&& |\downarrow\rangle\otimes|j\rangle \mapsto \left(r|\downarrow\rangle+t|\uparrow\rangle\right)\otimes |j\rangle,
\end{eqnarray}
for any $j\in\Z$. That is, measurements after the application of $W_t$ yield a flip in the inner degree with (quantum) probability $t^2$ with no shift.

The shift operator $V_t$ defined in (\ref{def:WV}) can be written as
\begin{equation}
V_t=r\idty_{\mathcal{H}}+t|\downarrow\rangle\langle\uparrow|\otimes T_{-1}
-t|\uparrow\rangle\langle\downarrow|\otimes T_1 \text{ where }T_{\pm 1}:=\sum_{j\in\Z} |j\pm 1\rangle\langle j|.
\end{equation}
Note that $T_{\pm 1}$ is the shift right/left operator on $\ell^2(\Z)$, i.e., $T_{\pm1}|j\rangle=|j\pm1\rangle$. $V_t$ shifts (extends) the walker left or right depending on the walker's internal degree. In particular
\begin{eqnarray}\label{V-act}
V_t&:& |\uparrow\rangle \otimes | j\rangle \mapsto r|\uparrow\rangle \otimes |j\rangle+ t|\downarrow\rangle \otimes |j - 1\rangle \nonumber\\
&& |\downarrow\rangle \otimes | j\rangle \mapsto r|\downarrow\rangle \otimes |j\rangle- t|\uparrow\rangle \otimes |j + 1\rangle
\end{eqnarray}
for any $j\in\Z$. i.e., $V_t$ conditionally shifts the walker one step with probability $t^2$ (and it flips the internal degree). Thus, (\ref{W-act}) and (\ref{V-act}) give, for example
\begin{equation}\label{U-act}
U_t\ :\ |\uparrow\rangle\otimes |j\rangle\  \mapsto\ rt |\downarrow\rangle\otimes |j-1\rangle+\left(r^2 |\uparrow\rangle-rt|\downarrow\rangle\right)\otimes | j \rangle+t^2 |\uparrow\rangle\otimes |j+1\rangle.
\end{equation}

\item The boundary values for $t\in[0,1]$ yield the two extreme cases: $t=0$ corresponds to the trivial no-transport case $U_{t=0}=\idty_{\mathcal{H}}$, and $t=1$ gives the naive ballistic transport with a constant velocity of one unit on $\Z$ per a quantum walk step.
\begin{eqnarray}\label{U:t=1}
U_{t=1}&:& |\uparrow\rangle \otimes |j\rangle\mapsto |\uparrow\rangle\otimes |j+1\rangle \nonumber\\
&& |\downarrow\rangle \otimes |j\rangle\mapsto |\downarrow\rangle\otimes |j-1\rangle.
\end{eqnarray}
As is well known, when the coin operator is fixed along the quantum walk, the corresponding transport is typically ballistic  due to translation invariance, see e.g.,  \cite{QWLectures9}, \cite[Thm 4]{QWVelocity}, or \cite[Thm 9.1]{QWVelocity-Damanic}. This is the case for our model in the absence of local fields, in fact, we expect a large transport when  $t\in(0,1)$, and we show that the quantum walk has asymptotic velocity that is at most $t$, see Lemma \ref{lem:v<t}. For this, we refer to $t$ as the \emph{transmission parameter}. 

Figure \ref{fig:QW-different-t} shows the probability distribution of the field-free quantum walk generated by $U_t$ applied iteratively to $|\uparrow\rangle\otimes|0\rangle$ a $100$ times (quantum walk steps) for three values of the transmission parameter $t=0.8$, $t= 0.5$, and $t=0.2$. Similar behavior is observed when a higher number of quantum walk steps are performed. Note that $U_t^N (|\uparrow\rangle\otimes|0\rangle)$ is supported on $[-2N,2N]$ and hence the walker cannot escape the window $[-N,N]$ on $\Z$. Figure \ref{fig:QW-different-t} and similar numerics with higher number of quantum walk steps are showing that the walker is essentially no more than $Nt$ steps from the starting point. It is also observed that the highest probability corresponds to the position of about $Nt$. Hence, the velocity seems to be of order $t$, this is partially demonstrated in Lemma \ref{lem:v<t} below.
Due to translation invariance, the same probability distributions results if the staring point is chosen to be $|\uparrow\rangle\otimes|j\rangle$ where $j\in\Z$. 


\end{itemize}

\begin{figure}
\centering
\begin{subfigure}{.33\textwidth}
  \centering
     \includegraphics[width=1.1\linewidth]{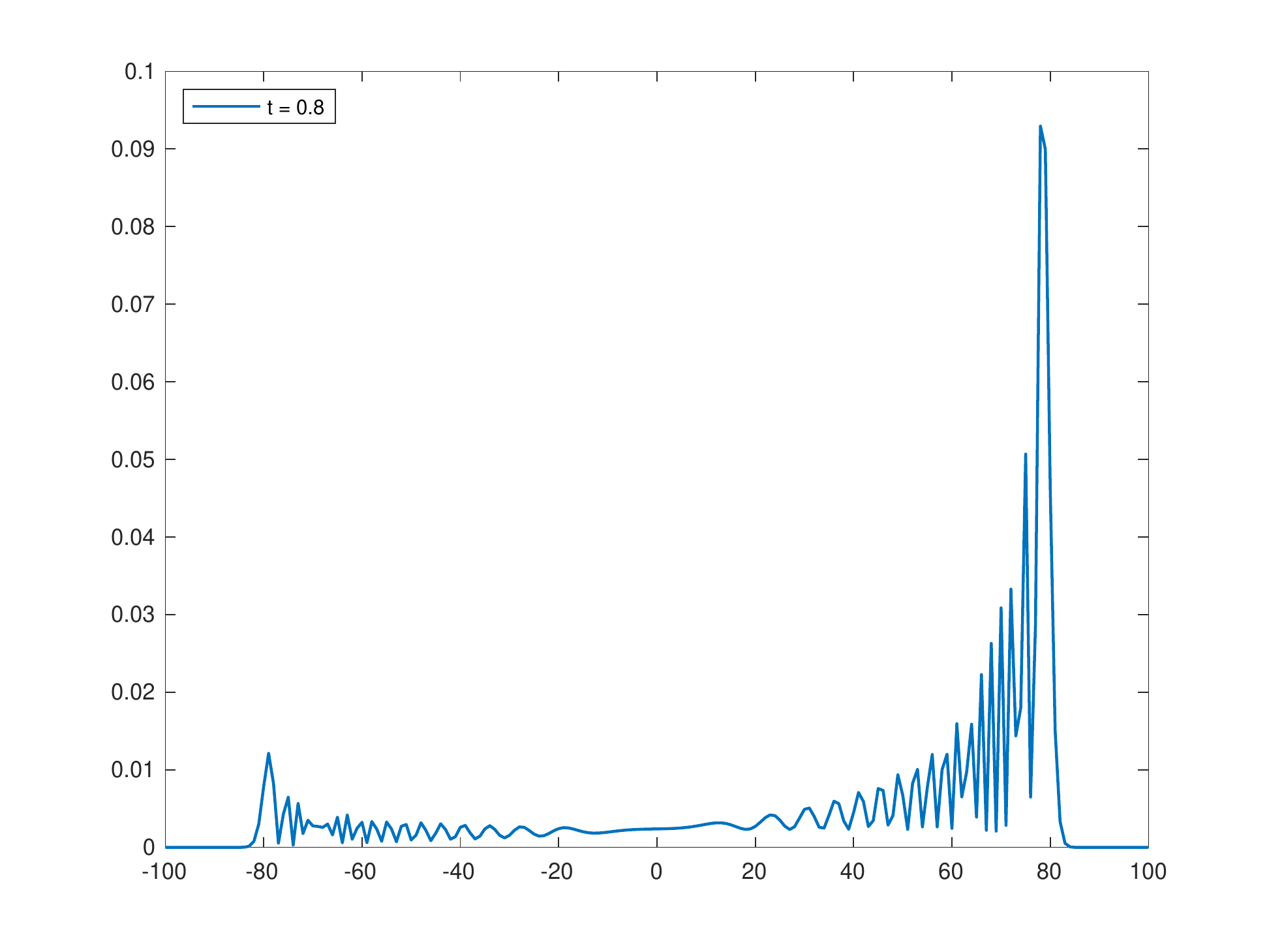}
   \caption{$t=0.8$}
  \label{fig:sub1}
\end{subfigure}%
\begin{subfigure}{.33\textwidth}
  \centering
 \includegraphics[width=1.1\linewidth]{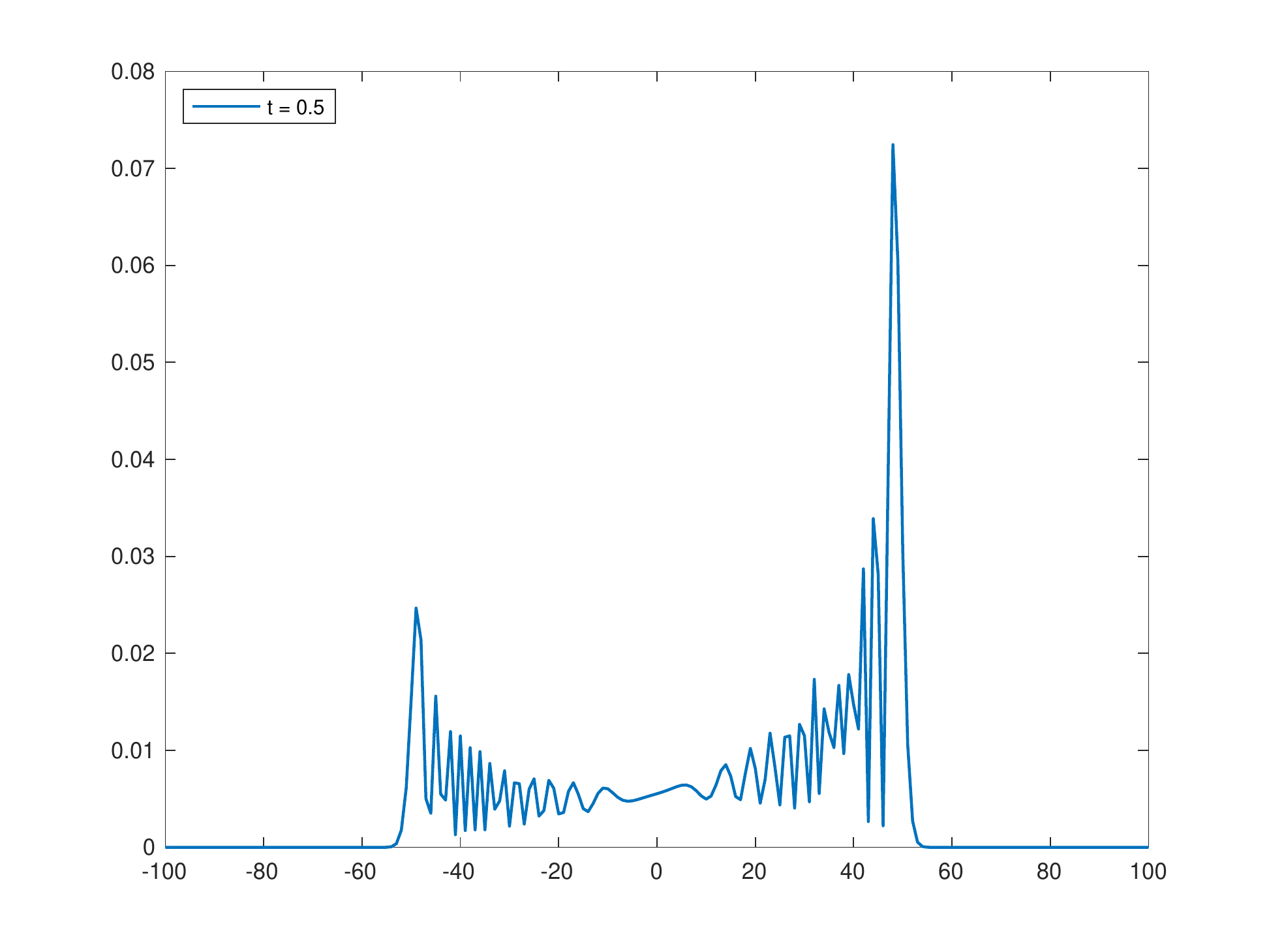}
  \caption{$t=0.5$}
  \label{fig:sub2}
\end{subfigure}
\begin{subfigure}{.33\textwidth}
  \centering
 \includegraphics[width=1.1\linewidth]{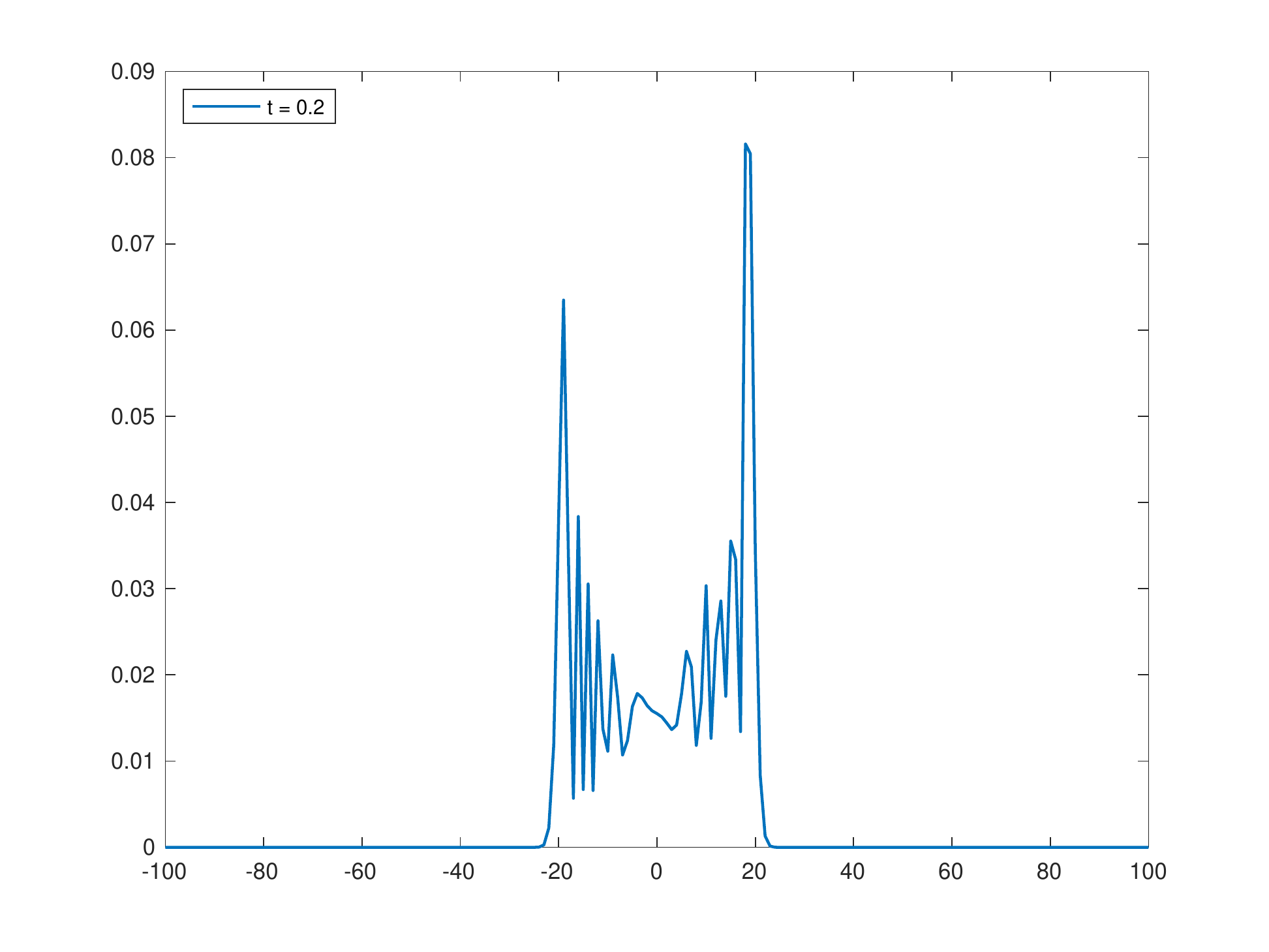}
  \caption{$t=0.2$}
\end{subfigure}
\caption{The probability distribution of the quantum walk generated by $U_t$ in (\ref{U}) after 100 steps starting from $|\uparrow\rangle\otimes|0\rangle$ with different values for the transmission parameter $t$.}
 \label{fig:QW-different-t}
\end{figure}

\section{Main Results}\label{sec:main-results}

Given the quantum walks with parameter $t\in[0,1]$ described in Section \ref{sec:model}, and suppose that the system is initially at the unit vector
\begin{equation}\label{def:Psi}
\Psi_0:=\varphi \otimes |0\rangle \text{ where } \varphi\in \C^2 \text{ and }\|\varphi\|=1.
\end{equation}
We are interested in studying the asymptotic transport of $\Psi_0$ generated by the quantum walk  via 
\begin{equation}
U_{t,\mathcal{D}}=\mathcal{D}V_t W_t
\end{equation}
where $W_t$ and $V_t$ are the coin and the shift operators given in (\ref{def:WV}), (\ref{W-act}), and (\ref{V-act}). While we initially take $\mathcal{D}$ as a general local field (diagonal and unitary), our main result assumes a specific class of periodic local fields, see (\ref{def:Dn}) below.

Towards the asymptotic velocity, we will be particularly interested in the discrete-time evolution of the position operator
\begin{equation}
\tau^{U_{t,\mathcal{D}}}_N(X):=(U_{t,\mathcal{D}}^*)^N X U_{t,\mathcal{D}}^N,\quad N\in\Z,
\end{equation}
where $X$ is the unbounded position operator on $\mathcal{H}
$
\begin{equation}\label{def:X}
X:=\idty_2\otimes \sum_{j\in\Z} |j|\ |j\rangle\langle j| \text{ on } \mathcal{H}.
\end{equation}
$X$ captures to position of the walker regardless of its internal degree. It is easy to see that the domain of $\tau^{U_{t,\mathcal{D}}}_N$ (for any $N$) includes all states  $\phi\otimes\psi\in\mathcal{H}$ where $\phi\in\C^2$ and $\psi$ is finitely supported.

Note that the expected position of $\Psi_0$ after $N$ quantum walk steps determined by $U_{t,\mathcal{D}}$ is
\begin{equation}
\left\langle X\right\rangle_{U_{t,\mathcal{D}}^N  \Psi_0}:=\langle U_{t,\mathcal{D}}^N  \Psi_0,  X U_{t,\mathcal{D}}^N\Psi_0\rangle=\langle   \Psi_0,  \tau^{U_{t,\mathcal{D}}}_N(X)\Psi_0\rangle.
\end{equation}
Hence, the \emph{asymptotic quantum velocity} (or just \emph{velocity}) is defined to be
\begin{equation}\label{def:v}
v_{U_{t,\mathcal{D}}}= \limsup_{N\rightarrow\infty}\frac1N \left\langle X\right\rangle_{U_{t,\mathcal{D}}^N  \Psi_0}.
\end{equation}
Observe that we have an obvious a priori bound for $v_{U_{t,\mathcal{D}}}$ that results from the fact that the support of $U_{t,\mathcal{D}}^N \Psi_0 \in \mathcal{H}$ is  compact. More precisely, (\ref{U-act}) shows that
\begin{equation}\label{def:X-NN}
U_{t,\mathcal{D}}^N \Psi_0=(\idty_2\otimes \idty_{[-N,N]})U_{t,\mathcal{D}}^N \Psi_0,
\end{equation}
where $ \idty_{[-N,N]}$ is the restriction operator to $[-N,N]\cap\Z$ on $\ell^2(\Z)$.
So, by defining $X_{[-N,N]}:=(\idty_2\otimes \idty_{[-N,N]}) X$ and noting that $X_{[-N,N]}/N\leq \idty_\mathcal{H}$, we find that
\begin{equation}\label{eq:v<1}
\frac1N\left\langle X\right\rangle_{U_{t,\mathcal{D}}^N  \Psi_0}=\frac1N\left\langle X_{[-N,N]}\right\rangle_{U_{t,\mathcal{D}}^N  \Psi_0}\leq  \left\langle \idty_\mathcal{H}\right\rangle_{U_{t,\mathcal{D}}^N  \Psi_0}=1,
\end{equation}
meaning  that  $v_{U_{t,\mathcal{D}}}$ cannot be more than one unit per a quantum step for any $t\in[0,1]$ and any local field $\mathcal{D}$. In fact, one can see from (\ref{U:t=1}) that $v_{t=1,\mathcal{D}=\idty}=1$ (this is explained more in Lemma \ref{lem:v<t} below). $t=0$ corresponds to the trivial localization when $U_{0,\mathcal{D}}=\mathcal{D}$ and hence
$v_{U_{0,\mathcal{D}}}=0$ for any local field $\mathcal{D}$.

\subsection{Field-free velocity bounds}

Regardless of the effects of local fields, it is intuitively clear that the velocity depends on the transmission parameter (it ``generates'' the off-diagonal elements of $U_t$). Recall that the simple argument (\ref{eq:v<1}) gives an a priori upper bound of one for the velocity $v_{U_{t,\mathcal{D}}}$. With a little more effort we can show a $t$-dependent bound for $v_{U_{t,\mathcal{D}}}$.
 The following Lemma gives such an upper bound on the velocity of the quantum walk that is independent of the local field $\mathcal{D}$. In particular, Lemma \ref{lem:v<t} shows that the quantum velocity decays at least linearly in the transmission parameter $t$ in the presence of any local field. 

\begin{lem}\label{lem:v<t}
For the quantum walk generated by the unitary $U_{t,\mathcal{D}}=\mathcal{D}V_t W_t$ we have
\begin{equation}\label{eq:lem:v<t}
\frac1N \left\langle X\right\rangle_{U_{t,\mathcal{D}}^N  \Psi_0}\leq \frac1N\|\tau^{U_{t,\mathcal{D}}}_N(X)-X\|\leq t
\end{equation}
for any $t\in[0,1]$, $N\in\Z^+$, and any local field $\mathcal{D}$. Moreover, the two inequalities in  (\ref{eq:lem:v<t}) become equalities in the extreme cases $t=0$ and $t=1$.
\end{lem}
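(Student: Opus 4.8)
The plan is to bound the operator norm $\|\tau^{U_{t,\mathcal{D}}}_N(X)-X\|$ by a telescoping argument, so that the field $\mathcal{D}$ drops out and only the one-step displacement of $X$ under the field-free walk $U_t$ matters. First I would write, using $U_{t,\mathcal{D}}=\mathcal{D}U_t$ and the fact that $\mathcal{D}$ is diagonal (hence commutes with $X$),
\begin{equation}
\tau^{U_{t,\mathcal{D}}}_N(X)-X = \sum_{k=0}^{N-1}\Bigl((U_{t,\mathcal{D}}^*)^{k+1}X U_{t,\mathcal{D}}^{k+1}-(U_{t,\mathcal{D}}^*)^{k}X U_{t,\mathcal{D}}^{k}\Bigr)
=\sum_{k=0}^{N-1}(U_{t,\mathcal{D}}^*)^{k}\bigl(U_{t,\mathcal{D}}^*X U_{t,\mathcal{D}}-X\bigr)U_{t,\mathcal{D}}^{k}.
\end{equation}
Since the $U_{t,\mathcal{D}}$ are unitary, each summand has norm $\|U_{t,\mathcal{D}}^*X U_{t,\mathcal{D}}-X\|=\|U_t^*\mathcal{D}^*X\mathcal{D}U_t-X\|=\|U_t^*X U_t-X\|$, so $\|\tau^{U_{t,\mathcal{D}}}_N(X)-X\|\le N\,\|U_t^*X U_t-X\|$, and it remains to show the single-step bound $\|U_t^*X U_t-X\|\le t$.

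For that bound I would compute $U_t^*X U_t-X$ explicitly. Writing $X=\idty_2\otimes\sum_j|j|\,|j\rangle\langle j|$, the operator $U_t^*X U_t-X$ only reshuffles the weights $|j|$ according to the five-diagonal hopping structure of $U_t$ in \eqref{U-CMV}; since $U_t$ moves amplitude by at most one site (see \eqref{U-act}), the difference $|\,|j'|-|j|\,|\le 1$ whenever the matrix element $(U_t)_{j'j}\neq 0$, and the coefficients carrying those "off-diagonal in position" transitions are exactly the ones proportional to $t$ (namely $rt$ and $t^2$), while the purely $r^2$ diagonal-in-position terms contribute zero to the difference. Concretely I would check that $U_t^*X U_t-X$ is again a banded operator whose nonzero entries are $O(t)$, and bound its norm by, e.g., the Schur test (sum of absolute values of entries in each row/column), getting exactly $t$ after using $r^2+t^2=1$. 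Alternatively, one can use the commutator identity $U_t^*XU_t-X = U_t^*[X,U_t]$ and estimate $\|[X,U_t]\|$ directly from \eqref{U-act}; the key point is that $[X,U_t]$ is supported on the hopping terms and each such term is weighted by $t$ times a factor of modulus $\le r\le 1$.

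The first inequality in \eqref{eq:lem:v<t} is immediate: by \eqref{def:X-NN} and $X_{[-N,N]}/N\le\idty_\mathcal{H}$ the left-hand side is a normalized expectation of $\tau^{U_{t,\mathcal{D}}}_N(X)-X$ in the state $\Psi_0$ (note $\langle X\rangle_{\Psi_0}=0$), hence bounded by $\frac1N\|\tau^{U_{t,\mathcal{D}}}_N(X)-X\|$. For the equality cases: $t=0$ gives $U_0=\idty_{\mathcal H}$ so every term vanishes; $t=1$ gives the pure shift \eqref{U:t=1}, for which $U_1^*XU_1-X$ acts as $|j\pm1|-|j|=\pm1$ away from the origin, and applied to $\Psi_0=\varphi\otimes|0\rangle$ one gets $\langle X\rangle_{U_1^N\Psi_0}=N$, so both inequalities are saturated. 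I expect the single-step norm estimate $\|U_t^*XU_t-X\|\le t$ to be the only real content — the subtlety is handling the unboundedness of $X$ (the difference must be shown bounded, which it is precisely because the hopping range is finite), and getting the constant to be exactly $t$ rather than a multiple of it, which forces a careful bookkeeping of the $rt$ versus $t^2$ entries together with $r^2+t^2=1$.
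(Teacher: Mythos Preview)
Your telescoping reduction to the single-step quantity $\|U_{t,\mathcal D}^*XU_{t,\mathcal D}-X\|=\|[X,U_t]\|$ is exactly the paper's argument (the paper phrases it as the Leibniz expansion of $[X,U_{t,\mathcal D}^N]$, which is the same identity), and your treatment of the first inequality and of the equality cases $t=0,1$ matches as well.

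The one genuine gap is the single-step norm. Your Schur-test claim does not give the sharp constant: the absolute row and column sums of $[X,U_t]=rt[X,B]+t^2[X,C]$ are $rt+t^2=t(r+t)$, and $r^2+t^2=1$ does \emph{not} force $r+t=1$ (indeed $r+t=\sqrt{2}$ at $t=1/\sqrt 2$), so the Schur bound is only $t(r+t)\le t\sqrt 2$. The paper gets the exact value by writing $U_t=r^2\idty_{\mathcal H}+rtB+t^2C$, computing $[X,B]$ and $[X,C]$ explicitly, and then verifying directly that
\[
[X,\,rtB+t^2C]\,[X,\,rtB+t^2C]^{*}=(r^2t^2+t^4)\,\idty_{\mathcal H}=t^2\,\idty_{\mathcal H},
\]
which yields $\|[X,U_t]\|=t$ on the nose. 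Your ``alternative'' of estimating $\|[X,U_t]\|$ directly would have to arrive at precisely this computation; the heuristic ``each hopping term is weighted by $t$ times a factor of modulus $\le r$'' is not enough, since triangle-inequality bookkeeping again produces $t(r+t)$ rather than $t$.
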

The proof is simple and it is presented in Appendix  \ref{App:proof:v<t}. We note that the first inequality in (\ref{eq:lem:v<t}) is a straightforward consequence of the fact that $X\Psi_0=0$,
\begin{equation}
\left\langle X\right\rangle_{U_{t,\mathcal{D}}^N\Psi_0} = \left\langle\tau^{U_{t,\mathcal{D}}}_N(X)-X\right\rangle_{\Psi_0}
\leq \|\tau^{U_{t,\mathcal{D}}}_N(X)-X\|.
\end{equation}
It is worth mentioning here that $\|\tau^{U_{t,\mathcal{D}}}_N(X)-X\|$ is our central object of interest in bounding the velocity in this work.

Lemma \ref{lem:v<t} is showing that the velocity of the quantum walk described by $U_{t,\mathcal{D}}$ cannot be more than $t$ in the presence of any local field. i.e.,
\begin{equation}\label{eq:v<t}
\sup_{\mathcal{D}} v_{U_{t,\mathcal{D}}}\leq t, \
\end{equation}
where the supremum is taken over all local fields $\mathcal{D}$ of the form given in (\ref{def:D}). We remark here that the linear bound in $t$ in (\ref{eq:lem:v<t}) applies trivially to higher moments: For any $p\in\Z^+$,
\begin{equation}\label{eq:highmoments}
\frac1{N^p} \left\langle X^p\right\rangle_{U_{t,\mathcal{D}}^N  \Psi_0}=\frac1{N^p} \left\langle X_{[-N,N]}^p\right\rangle_{U_{t,\mathcal{D}}^N  \Psi_0}\leq \frac1N\left\langle X\right\rangle_{U_{t,\mathcal{D}}^N\Psi_0}\leq t
\end{equation}
where $X_{[-N,N]}$ is defined in (\ref{def:X-NN}) and we used the fact that $X^{p-1}_{[-N,N]}/N^{p-1}\leq \idty_{\mathcal{H}}$.

Lemma \ref{lem:v<t} applies to any local field $\mathcal{D}$. Our main result, Theorem \ref{thm:main}, shows that when the local field is chosen to be of a certain periodic form, the quantum velocity can be made arbitrarily small for any fixed $t$ in a certain range. In the following we define the corresponding local field.

\subsection{Velocity bounds in the presence of a periodic local field}

For a fixed $n\in\Z^+$, we require that the local field (diagonal and unitary) $\mathcal{D}_n$ has the following properties:
\begin{itemize}
\item[(a)] Periodic of (fundamental) period $n$.
\item[(b)] $\exists c\in\C\setminus\{1\}$; $\forall j\in\Z$ we have $\displaystyle \langle\delta_{j+1},\mathcal{D}_n\delta_{j+1}\rangle=c\langle \delta_{j},\mathcal{D}_n \delta_{j}\rangle$.
\end{itemize}
In fact, properties (a) and (b) above mean that the most general form of $\mathcal{D}_n$ is given as
\begin{equation}\label{def:Dn}
\mathcal{D}_n=\beta \left(\diag\{1,\alpha,\alpha^2,\ldots,\alpha^{n-1}\}\right)^{\oplus\Z},
\end{equation}
where $\beta,\ \alpha\in\C$ with $|\beta|=|\alpha|=1$ and $\alpha\neq 1$ is an $n$-th root of the unity (not necessarily the standard primitive $n$-th root of unity). Hence $1, \alpha,\alpha^2, \ldots, \alpha^{n-1}$ are distinct, that is,
\begin{equation}\label{def:alpha}
\alpha=e^{2\pi k i/n}\text{ for some } k\in\Z^+ \text{ such that } \gcd(k, n)=1.
\end{equation}
 $\beta$ in (\ref{def:Dn}) is just a trivial phase, and without loss of generality we assume that $\beta=1$.  The direct sum in (\ref{def:Dn}) corresponds to a block diagonal matrix representation of $\mathcal{D}_n$ where each block is the diagonal matrix $\diag\{1,\alpha,\ldots,\alpha^{n-1}\}$ 
 with respect to the basis $\{\delta_{j+k},\, k=0,1,2,\ldots,n-1\}$ for any fixed starting index $j\in \Z$. Different such $j$'s produce just trivial phases. Thus we will assume that $\mathcal{D}_n$ is given as
 \begin{equation}
 \mathcal{D}_n\delta_{j}= \alpha^j \delta_{j}, \ j\in\Z.
 \end{equation}

Note that
$
\mathcal{D}_n^*=\overline{\mathcal{D}_n}=\mathcal{D}_n^{-1}$, $\mathcal{D}_n^n=\idty_{\mathcal{H}}$, and $\mathcal{D}_n^j\neq \mathcal{D}_n^k$ for all $j\neq k$ in $\{1,\ldots,n\}$. Moreover, observe that $\mathcal{D}_n$ acts nontrivially on both components of the tensor product: The spin and the position of the walker, i.e.,
\begin{equation}\label{def:Dn-2}
\mathcal{D}_n=\begin{pmatrix}1 & 0\\0& \alpha\end{pmatrix}\otimes \sum_{j\in\Z} \alpha^{2j}|j\rangle\langle j|=
\begin{pmatrix}1 & 0\\0& \alpha\end{pmatrix}\otimes 
\left(\diag\{1,\alpha^2,\ldots, \alpha^{2(n-1)}\}\right)^{\oplus\Z}.
\end{equation}
In particular, for any $j\in\Z$
\begin{eqnarray}
\mathcal{D}_n&:& |\uparrow\rangle\otimes |j\rangle\,  \mapsto\,  \alpha^{2j}|\uparrow\rangle\otimes |j\rangle\nonumber\\
&& |\downarrow\rangle\otimes |j\rangle\,  \mapsto\,  \alpha^{2j+1}|\downarrow\rangle\otimes |j\rangle.
\end{eqnarray}
Our main result considers the quantum walk generated by unitary 
\begin{equation}\label{def:U-Dn}
U_{t,n}:=U_{t,\mathcal{D}_n}=\mathcal{D}_n V_t W_t
\end{equation}
 with the $n$-periodic local field $\mathcal{D}_n$ defined in (\ref{def:Dn}), the coin unitary $W_t$, and the shift operator $V_t$ given in (\ref{def:WV}) or by their actions on the basis of $\mathcal{H}$ in (\ref{W-act}) and (\ref{V-act}). 
 \begin{rem}
 Note that the application of $\mathcal{D}_n$ to the product of the coin and shift operators $W_t$ and $V_t$ can be regarded as a quantum walk with an identical shift operator at every position in $\Z$ and a coin operator that depends on the position of the walker. More precisely, it is direct to check that
\begin{equation}
U_{t,n}=\mathcal{D}_n V_t W_t=V_t^{(\alpha)} W_t^{(\alpha)}
\end{equation}
where 
\begin{eqnarray}
V_t^{(\alpha)}&=&r\idty_{\mathcal{H}}+\alpha^{-1} t|\downarrow\rangle\langle\uparrow|\otimes T_{-1}
-\alpha t|\uparrow\rangle\langle\downarrow|\otimes T_1 \nonumber\\
W_t^{(\alpha)}&=&\begin{pmatrix} r & t\\ -\alpha t & \alpha r\end{pmatrix}\otimes \sum_{j\in\Z}\alpha^{2j}|j\rangle\langle j|.
\end{eqnarray}
That means that $W_t^{(\alpha)}$ is a coin operator that assigns a phase to the walker depending on the its position in $\Z$, for example
\begin{equation}
W_t^{(\alpha)}: |\uparrow\rangle\otimes |j\rangle\, \mapsto\, \alpha^{2j}\left(r|\uparrow\rangle-\alpha t|\downarrow\rangle\right)\otimes |j\rangle.
\end{equation}
We think that the dependency of $W_t^{(\alpha)}$ on the position allows for the reduction of velocity result in Theorem \ref{thm:main}. It is noteworthy here to mention that some non identical random coin operators are considered in \cite{JM10, Kono09}.  In \cite{JM10}, the coin operators are chosen to be a family of i.i.d. random operators with some general requirements, and the randomness is determined by phases carried by all matrix elements of the coin operator. Those non-identical random coin operators lead to dynamical localization everywhere in the spectrum, for almost all realizations of coin operators. By contrast, \cite{Kono09} shows that a certain choice of random coin operators does not lead to localization. The random coin operators in \cite{Kono09} are given by a unitary matrix with random phases only on the diagonal. The lack of localization in this disordered case is explained by the existence of gauge transformations that fully eliminates the randomness in the model.

Note here that $W_t^{(\alpha)}$ is periodic on $\Z$ in the since that 
\begin{equation}\label{eq:Wt-period}
 \left\langle s,j+n\left |W_t^{(\alpha)}\right| s, j+n \right\rangle=  \left\langle s,j\left |W_t^{(\alpha)}\right| s, j \right\rangle, \text{ for all }j\in\Z, \text{ and } s\in\{\uparrow, \downarrow\}.
\end{equation}
Here $|s,j\rangle :=|s\rangle\otimes |j\rangle$. Only when the periodicity length $n$ goes to infinity we guarantee by Theorem \ref{thm:main} below  a zero asymptotic velocity. It seems that this periodicity  (\ref{eq:Wt-period}) of  the coin operator on $\Z$ prevents the quantum walk to have zero asymptotic velocity for any finite period $n$, see  Conjecture \ref{conj} below.
 \end{rem}

We denote the discrete time evolution of the quantum walk with periodic local field $\mathcal{D}_n$ by 
\begin{equation}
\tau_N^{U_{t,n}}(X):=(U_{t,n}^*)^N X U_{t,n}^N, \quad N\in \N
\end{equation}
and the associated quantum walk velocity (\ref{def:v})  by $v_{U_{t,n}}$,
\begin{equation}
v_{U_{t,n}}:=v_{U_{t,\mathcal{D}_n}}= \limsup_{N\rightarrow\infty}\frac1N \left\langle X\right\rangle_{U_{t,\mathcal{D}_n}^N  \Psi_0},
\end{equation}
where $\Psi_0$ is defined in (\ref{def:Psi}).
The following theorem shows that the periodic local field $\mathcal{D}_n$ slows down the quantum velocity \emph{exponentially} in the period $n$ when the transmission parameter $t$ is sufficiently small.
\begin{thm}\label{thm:main}
For any fixed $n\in\N$, and the corresponding local $n$-periodic field $\mathcal{D}_n$ defined in (\ref{def:Dn}), the velocity of the quantum walk generated by $U_{t,n}=\mathcal{D}_n V_t W_t$, given in (\ref{def:WV}), satisfies the bound
\begin{equation}\label{eq:main-result}
v_{U_{t,n}}\leq\limsup_{N\rightarrow\infty}\frac{1}{N}\|\tau_N^{U_{t,n}}(X)-X\|\leq (4t)^n,
\end{equation}
where  $t\in[0,1]$ is the transmission parameter.

That is, for any $v_0>0$ there exists an $n$-periodic field $\mathcal{D}_n$ such that for any $t\in(0,1/4)$ the quantum velocity $v_{U_{t,n}}$ satisfies the bound $v_{U_{t,n}}\leq v_0$.
\end{thm}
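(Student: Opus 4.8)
The plan is to prove the operator‑norm bound $\limsup_{N\to\infty}\frac1N\norm{\tau_N^{U_{t,n}}(X)-X}\le(4t)^n$; the first inequality in \eqref{eq:main-result} is then immediate from the first inequality of Lemma~\ref{lem:v<t} and $X\Psi_0=0$, and the final assertion of the theorem follows since $(4t)^n\to0$. Since $U_{t,n}$ is unitary, on finitely supported vectors one has $\tau_N^{U_{t,n}}(X)-X=(U_{t,n}^*)^N[X,U_{t,n}^N]$, so $\norm{\tau_N^{U_{t,n}}(X)-X}=\norm{[X,U_{t,n}^N]}$; this commutator is what we estimate. The first step is to reduce to the subsequence $N=mn$. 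For $N=mn+s$ with $0\le s<n$, the Leibniz rule $[X,U_{t,n}^N]=[X,U_{t,n}^s]U_{t,n}^{mn}+U_{t,n}^s[X,U_{t,n}^{mn}]$ and unitarity give $\norm{[X,U_{t,n}^N]}\le\norm{[X,U_{t,n}^s]}+\norm{[X,U_{t,n}^{mn}]}$, and Lemma~\ref{lem:v<t} applied to the local field $\mathcal{D}_n$ yields $\norm{[X,U_{t,n}^s]}=\norm{\tau_s^{U_{t,n}}(X)-X}\le st\le nt$. Dividing by $N\ge mn$ and letting $N\to\infty$ (hence $m\to\infty$) kills the term $nt/N$, leaving
\[
\limsup_{N\to\infty}\frac1N\norm{\tau_N^{U_{t,n}}(X)-X}\ \le\ \limsup_{m\to\infty}\frac1{mn}\,\norm{[X,(U_{t,n})^{mn}]}.
\]

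Next I would exploit the banded structure. Because the main diagonal of the CMV matrix \eqref{U-CMV} is the constant $r^2$, write $U_t=r^2\idty_{\mathcal H}+B$ with $B=\sum_{0<\abs p\le2}U_t[p]$, where $U_t[p]$ is the $p$-th off‑diagonal band of $U_t$, a weighted index shift by $p$ with $\norm{U_t[\pm1]}=rt$ and $\norm{U_t[\pm2]}=t^2$. A direct computation using \eqref{def:Dn} gives $\mathcal{D}_nU_t[p]\mathcal{D}_n^{-1}=\alpha^{p}U_t[p]$; moving all factors of $\mathcal{D}_n$ to one side and using $\mathcal{D}_n^{\,mn}=\idty_{\mathcal H}$ yields $(U_{t,n})^{mn}=\prod_{i=1}^{mn}\!\bigl(r^2\idty_{\mathcal H}+B^{(i)}\bigr)$ with $B^{(i)}:=\sum_{0<\abs p\le2}\alpha^{ip}U_t[p]$, which depends only on $i\bmod n$. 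Expanding this product collects $(U_{t,n})^{mn}=\sum_{k}r^{2(mn-k)}E_k$, where $E_k=\sum_{i_1<\dots<i_k}\sum_{p_1,\dots,p_k\ne0}\alpha^{\,i_1p_1+\dots+i_kp_k}\,U_t[p_1]\cdots U_t[p_k]$ is the elementary symmetric polynomial of the conjugated hopping operators. Decomposing each $E_k$ into its own off‑diagonal bands, Theorem~\ref{thm:main-step} is precisely the assertion that, for the periodic field $\mathcal{D}_n$, every band of bandwidth strictly less than $n$ collapses to a scalar multiple of $\idty_{\mathcal H}$, because the $n$-th root of unity phases $\alpha^{\,i_1p_1+\dots+i_kp_k}$ sum to zero on each such band. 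Consequently $[X,(U_{t,n})^{mn}]$ only picks up the bands of bandwidth $\ge n$.

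To conclude, I would estimate those surviving bands. For a weighted shift $A$ of bandwidth $b$ one has $\norm{[X,A]}\le b\norm A$. A band of bandwidth $b\ge n$ must be built from at least $\lceil b/2\rceil$ hopping factors with $\sum\abs{p_\ell}\ge b$, so each contributing word has norm at most $t^{\sum\abs{p_\ell}}\le t^b$; after the destructive interference of the previous step has acted, the number of surviving such bands over the $mn$ factors is of order $m$ (essentially one contribution per period), and a careful count of the remaining words gives a combinatorial factor bounded by $4^b$ at bandwidth $b$, so their combined norm is $O\!\bigl(m\,(4t)^b\bigr)$. Summing the geometric series $\sum_{b\ge n}b\,(4t)^b$, convergent since $4t<1$, and multiplying by $m$ gives $\norm{[X,(U_{t,n})^{mn}]}\le mn\,(4t)^n$ in the limit; dividing by $mn$ finishes the proof. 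For $t\ge1/4$ there is nothing to prove, as then $(4t)^n\ge1$ is dominated by the a priori bound $v_{U_{t,n}}\le1$ of \eqref{eq:v<1}.

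The main obstacle is Theorem~\ref{thm:main-step}: the claim that all bands of bandwidth $<n$ in the elementary symmetric polynomials $E_k$ collapse to scalars. Since the operators $U_t[p]$ do not commute, organizing the cancellation of the phases $\alpha^{\,i_1p_1+\dots+i_kp_k}$ over all ordered words and all ways of splitting $k$ hopping positions into a prescribed net displacement is delicate; I expect it to proceed by induction on the two parameters controlling the bandwidth (as in Section~\ref{sec:Induction}), peeling off the outermost factor at each stage and invoking $\gcd(k,n)=1$ so that $1,\alpha,\dots,\alpha^{n-1}$ are distinct. A secondary delicate point is the quantitative bookkeeping of the last step that pins the constant to exactly $4$ rather than some larger $C$; the robust and essential content there is the exponential factor $t^n$.
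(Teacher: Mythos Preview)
Your overall architecture is close to the paper's, but there is a genuine gap: you skip the key telescoping reduction and, as a result, misapply Theorem~\ref{thm:main-step}.

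The paper does \emph{not} expand $(U_{t,n})^{mn}$ into elementary symmetric polynomials over $mn$ factors. Instead, it first observes (equation~\eqref{eq:norm}) that for $N=nk$,
\[
\frac{1}{nk}\bigl\|[X,(U_{t,n})^{nk}]\bigr\|\ \le\ \frac{1}{n}\bigl\|[X,(U_{t,n})^{n}]\bigr\|,
\]
via the Leibniz telescoping $[X,(U_{t,n}^n)^k]=\sum_j (U_{t,n}^n)^j[X,U_{t,n}^n](U_{t,n}^n)^{k-1-j}$. This reduces the entire problem to analyzing the single operator $(U_{t,n})^n$, which is a product of exactly $n$ factors $U_t^{(0)}\cdots U_t^{(n-1)}$. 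Theorem~\ref{thm:main-step} is stated for, and only proved for, the symmetric polynomials $\mathcal{S}_{[0,n-1]}^{\ell,m}$ arising from that $n$-fold product with $n$ \emph{distinct} conjugates; its proof (Lemma~\ref{lem:rec-formulas}, Corollary~\ref{cor:s-D}, and the induction in Section~\ref{subsec:proof-2}) uses the recursive peeling of the outermost factor in a way that depends on there being exactly $n$ distinct operators indexed by $[0,n-1]$. Your $E_k$, built from $mn$ factors in which each conjugate repeats $m$ times, are different objects, and Theorem~\ref{thm:main-step} does not apply to them as stated.

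This is exactly why your final estimate becomes hand-wavy. The assertion that ``the number of surviving such bands over the $mn$ factors is of order $m$ (essentially one contribution per period)'' is precisely the content of the telescoping step you omitted; without it, the elementary symmetric polynomials over $mn$ factors have $\binom{mn}{k}$ terms, and there is no evident mechanism forcing the surviving contribution to be $O(m)$. A secondary inaccuracy: Theorem~\ref{thm:main-step} is not a statement about \emph{bandwidth}. It says $\mathcal{S}_{[0,n-1]}^{\ell,m}$ is a scalar when $\ell+m<n$; here $\ell+m$ is the power of $t$ attached to that term, not the net displacement. Terms with $\ell+m\ge n$ can and do contribute to bands of width $<n$. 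What matters is that every surviving term in $[X,U_{t,n}^n]$ carries at least a factor $t^n$, after which the paper bounds $\|L_n\|$ by a direct combinatorial count (Lemma~\ref{lem:Com-bound}) yielding $3n\cdot 4^{n-1}$, hence $\frac{1}{n}\|[X,U_{t,n}^n]\|\le\frac{3}{4}(4t)^n$. Your geometric-series argument over bandwidths $b\ge n$ does not match this and would not recover the constant.
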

The case $n=1$ corresponds to $\mathcal{D}_n=\idty_{\mathcal{H}}$ and Theorem \ref{thm:main} produces a linear bound in line with Lemma \ref{lem:v<t}. Our bound in (\ref{eq:main-result}) is irrelevant when $t\geq 1/4$, and we think that the validity of Theorem \ref{thm:main} extends to all $t\in(0,1)$, see Conjecture \ref{conj} below.

Lemma \ref{lem:v<t} guarantees that the quantum velocity can be made arbitrarily small by choosing the corresponding small transmission parameter $t$. It also says that the velocity bound decays at least linearly in $t$. On the other hand, Theorem \ref{thm:main} shows that for a fixed transmission parameter $t\in(0,1/4)$, a specific $n$-periodic field can be activated to guarantee an exponential decay of the velocity,  in the period $n$,  of the quantum walk.

The proof of Theorem \ref{thm:main} follows from the following two main steps:
\begin{itemize}

\item[]\emph{Step 1: (An upper bound for a subsequence of average velocities)} We prove that for every fixed $n\in\N$, $t\in(0,1)$, and the corresponding local $n$-periodic field
 $\mathcal{D}_n$ we have
\begin{equation}\label{main:pf:step1}
\frac{1}{nk}\|\tau_{nk}^{U_{t,n}}(X)-X\|\leq \frac34(4t)^n
\end{equation}
for every $k\in\Z^+$. This step is the main challenge and its proof spans all following sections.

\item[]\emph{Step 2: (Interpolation with the linear bound)} We interpolate  between (\ref{main:pf:step1}) and the linear bound (\ref{eq:lem:v<t}) in Lemma \ref{lem:v<t} to show that
\begin{equation}\label{main:pf:step2}
\frac1N\|\tau_{N}^{U_{t,n}}(X)-X\|\leq (4t)^n\ \text{ for all }\ N\geq N_0
\end{equation}
for some $N_0=N_0(n,t)\in\Z^+$.
\end{itemize}
We present the proof of \emph{Step 2} later in this section, and we comment now on its main result (\ref{main:pf:step2}). Note that (\ref{main:pf:step2}) implies that for any given $t\in[0,1]$ and a periodic field $\mathcal{D}_n$ (\ref{def:Dn}) with period $n$, there exists $N_0(t,n)\in\Z^+$ such that 
\begin{equation}
\frac1N\langle X\rangle_{U_{t,n}^N\Psi_0}\leq (4t)^n, \text{ for all }N\geq N_0.
\end{equation}
That is, when we consider higher moments, the argument in (\ref{eq:highmoments}) gives directly the same bound, i.e., for any $p\in\Z^+$
\begin{equation}
\limsup_{N\rightarrow \infty}\frac1{N^p}\langle X^p\rangle_{U_{t,n}^N\Psi_0}\leq (4t)^n.
\end{equation}
Figure \ref{fig:QW-different-n} reflects the result of Theorem \ref{thm:main} and it demonstrates  how the application of a periodic field (\ref{def:Dn}) of different periods  slows down the transport of the quantum walk when $t=0.2<1/4$ exponentially in the period $n$. 

\begin{figure}[h]
\centering
\begin{subfigure}{.33\textwidth}
  \centering
  \includegraphics[width=1.1\linewidth]{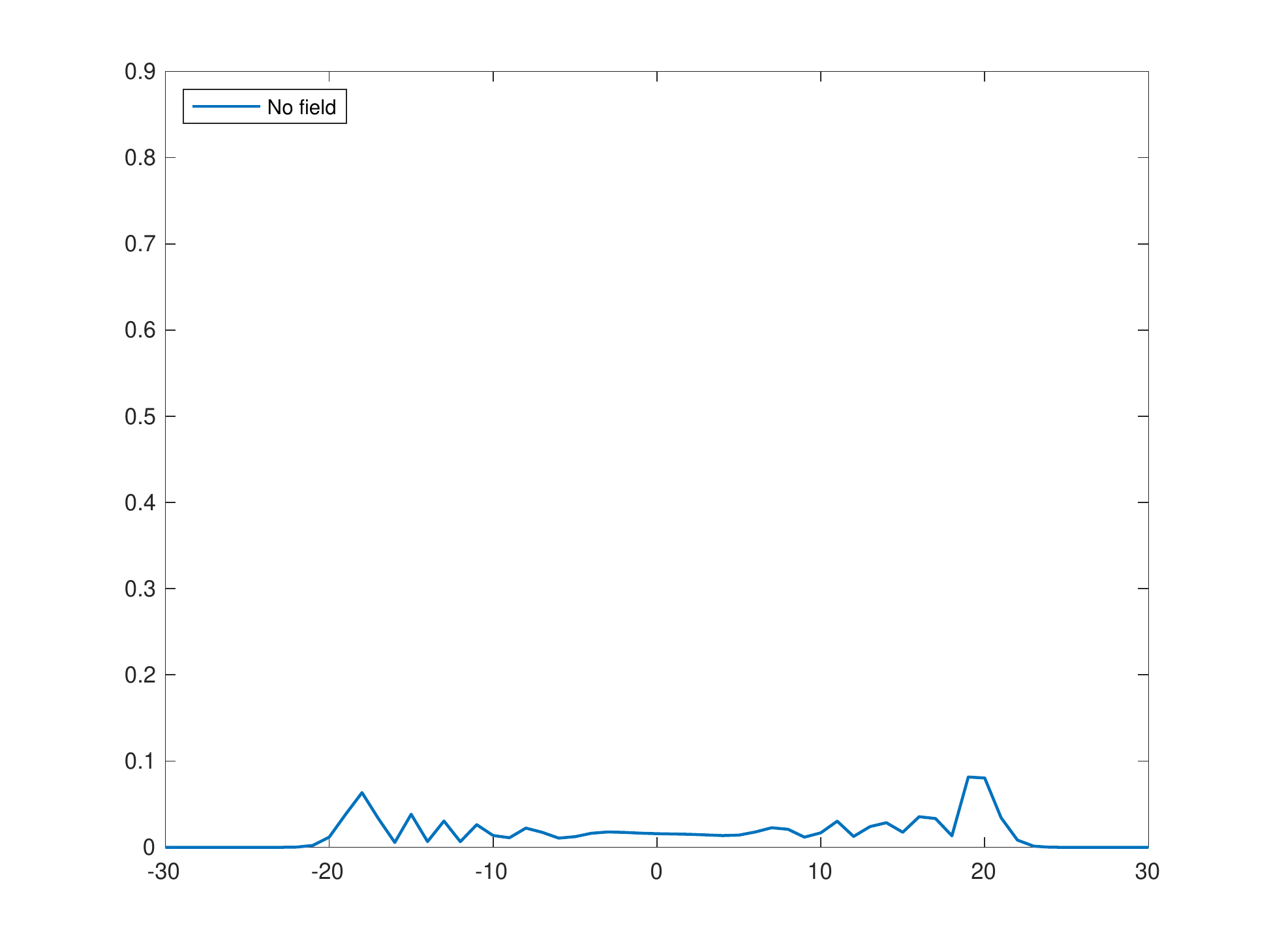}
  \caption{No field, $t=0.2$}
  \label{fig:sub1}
\end{subfigure}%
\begin{subfigure}{.33\textwidth}
  \centering
  \includegraphics[width=1.1\linewidth]{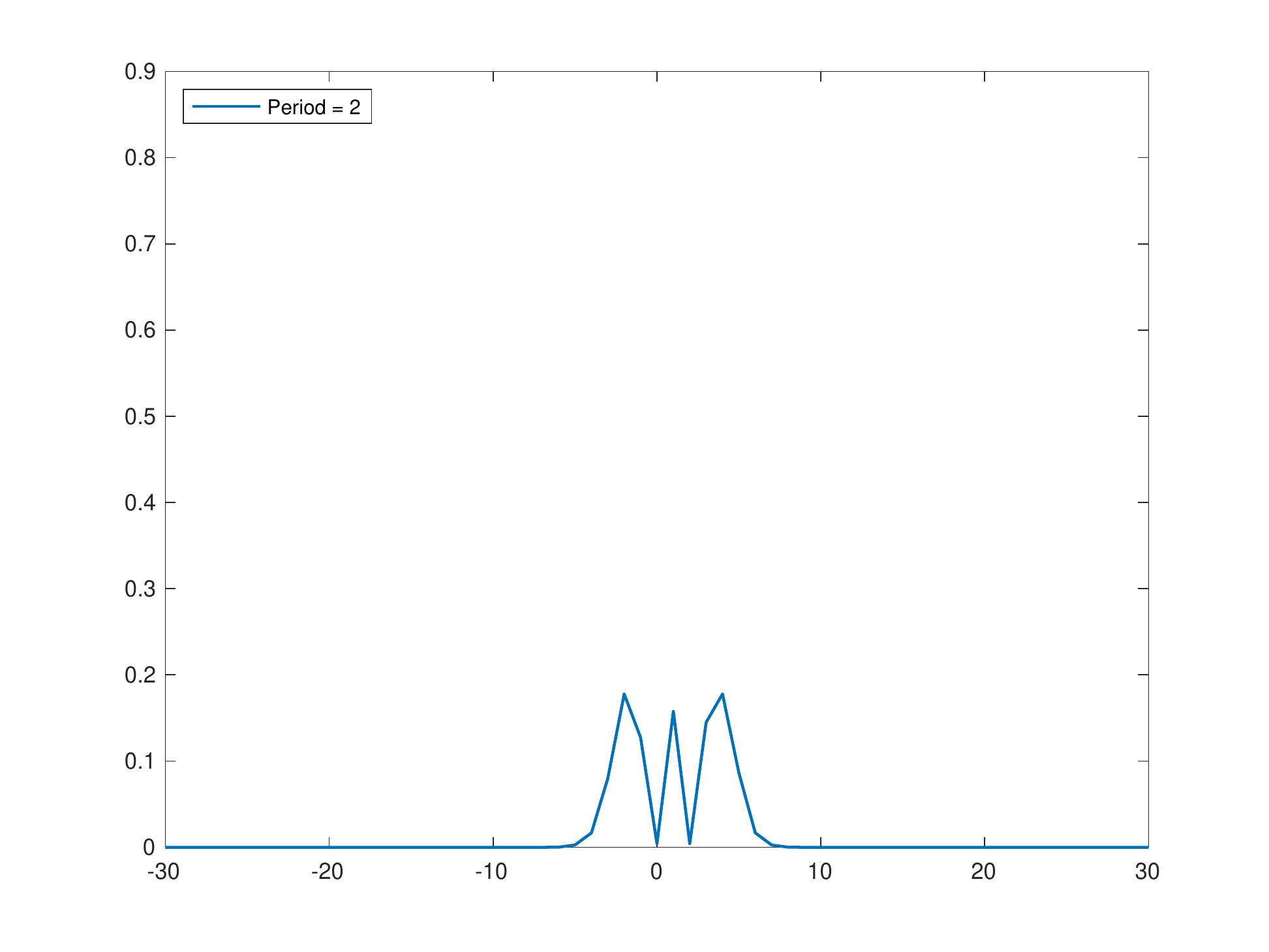}
  \caption{$n=2$, $t=0.2$}
  \label{fig:sub2}
\end{subfigure}
\begin{subfigure}{.33\textwidth}
  \centering
   \includegraphics[width=1.1\linewidth]{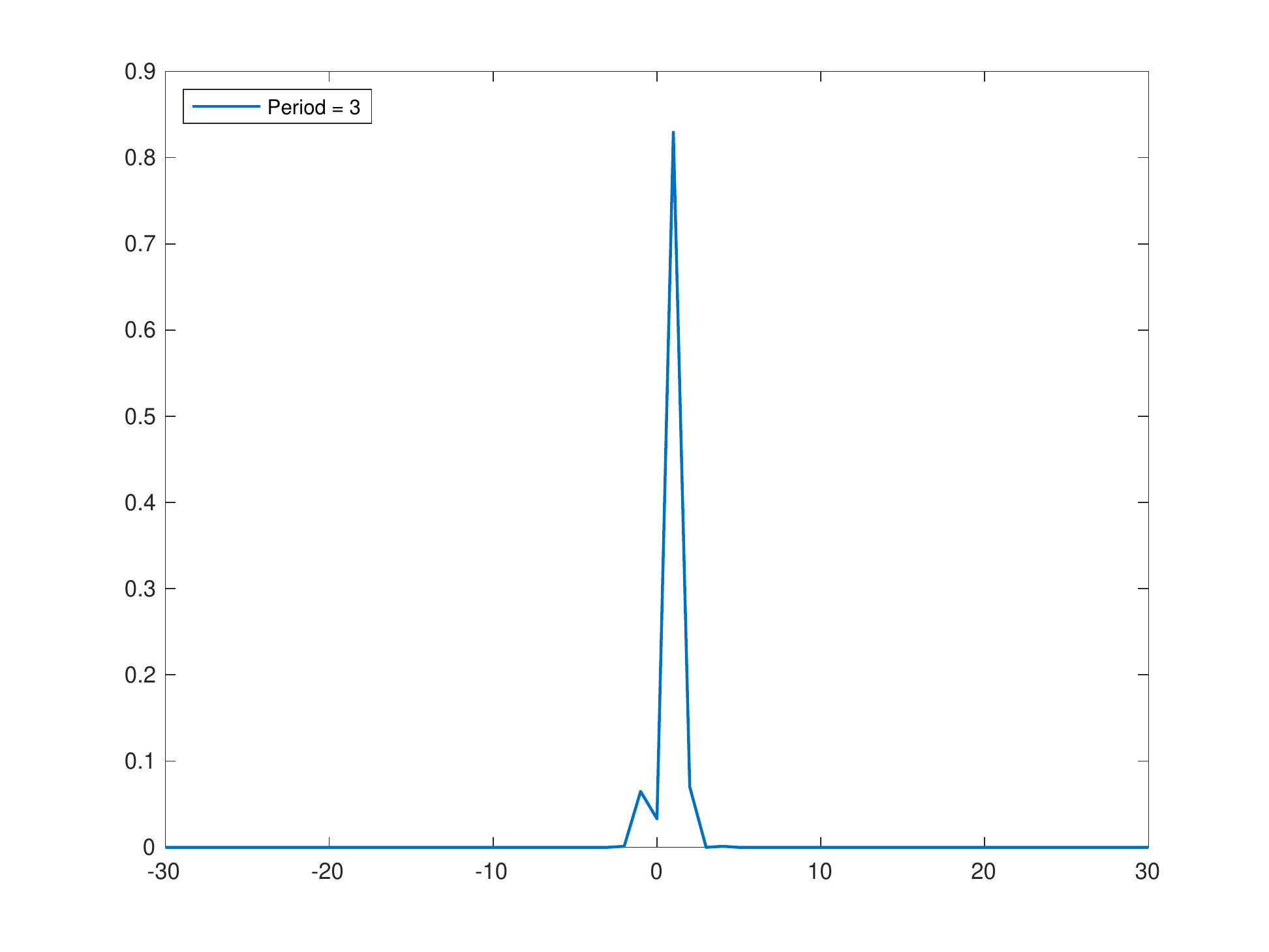}
   \caption{$n=3$, $t=0.2$}
\end{subfigure}
\caption{The probability distribution of the quantum walk generated by $U_{t=0.2,n}$  after 100 steps starting from $|\uparrow\rangle\otimes|0\rangle$ with different periods $n$. The graphs are restricted to the position window $[-30,30]$.}
\label{fig:QW-different-n}
\end{figure}

A natural question to ask is whether the condition that $t\in(0,1/4)$ in Theorem \ref{thm:main} can be relaxed or removed. Our numerics are showing that periodic fields of the form (\ref{def:Dn}) indeed slow down the quantum velocities when $t\in[1/4,1)$, see for example Figure \ref{fig:QWdifferent-n-t=0.8} that shows three probability distributions for the quantum walk after 100 steps starting from  $|\uparrow\rangle\otimes|0\rangle$  when the transmission parameter is fixed to be $t=0.8$ and the local field is periodic with an increasing periods $n=1$, $n=5$, and $n=10$. It is interesting to see that the position with the highest probability in the three cases  are about 80, 30 and 10, respectively, in perfect match with the $100(0.8)^n$ for $n=1,5,10$. This leads up to the following conjecture.
\begin{conj}\label{conj}
For any $t\in(0,1)$, $n\in\N$, and the corresponding local $n$-periodic field $\mathcal{D}_n$ defined in (\ref{def:Dn}), the velocity of the quantum walk $U_{t,n}$ scales like $t^n$, i.e.,
 \begin{equation}
v_{U_{t,n}}= \mathcal{O}(t^n).
\end{equation}
\end{conj}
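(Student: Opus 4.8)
The plan is to establish Conjecture~\ref{conj} by removing the combinatorial factor $4^n$ from the bound \eqref{eq:main-result}, attacking this from two complementary directions: a Floquet--Bloch analysis of the periodic unitary $U_{t,n}$, and a sharpening of the operator-theoretic estimates behind Theorem~\ref{thm:main}. Since $U_{t,n}$ is a periodic operator on $\ell^2(\Z;\C^2)$, with fundamental cell of length dividing $2n$ (cf.\ \eqref{eq:Wt-period}), I would first Bloch-decompose it as a direct integral $\int^\oplus U_{t,n}(\theta)\,d\theta$ of finite unitary matrices $U_{t,n}(\theta)$ indexed by the quasimomentum $\theta$ of the cell lattice. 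By the unitary analogue of the periodic-transport results of \cite{DLY, AK}, obtained by applying the usual non-stationary-phase argument to $U_{t,n}^N\psi(\theta)=\sum_j e^{iN\phi_j(\theta)}\langle\psi_j(\theta),\psi(\theta)\rangle\,\psi_j(\theta)$ with $e^{i\phi_j(\theta)}$ the eigenvalues of $U_{t,n}(\theta)$, the asymptotic velocity $v_{U_{t,n}}$ is bounded, up to a factor linear in $n$, by the maximal group velocity $\sup_{j,\theta}|\partial_\theta\phi_j(\theta)|$ of the Bloch bands. The conjecture would then follow from the estimate $\sup_{j,\theta}|\partial_\theta\phi_j(\theta)|=\mathcal{O}(t^n)$, uniform in $n$.

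The crux is to show that the Bloch eigenphases $\phi_j(\theta)$ depend on $\theta$ only through a term of size $\mathcal{O}(t^n)$. Writing $U_{t,n}(\theta)=\mathcal{D}_n^{\mathrm{fib}}M_t(\theta)$, with $\mathcal{D}_n^{\mathrm{fib}}$ the $\theta$-independent diagonal fiber of $\mathcal{D}_n$ and $M_t(\theta)$ the fiber of $V_tW_t$ --- an $\mathcal{O}(t)$ perturbation of the identity whose $\theta$-dependence is confined to a ``wrap-around'' block $B(\theta)=M_t(\theta)-M_t(0)$ of rank at most $4$ and norm $\mathcal{O}(t)$ --- one sees that $\theta$ can enter the spectral determinant $\Delta(z,\theta)=\det(z-U_{t,n}(\theta))$ only along a path winding once around the cell, so that the coefficients of $e^{\pm i\theta}$ in the finite Laurent polynomial $\theta\mapsto\Delta(z,\theta)$ are $\mathcal{O}(t^n)$: each hop inside the cell costs a factor of the pentadiagonal entries $rt$ or $t^2$ of $U_t$, at least $n$ such factors are needed to wind around, and the shorter paths cancel exactly as in Theorem~\ref{thm:main-step}. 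I expect this ``fiber'' version of the destructive-interference mechanism to follow from essentially the same induction on the two bandwidth parameters, now carried out inside the unit cell with the boundary identification built in.

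The main obstacle is the passage from ``$\Delta(z,\theta)$ depends on $\theta$ only at order $t^n$'' to ``each eigenphase $\phi_j(\theta)$ depends on $\theta$ only at order $t^n$''. A near-degeneracy between two neighboring Bloch bands turns an $\mathcal{O}(t^n)$ perturbation of $\Delta$ into a splitting of order $t^{n/2}$ of the corresponding eigenphases, already breaking the conjectured rate; so one must prove that the Bloch bands stay uniformly separated over the whole Brillouin zone for all $t\in(0,1)$, not merely for $t=\mathcal{O}(1/n)$. At $t=0$ the fiber reduces to (copies of) $\diag(1,\alpha,\ldots,\alpha^{n-1})$, whose eigenvalues are the distinct $n$-th roots of unity with gaps at least $2\sin(\pi/n)$, and any residual degeneracy from a doubled cell is split only at order $\mathcal{O}(t^{\lceil n/2\rceil})$ by the pentadiagonal structure; our numerics suggest these gaps persist for all $t<1$, but establishing this uniformly is where the real work lies.

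A complementary route, avoiding spectral theory, keeps the framework of Theorem~\ref{thm:main} and replaces the triangle-inequality estimate producing $4^n$ by one using unitarity: the non-commuting building blocks of $U_{t,n}$ satisfy relations of the type $\sum_i A_i^*A_i\leq\idty_{\mathcal{H}}$, so a Schur test (or an $\ell^1$--$\ell^\infty$ bound for banded operators) applied to the surviving bandwidth-$n$ elementary symmetric polynomial of operators should bound its norm by $C\,t^n$ with an absolute constant $C$, rather than $4^n t^n$; together with the interpolation of Step~2 this would yield $v_{U_{t,n}}\leq C\,t^n$ for all $t\in(0,1)$. In both routes the genuinely hard point is the same: Theorem~\ref{thm:main-step} supplies interference that is \emph{exact} below bandwidth $n$, but must be supplemented by a \emph{quantitative} estimate with good constants at bandwidth $n$ itself, where one needs real operator-norm cancellations among the $\mathcal{O}(4^n)$ surviving terms rather than a crude bound.
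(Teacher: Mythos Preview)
The statement you are attempting to prove is \emph{not} proved in the paper: it is explicitly labelled a conjecture, supported only by numerics (Figure~\ref{fig:QWdifferent-n-t=0.8}). The paper's sole remark towards a proof is the observation that $\|L_n\|=\mathcal{O}(n)$ would suffice (the Remark after \eqref{eq:hope}), which is precisely your ``complementary route''. So there is no paper proof to compare against; what you have written is a research plan for an open problem, and you yourself identify its gaps.

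Your Floquet--Bloch approach is a natural and genuinely different line of attack from anything in the paper. The reduction to $\sup_{j,\theta}|\partial_\theta\phi_j(\theta)|=\mathcal{O}(t^n)$ is correct in spirit, and the heuristic that the $\theta$-dependence of $\det(z-U_{t,n}(\theta))$ enters only at order $t^n$ (because $\theta$ sits in a rank-$\mathcal{O}(1)$ wrap-around block and must be threaded through $\mathcal{O}(n)$ hops of size $\mathcal{O}(t)$) is plausible and worth making precise. But you correctly flag the real obstruction: passing from an $\mathcal{O}(t^n)$ perturbation of the characteristic polynomial to an $\mathcal{O}(t^n)$ bound on $\partial_\theta\phi_j$ requires a lower bound on $|\partial_z\Delta|$ at the roots, i.e.\ a uniform spectral gap between Bloch bands for \emph{all} $t\in(0,1)$, not just small $t$. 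At $t=0$ the eigenvalues are the $n$-th roots of unity with gap $\sim 1/n$, but nothing in the paper or in your sketch prevents two bands from colliding at some $t\in(1/4,1)$, and if they do your estimate degrades. This is not a minor technicality; it is the heart of the problem and you have no mechanism to rule it out.

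Your second route is exactly what the paper says would work but could not do. The suggestion to exploit unitarity via a Schur test or a column-sum bound is reasonable, but ``the building blocks satisfy $\sum_i A_i^*A_i\le\idty$'' is not literally true for the $B^{(j)},C^{(j)}$ individually (note $\|B^{(j)}\|=2$), and you would need to organise the $\binom{n}{\ell}\binom{\ell}{m}$ terms of each $\mathcal{S}^{\ell,m}_{[0,n-1]}$ so that cancellation, not just submultiplicativity, is visible. As it stands this paragraph is a hope, not an argument.

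In short: both routes are sensible starting points, you have been honest about where they break, and neither is yet a proof. The paper agrees with you that the conjecture is open.
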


\begin{figure}[h]
\centering
\begin{subfigure}{.33\textwidth}
  \centering
\includegraphics[width=1.1\linewidth]{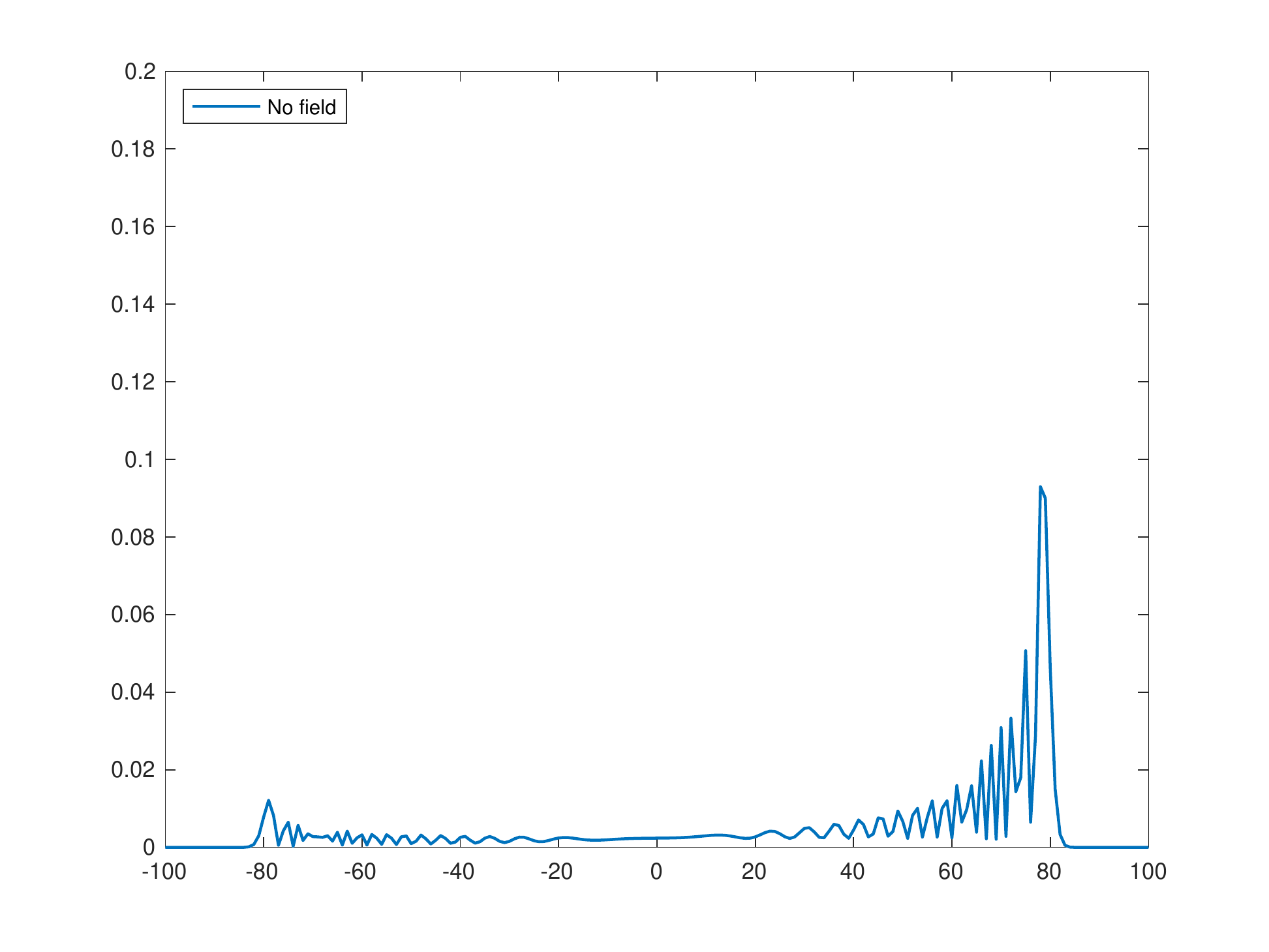}
  \caption{No field, $t=0.8$}
  \label{fig:sub1}
\end{subfigure}%
\begin{subfigure}{.33\textwidth}
  \centering
  \includegraphics[width=1.1\linewidth]{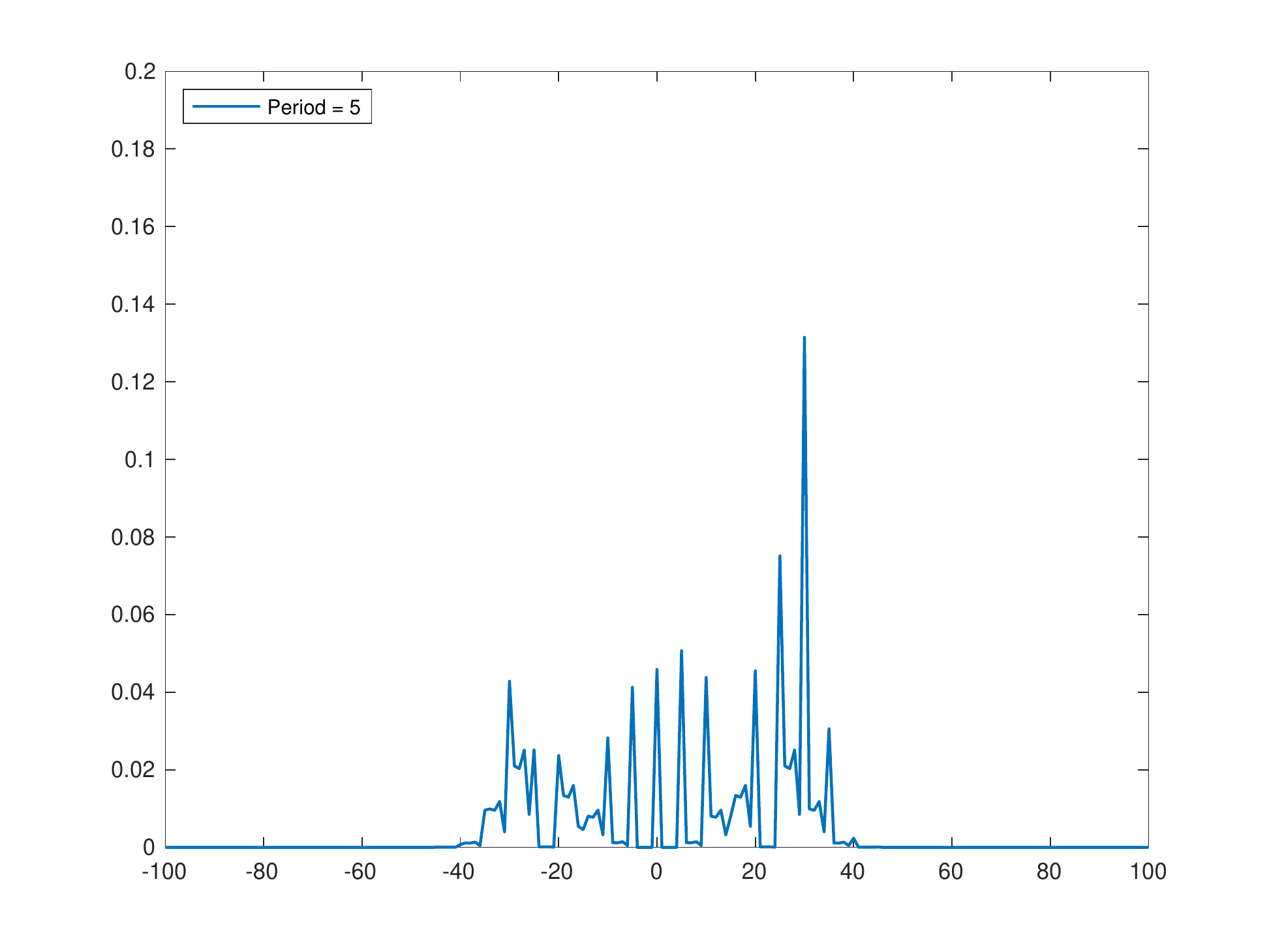}
  \caption{$n=5$, $t=0.8$}
  \label{fig:sub2}
\end{subfigure}
\begin{subfigure}{.33\textwidth}
  \centering
   \includegraphics[width=1.1\linewidth]{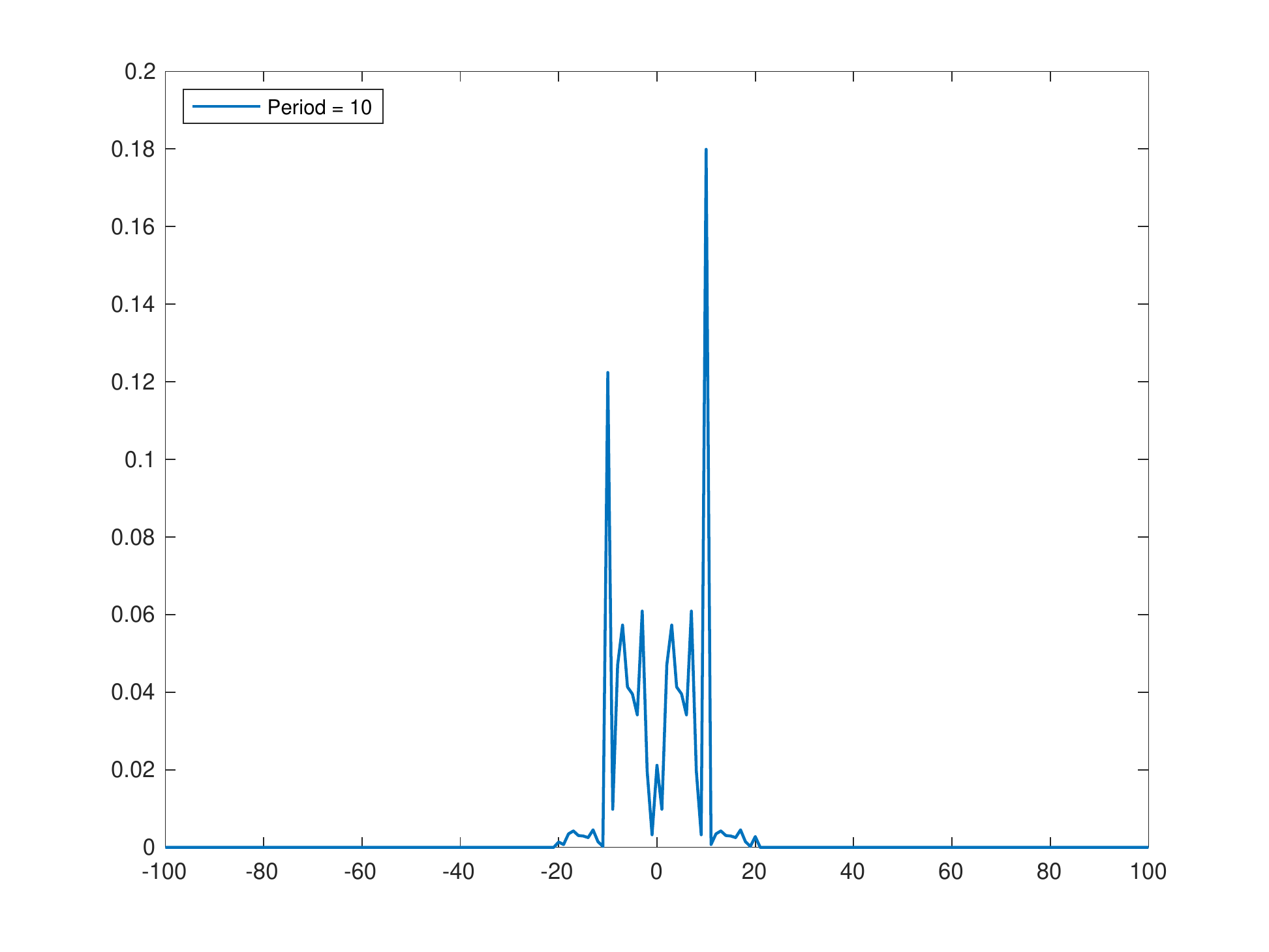}
   \caption{$n=10$, $t=0.8$}
\end{subfigure}
\caption{The probability distribution of the quantum walk generated by $U_{t=0.8,n}$  after 100 steps starting from $|\uparrow\rangle\otimes|0\rangle$ with different periods $n$.}
    \label{fig:QWdifferent-n-t=0.8}
\end{figure}

We present now the interpolation argument (\emph{Step 2}) while assuming the result of \emph{Step 1} (\ref{main:pf:step1}) whose proof spans the rest of this work.

In the following we prove that for any fixed $t\in(0,1)$ and any integer $n>1$, there exists an $n$-periodic field $\mathcal{D}_n$  such that
\begin{equation}
\frac{1}{N}\|\tau^{U_{t,n}}_N(X)-X\|\leq (4t)^n \text{\quad for all }N> N_0(t,n):=\max\left\{\frac{n}{(4t)^{n-1}},n\right\}.
\end{equation}
Note that 
for any integer $N> N_0(t, n)$, there exists a unique $\hat k\in\N$ and $r\in\{0,1,\ldots, n-1\}$ such that
\begin{equation}\label{eq:N-dec}
N=n\hat k+r.
\end{equation}
Add an subtract $\tau^{U_{t,n}}_r(X)$  to $\tau^{U_{t,n}}_N(X)-X$ to find
\begin{eqnarray}\label{eq:interp:Tri}
\|\tau^{U_{t,n}}_N(X)-X\|&\leq&
\|\tau^{U_{t,n}}_{n\hat k}(X)-X\|+\|\tau^{U_{t,n}}_r(X)-X\|.
\end{eqnarray}
We use the bound  $(\ref{main:pf:step1})$ in the first term, and we use the linear bound (\ref{eq:v<t})
\begin{equation}
\|\tau^{U_{t,n}}_r(X)-X\|=\|[X,U_{t,n}^r]\|\leq r t
\end{equation}
in the second term of (\ref{eq:interp:Tri}) to obtain
\begin{equation}\label{interp:1}
\frac{1}{N}\|\tau^{U_{t,n}}_N(X)-X\|\leq \frac{3}{4N} n\hat k\ (4t)^n+\frac{r}{N}t.
\end{equation}
Next, we consider the two cases for $N_0(t, n)$. If $N_0(t,n)=n$, i.e., if $t\in[1/4,1)$, then we simply use $1\leq t\leq (4t)^{n}$, and (\ref{interp:1}) reads as
\begin{equation}
\frac{1}{N}\|\tau^{U_{t,n}}_N(X)-X\|\leq \frac{1}{N} \left(\frac34 n\hat k+r\right)\ (4t)^n\leq (4t)^n.
\end{equation}
For the other case when $N_0=n(4t)^{1-n}$, i.e., $t\in(0,1/4)$, we use
$N> n(4t)^{1-n}$ in the second term in (\ref{interp:1}), we find
\begin{eqnarray}
\frac{1}{N}\|\tau^{U_{t,n}}_N(X)-X\|&\leq& \frac{1}{N} \left(\frac34 n\hat k+\frac{r N}{4n}\right)(4 t )^n
\leq \frac{1}{N} \left(\frac32n\hat k+\frac{N}{4}\right)(4 t )^n \nonumber\\
&\leq&(4t)^n
\end{eqnarray}
where we used the fact that $r< n$, see (\ref{eq:N-dec}).

\section{An upper bound for a subsequence of average velocities}\label{sec:subsequence}

In this section we present the main steps and tools to show the bound in (\ref{main:pf:step1}) for a subsequence of average velocities. First, we explain how the average velocity on the LHS of (\ref{main:pf:step1}) is bounded by $\frac{1}{n}\|[X,U_{t,n}^n]\|$, where we recall here that $U_{t,n}=\mathcal{D}_n U_t$. In Section \ref{subsec:BC}, we write $U_t$ as a sum of three operators, each corresponds to a certain power $0,1,2$ of the transmission parameter $t$. Then in Sections \ref{subsec:s} and \ref{subsec:expand} we expand $(\mathcal{D}_nU_t)^n$ in terms of the so called  \emph{elementary symmetric polynomials} of (non-commuting) operators $\s_{[0,n-1]}^{\ell, m}$ that will represent the main object to understand. The crucial fact, Theorem \ref{thm:main-step}, says that those symmetric polynomials of operators that have bandwidth less than $n$ collapse to a constant multiple of the identity. This allows to finish the proof of the bound on the subsequence of average velocities (\ref{main:pf:step1}). The proof of Theorem \ref{thm:main-step} is presented in the next section, Section \ref{sec:Induction}.

Consider a fixed period $n\in\Z^+$ of the periodic field $\mathcal{D}_n$. For $N$ being a sequence of multiples of $n$, i.e., $N= nk$ for $k\in\Z^+$,  we expand
\begin{equation}
\frac{1}{N}(\tau_N^{U_{t,n}}(X)-X)=\frac{1}{nk}\sum_{m=1}^{k}((U_{t,n}^n)^*)^{m}[X,U_{t,n}^n] (U_{t,n}^n)^{m-1},
\end{equation}
to see that
\begin{equation}\label{eq:norm}
\frac{1}{N}\|\tau^{U_{t,n}}_N(X)-X\| \leq \frac{1}{n}\|[X,U_{t,n}^n]\|=\frac1n\|\tau^{U_{t,n}}_n(X)-X\|.
\end{equation}
This is, the average velocities in the first $N=nk$ quantum walk steps, where $k\in\Z^+$,  is bounded by a bound for the average velocity in the first $n$ steps only.
So our main effort will be directed towards analyzing $U_{t,n}^n=(\mathcal{D}_n U_t)^n$ that represents the quantum walk in the presence of the local field $\mathcal{D}_n$ after $n$ steps.

\subsection{A quantum walk generated by $n$ unitary operators}\label{subsec:BC}

Note that  $U_t=V_t W_t$ in (\ref{U}) and (\ref{U-CMV}) can be written as a sum of  three operators that correspond to constant, linear, and quadratic terms of $t$, 
\begin{equation}\label{U:1BC}
U_t=r^2\idty_{\mathcal{H}}+r B t+ C t^2,
\end{equation}
where 
\begin{equation}\label{def:BC}
B:=T_{-1}^{\mathcal{H}}-T_1^{\mathcal{H}}, \quad C:=|\downarrow\rangle\langle\downarrow|\otimes T_{-1}+
|\uparrow\rangle\langle\uparrow|\otimes T_1.
\end{equation}
Here $T_{\pm k}^{\mathcal{H}}$ is the shift operator by $k$ on $\mathcal{H}$, defined as 
\begin{equation}\label{def:TH}
T_{\pm k}^{\mathcal{H}}=\sum_{j\in\Z}|\delta_{j\pm k}\rangle\langle\delta_j|, \, k\in\Z^+.
\end{equation}

For $j\in\Z$, define the generalized $U_{t}^{(j)}$ operators on $\mathcal{H}$
\begin{equation}\label{def:U0j}
U_{t}^{(j)}:=\mathcal{D}_n^j U_t (\mathcal{D}_n^j)^*=r^2\idty_{\mathcal{H}}+rt B^{(j)}+t^2 C^{(j)},
\end{equation}
where the
$B^{(j)}$ and $C^{(j)}$ operators are given by the formulas (A direct calculation using (\ref{def:BC}), (\ref{def:Dn}), and (\ref{def:Dn-2}))
\begin{eqnarray}\label{def:BC:j}
B^{(j)}&:=&\mathcal{D}_n^j B (\mathcal{D}_n^j)^*
=
\alpha^{-j}T_{-1}^{\mathcal{H}}-\alpha^j T_1^{\mathcal{H}} \nonumber\\
\quad C^{(j)}&:=&\mathcal{D}_n^j C (\mathcal{D}_n^j)^*= \alpha^{-2j}|\downarrow\rangle\langle \downarrow|\otimes T_{-1}+\alpha^{2j}|\uparrow\rangle\langle \uparrow|\otimes T_1. \label{def:gen-BC-U}
\end{eqnarray}
Here we slightly abuse notation and suppress the dependency on $n$ in the operators $U_t^{(j)}$, $B^{(j)}$, and $C^{(j)}$. This dependency is hidden within $\alpha$ which is an $n$-th root of the unity (\ref{def:alpha}). Note that
\begin{equation}\label{imp-1}
A^{(0)}=A,\text{ and } A^{(n+k)}=A^{(k)}, \text{ for all } k\in\Z, \text{ where } A\in\{B,C, U_t\}. 
\end{equation}
 Direct calculations using (\ref{def:gen-BC-U}) show the crucial facts
\begin{equation}\label{eq:important}
[B^{(j)},B^{(k)}]=[C^{(j)},C^{(k)}]=0, \text{ and }B^{(j)}C^{(k)}=(C^{(k)})^* B^{(j)} \text{ for all }j, k\in \Z.
\end{equation}
\begin{rem}
(\ref{imp-1}) and (\ref{eq:important}) are essential relations for the validity of our proof of the main result. In particular, Corollary \ref{cor:s-D} below is a consequence of these relations. It is important to stress here that  a local field (unitary and diagonal) $\mathcal{D}=\diag\{e^{i\theta_j}\}$, $j\in\Z$ satisfies 
\begin{equation}
[B,\mathcal{D}^k B (\mathcal{D}^k)^*]=[C,\mathcal{D}^k C (\mathcal{D}^k)^*]=0 \text{ for any }k\in\Z,\text{ and }\mathcal{D}^n=\idty_\mathcal{H}
\end{equation}
 if and only if each diagonal entry of $\mathcal{D}_n$ is an  $n$-th root of the unity  and  they are ordered as $e^{i(\theta_{j+1}-\theta_j)}=c$ for some $c\in\C$ for all $j\in\Z$, (compare with property (b) in the definition of $\mathcal{D}_n$). This can be checked by direct calculations. The requirement that $c=1$ corresponds to the trivial case of no field ($\mathcal{D}_n=\beta\idty_\mathcal{H}$, where $|\beta|=1$). $c\neq 1$ means that any consecutive $n$ diagonal entries of $\mathcal{D}_n$ are distinct, this is crucial for the validity of Lemma \ref{Com-with-D}.
\end{rem}
Next, use (\ref{def:U0j}) that $\mathcal{D}_n^\ell$ (where $\ell\in\Z^+$) satisfies the intertwining relation, $\mathcal{D}_n^\ell U_{t}=U_{t}^{(\ell)}\mathcal{D}_n^\ell$ iteratively as follows, to find
\begin{eqnarray}
U_{t,n}^n&=& (\mathcal{D}_n U_{t}) (\mathcal{D}_n U_{t})  \ldots  (\mathcal{D}_n U_{t} )  \quad (n \text{ times})\nonumber\\
&=&\mathcal{D}_n U_{t} U_{t}^{(1)}  \mathcal{D}_n^2 U_{t,n}\ldots  \mathcal{D}_n U_{t,n}\nonumber\\
&=&\mathcal{D}_n\left(\prod_{k=0}^{n-1}U_{t}^{(k)}\right) \mathcal{D}_n^{n-1},\label{eq:DUD}
\end{eqnarray}
where here and in the following we use the convention 
\begin{equation}\label{def:prod}
\prod_{m=a}^bK_m=
\begin{cases}
K_aK_{a+1}\ldots K_b & \text{ when }b\geq a\\
\idty_{\mathcal{H}} & \text{ when }b<a
\end{cases}.
\end{equation}

\eqref{eq:DUD} also reads as 
\begin{equation}
U_{t,n}^n=U_t^{(1)}U_t^{(2)}\ldots U_t^{(n)},
\end{equation}
noting that $U_t^{(j)}\neq U_t^{(\ell)}$ for all $j\neq \ell$ in $\{1,\ldots,n\}$. That is, 
the $N$-th quantum walk step, where $N=nk$, generated by the one-step unitary $U_{t,n}$ in (\ref{U}), is understood as the application of the block of operators $U_t^{(1)}U_t^{(2)}\ldots U_t^{(n)}$ $k$-times, i.e.,
\begin{equation}
U_{t,n}^{N}=\left(U_t^{(1)}U_t^{(2)}\ldots U_t^{(n)}\right)^k.
\end{equation}
This means that $U_{t,n}^n$ can be understood as the main building block of our quantum walk.

Substitute (\ref{eq:DUD}) in (\ref{eq:norm}) to obtain
\begin{equation}\label{eq:Loc-1}
\|[X,U_{t,n}^n]\|=\left\|\left[X, \mathcal{D}_n\left(\prod_{k=0}^{n-1}U_{t}^{(k)}\right) \mathcal{D}_n^{n-1}\right]\right\|=\left\|\left[X,\prod_{k=0}^{n-1}U_{t}^{(k)}\right]\right\|.
\end{equation}
In (\ref{eq:Loc-1}) we used Leibniz rule
and the fact that $X$ commutes with the unitary multiplication operators $\mathcal{D}_n$ and $\mathcal{D}_n^{n-1}$.

Recall that $\prod_{j=0}^{n-1}U_{t}^{(k)}$ is the product
\begin{equation}\label{eq:U-Product}
\prod_{j=0}^{n-1}U_{t}^{(k)}=\left(r^2\idty_{\mathcal{H}}+r B^{(0)}t+ C^{(0)}t^2\right) \ldots\left(r^2\idty_{\mathcal{H}}+r B^{(n-1)}t+C^{(n-1)}t^2\right)
\end{equation}
which can be regarded as a ``polynomial" in  $t$ of degree $2n$ whose coefficients are operators acting on $\mathcal{H}$. The full expansion has $3^n$ terms that we will group with respect to the corresponding power of $t$, as follows.

First note that in the product (\ref{eq:U-Product}), a product of a combination of $m_1$ operators $C^{(j)}$'s and $m_2$ operators $B^{(k)}$'s (with an increasing order of distinct superscripts in the range from $0$ to  $n-1$) contributes to the coefficient  of $t^{2m_1+m_2}$.
To distinguish such terms that are associated with a specific power of $t$ we introduce the operator $\mathcal{S}_{[0,n-1]}^{\ell, m}$ on $\mathcal{H}$ that is defined in the next subsection.

\subsection{The elementary symmetric polynomial of operators $\s_{[0,n-1]}^{\ell,m}$}\label{subsec:s}
We define the operator $\mathcal{S}_{[0,n-1]}^{\ell,m}$ on $\mathcal{H}$ to be the (symmetric) sum of products that corresponds to the power $m+\ell$ of $t$. In particular, each term in $\mathcal{S}_{[0,n-1]}^{\ell, m}$ is an ordered (in the superscripts)  product of a total of $\ell$ operators, $m$ of which are $C^{(j)}$'s and $\ell-m$ are $B^{(k)}$'s where 
\begin{equation}
j,k\in[0,n-1]:=\{0,1,\ldots,n-1\}.
\end{equation}
In general, $\s_{[0,n-1]}^{\ell,m}$ where $0\leq m\leq \ell\leq n$ is an operator on $\mathcal{H}$ given by the formula
\begin{equation}\label{eq:S-def}
\s_{[0,n-1]}^{\ell,m}=\sum_{
\tiny\begin{array}{c}
0\leq j_1<\ldots<j_\ell\leq n-1\\
\eta=(\eta_1,\ldots,\eta_\ell)\in\{0,1\}^\ell;\\
|\eta|=m
\end{array}
}\,\prod_{k=1}^\ell A_{\eta_k}^{(j_k)},
\end{equation}
where $|\eta| = \eta_1 + \ldots + \eta_{\ell}$ and $A$ is either a $C$ or a $B$ operator, i.e.,
 \begin{equation}
 A_{\eta}:= C^{\eta} B^{1-\eta},\ \eta\in\{0,1\} \text{ with the definition }C^{0} =B^{0}:=\idty_{\mathcal{H}},
 \end{equation} 
and we set $\s_{[0,n-1]}^{0,0}:=\idty_{\mathcal{H}}$, and note that $\s_{\{j\}}^{1,1}=C^{(j)}$ and $\s_{\{j\}}^{1,0}=B^{(j)}$. That is, (\ref{eq:S-def}) can be written as 
\begin{equation}
\s_{[0,n-1]}^{\ell,m}=\sum_{
\tiny\begin{array}{c}
\eta=(\eta_1,\ldots,\eta_\ell)\in\{0,1\}^\ell;\\
|\eta|=m
\end{array}
}\, \sum_{
\tiny\begin{array}{c}
0\leq j_1<\ldots<j_\ell\leq n-1
\end{array}
}\,\prod_{k=1}^\ell \s_{\{j_k\}}^{1,\eta_k}.
\end{equation}
We remark here that the object $\s_{[0,n-1]}^{\ell,m}$  takes the form of the sum of the so-called  \emph{elementary symmetric polynomials} of operators (see e.g., \cite{ESP-1, ESP-2}). Here the variables are the non-commuting operators $B^{{(j)}}$ and $C^{(k)}$'s.

\begin{ex}
The border case $\s_{[0,n-1]}^{\ell,\ell}$ is given as
\begin{equation}
\s_{[0,n-1]}^{\ell,\ell}=\sum_{0\leq j_1<j_2<j_3\leq n-1} C^{(j_1)} C^{(j_2)}\ldots C^{(j_\ell)}.
\end{equation}
We also have
\begin{equation}
\mathcal{S}_{[1,3]}^{2,1}= C^{(1)}B^{(2)}+ C^{(1)}B^{(3)}+ C^{(2)}B^{(3)} +B^{(1)}C^{(2)}+ B^{(1)}C^{(3)}+ B^{(2)}C^{(3)},
\end{equation}
and $\mathcal{S}_{[0,n-1]}^{3,1}$ is a sum of products of three operators, one $C$ and two $B$'s,
\begin{equation}
\mathcal{S}_{[0,n-1]}^{3,1}=\sum_{0\leq j_1<j_2<j_3\leq n-1} C^{(j_1)} B^{(j_2)} B^{(j_3)}+B^{(j_1)} C^{(j_2)} B^{(j_3)}+B^{(j_1)} B^{(j_2)} C^{(j_3)}.
\end{equation}
\end{ex}
Note that $\s_{[0,n-1]}^{\ell,m}$ has exactly
\begin{equation}\label{S:number}
\binom{n}{m}\binom{n-m}{\ell-m}=\binom{n}{\ell}\binom{\ell}{m}
\end{equation}
terms (products). This can be seen for example from the sum in (\ref{eq:S-def}), where one needs to place $\ell$ values of $j_k$'s within $n$ positions $\{0,1,\ldots, n-1\}$, then we place $m$ ones in $\ell$ positions in $(\eta_1,\ldots,\eta_\ell)$.

\subsection{Expanding in terms of  $\s_{[0,n-1]}^{\ell,m}$}\label{subsec:expand}

It is direct to see from the product (\ref{eq:U-Product}) that the constant (in $t$) term is $(r^2)^n\idty_{\mathcal{H}}$, and the linear term is
\begin{equation}
(r^2)^{n-1}r \mathcal{S}_{[0,n-1]}^{1,0} t=r^{2n-1} \mathcal{S}_{[0,n-1]}^{1,0} t.
\end{equation}
In general, the product (\ref{eq:U-Product}) can be written as (recall that $\mathcal{S}_{[0,n-1]}^{0,0}=\idty_{\mathcal{H}}$)
\begin{equation}
\prod_{j=0}^{n-1}U_{t}^{(j)}=\sum_{k=0}^{2n} r^{2n-k} t^k\sum_{\tiny\begin{array}{c}0\leq m\leq \ell\leq n\\  \ell+m=k\\
\end{array}
}\mathcal{S}_{[0,n-1]}^{\ell,m}.
\end{equation}
Then we write the commutator $[X, U_{t,n}^n]$ as the sum
\begin{equation}\label{eq:exact-long-formula}
\left[X,U_{t,n}^n\right]=
\sum_{k=0}^{2n}r^{2n-k}  t^k \sum_{\tiny\begin{array}{c}0\leq m\leq \ell\leq n\\  \ell+m=k\\
\end{array}
}\left[X, \mathcal{S}_{[0,n-1]}^{\ell,m}\right].
\end{equation}
The following main theorem shows that the first $n$ terms (from $k=0$ to $k=n-1$) in the sum (\ref{eq:exact-long-formula}) are zeros. In fact, we need to show only that $\s_{[0,n-1]}^{\ell,m}$ is a diagonal operator (a multiplication operator) whenever $\ell+m<n$. The following main technical theorem, proved in Section \ref{sec:Proof-of-thm}, shows that $\s_{[0,n-1]}^{\ell,m}$ is not only diagonal, but it is a constant multiple of the identity. 

\begin{thm}\label{thm:main-step}
For any $0\leq m\leq \ell\leq n$ with $\ell+m<n$, the sum $\s_{[0,n-1]}^{\ell,m}$ is a constant multiple of the identity. (Note that $\s_{[0,n-1]}^{0,0}=\idty_{\mathcal{H}}$.)
\end{thm}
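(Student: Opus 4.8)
The plan is to prove Theorem~\ref{thm:main-step} by induction on the pair $(\ell,m)$, exactly as the section heading suggests. The base cases are $\s_{[0,n-1]}^{0,0}=\idty_{\mathcal H}$ and, more substantively, the purely-linear and purely-quadratic border cases $\s_{[0,n-1]}^{\ell,0}$ and $\s_{[0,n-1]}^{\ell,\ell}$ for $\ell<n$. For the first border case, since all the $B^{(j)}$ commute (by \eqref{eq:important}) and each $B^{(j)}=\alpha^{-j}T_{-1}^{\mathcal H}-\alpha^j T_1^{\mathcal H}$, the sum $\s_{[0,n-1]}^{\ell,0}$ is literally the elementary symmetric polynomial $e_\ell$ evaluated at the commuting operators $B^{(0)},\dots,B^{(n-1)}$. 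Expanding in powers of $T_{-1}^{\mathcal H}$ and $T_1^{\mathcal H}$, the coefficient of $(T_{-1}^{\mathcal H})^a (T_1^{\mathcal H})^b$ with $a+b=\ell$ is, up to a combinatorial constant, an elementary symmetric polynomial of the roots of unity $\alpha^j$ of degree $< n$; such sums vanish for $a\ne b$ because $\sum_j \alpha^{\pm j}=0$ whenever $\alpha$ is a nontrivial $n$-th root of unity and $1\le |\text{exponent}| < n$. Only the ``balanced'' term $a=b=\ell/2$ survives (when $\ell$ is even), and since $T_{-1}^{\mathcal H}T_1^{\mathcal H}=\idty_{\mathcal H}$ this is a multiple of the identity; when $\ell$ is odd, $\s_{[0,n-1]}^{\ell,0}=0$. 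The $\s_{[0,n-1]}^{\ell,\ell}$ case is analogous using that the $C^{(j)}$ commute and $C^{(j)}=\alpha^{-2j}|\!\downarrow\rangle\langle\downarrow|\otimes T_{-1}+\alpha^{2j}|\!\uparrow\rangle\langle\uparrow|\otimes T_1$, so that products of $C$'s again produce coefficients that are elementary symmetric functions of $\alpha^{\pm 2j}$ of degree $<n$, hence vanish unless balanced.

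For the inductive step I would find a recursion expressing $\s_{[0,n-1]}^{\ell,m}$ in terms of lower-order symmetric polynomials, obtained by peeling off the factor carrying the largest superscript. Concretely, split the sum \eqref{eq:S-def} according to whether the last factor $A_{\eta_\ell}^{(j_\ell)}$ has $j_\ell=n-1$ or $j_\ell\le n-2$, and within the first alternative according to whether $\eta_\ell=0$ or $1$; this gives a relation of the schematic form
\begin{equation}
\s_{[0,n-1]}^{\ell,m}=\s_{[0,n-2]}^{\ell,m}+\s_{[0,n-2]}^{\ell-1,m}B^{(n-1)}+\s_{[0,n-2]}^{\ell-1,m-1}C^{(n-1)},
\end{equation}
and symmetrically one can peel off the first factor (smallest superscript $j_1$) instead. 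Since $\ell+m<n$ forces $\ell-1+m<n-1$ and $\ell+m-1<n-1$, the inductive hypothesis applies to the three terms on a shorter index interval $[0,n-2]$ — here one needs the translation-covariance fact that $\s_{[a,b]}^{\ell,m}$ depends on the interval only through its length, a consequence of \eqref{imp-1} (the periodicity $A^{(n+k)}=A^{(k)}$) together with the observation that conjugating all operators by $\mathcal D_n$ cyclically shifts superscripts, which is presumably packaged as ``Corollary~\ref{cor:s-D}'' referenced in the remark. Granting that, the recursion reduces $\s_{[0,n-1]}^{\ell,m}$ to (constant)$\cdot\idty + \s_{[0,n-2]}^{\ell-1,m}B^{(n-1)} + \s_{[0,n-2]}^{\ell-1,m-1}C^{(n-1)}$, and one then has to show the two remaining operator terms combine to a scalar.

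The main obstacle is precisely that last point: a product of a scalar-turned-out lower polynomial with a single $B^{(n-1)}$ or $C^{(n-1)}$ is manifestly \emph{not} scalar by itself, so the proof cannot be a naive one-variable peel-off. The real mechanism must be that summing over \emph{which} superscript is peeled, or equivalently keeping two recursions (peel largest and peel smallest) and combining them, produces cancellations governed again by $\sum_j \alpha^{\pm j}=0$. I expect the cleanest route is to prove simultaneously a slightly stronger statement by induction — not just ``$\s_{[0,n-1]}^{\ell,m}$ is scalar when $\ell+m<n$'' but an explicit formula (or explicit vanishing) for $\s_{[0,n-1]}^{\ell,m}$ as an operator for \emph{all} $\ell,m$, decomposed into its ``diagonal'' part and its off-diagonal shift contributions $(T_{-1}^{\mathcal H})^a(T_1^{\mathcal H})^b$ with $a\ne b$, tracking how the intertwining relation $B^{(j)}C^{(k)}=(C^{(k)})^*B^{(j)}$ from \eqref{eq:important} reorders factors. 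The vanishing of every unbalanced shift coefficient then follows because its scalar prefactor is an elementary symmetric polynomial in powers $\alpha^{\pm j}$, $0\le j\le n-1$, of total degree strictly less than $n$, which is $0$. The balanced coefficient, by contrast, telescopes via $T_{-1}^{\mathcal H}T_1^{\mathcal H}=\idty_{\mathcal H}$ to a genuine scalar multiple of the identity, completing the argument. I would carry out the bookkeeping of the $B$/$C$ reordering first (Lemma-level), then the root-of-unity vanishing, then assemble the induction.
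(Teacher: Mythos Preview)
Your handling of the border cases $m=0$ and $m=\ell$ is essentially fine (and in fact more direct than the paper's): since the $B^{(j)}$ (resp.\ $C^{(j)}$) commute, Newton's identities reduce $e_\ell$ to power sums $p_k=\sum_j(B^{(j)})^k$, and a short computation shows each $p_k$ with $k<n$ (resp.\ $2k<n$) is a scalar because $\sum_{j=0}^{n-1}\alpha^{cj}=0$ for $0<|c|<n$.

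The inductive step, however, has a genuine gap. Your peel-off recursion lands on polynomials over the \emph{shorter} interval $[0,n-2]$, but no induction hypothesis is available there. The theorem concerns only the full interval $[0,n-1]$ with $\alpha$ a primitive $n$-th root, and the analogous statement on $[0,n-2]$ is simply false: already $\s_{[0,n-2]}^{1,0}=\sum_{j=0}^{n-2}B^{(j)}=-B^{(n-1)}$ is not scalar. The ``translation covariance'' you invoke says only that $\s_{[0,n-2]}^{\ell,m}$ is unitarily conjugate (via $\mathcal D_n$) to $\s_{[1,n-1]}^{\ell,m}$, not to any full-interval object, so it does not rescue the scheme. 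Your fallback---strengthen the induction to an explicit shift-by-shift formula for all $(\ell,m)$, tracking the intertwining $B^{(j)}C^{(k)}=(C^{(k)})^*B^{(j)}$---is not yet a plan; you give no mechanism for organizing that bookkeeping, and it is exactly this non-commutativity that makes a naive coefficient chase infeasible.

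The paper sidesteps the problem entirely by never inducting on the interval. Its key reduction is Lemma~\ref{lem:idty}: if a banded operator of bandwidth $<n$ commutes with $\mathcal D_n$, it is diagonal (Lemma~\ref{Com-with-D}), and a separate symmetry argument upgrades ``diagonal'' to ``scalar''. So one only needs $[\s_{[0,n-1]}^{\ell,m},\mathcal D_n]=0$. Two recursions (Lemma~\ref{lem:rec-formulas})---peel the first factor, and conjugate by $\mathcal D_n$ then peel the last---both land on $\s_{[1,n-1]}$, and their difference (Corollary~\ref{cor:s-D}) reads
\[
[\s_{[0,n-1]}^{\ell,m},\mathcal D_n]=0\ \Longleftrightarrow\ [C,\s_{[1,n-1]}^{\ell-1,m-1}]+[B,\s_{[1,n-1]}^{\ell-1,m}]=0.
\]
The induction is then on $(\ell,m)$ \emph{within the fixed full interval}: once $\s_{[0,n-1]}^{\ell-1,m}$ and $\s_{[0,n-1]}^{\ell-1,m-1}$ are known to be scalar, the recursion $\s_{[1,n-1]}^{\ell-1,\cdot}=\s_{[0,n-1]}^{\ell-1,\cdot}-C\s_{[1,n-1]}^{\ell-2,\cdot}-B\s_{[1,n-1]}^{\ell-2,\cdot}$ lets you drop the scalar part from the commutators and reduce to the same combination one level lower. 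The sub-interval polynomials $\s_{[1,n-1]}^{\cdot,\cdot}$ are never required to be scalar themselves; only the particular commutator combinations above are shown to vanish. That is the missing idea in your proposal.
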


Theorem \ref{thm:main-step} implies that $[X,\s_{[0,n-1]}^{\ell,m}]=0$ for all $0\leq m\leq \ell\leq n$ such that $\ell+m\in[0,n-1]$. Thus, we read from (\ref{eq:norm})  and (\ref{eq:exact-long-formula}) that
\begin{equation}\label{eq:hope}
\frac{1}{N}\|\tau^{U_{t,n}}_N(X)-X\|\leq \frac{1}{n}\|[X,U_{t,n}^n]\|= \frac1n\, t^n\left\|
L_n
\right\|
\end{equation}
where
\begin{equation}
L_n=\sum_{k=0}^{n}r^{n-k}  t^k \sum_{\tiny\begin{array}{c}0\leq m\leq \ell\leq n\\  \ell+m=n+k\\
\end{array}
}\left[ X,\mathcal{S}_{[0,n-1]}^{\ell,m}\right].
\end{equation}
\begin{rem}
In the following, we present an argument that shows that $\|L_n\|\leq 3n\, 4^{n-1}$, and this is the ``four" in the velocity bound $(4t)^n$ in Theorem \ref{thm:main}. We think that this bound is not optimal and that $\|L_n\|=\mathcal{O}(n)$. This would lead to the result of Conjecture \ref{conj} above, but we could not prove a better bound.
\end{rem}
First, note that
\begin{equation}\label{eq:Com-bound-0}
\|L_n\|\leq \sum_{\tiny\begin{array}{c}
\ell,m\in\Z\\
\frac{n}{2}\leq \ell\leq n\\
n-\ell\leq m\leq \ell
\end{array}} \left\|\left[X, \mathcal{S}_{[0,n-1]}^{\ell,m}\right]\right\|
\end{equation}
where we observe that
\begin{equation}
\{(\ell,m)|\ 0\leq m\leq \ell\leq n,\ \ell+m\in[n,2n]\}=\{(\ell,m)|\ \frac{n}{2}\leq \ell\leq n,\ n-\ell\leq m\leq \ell\},
\end{equation}
which is clear from Figure \ref{figure:1}, see the shaded region.

\begin{figure}[h]
\begin{center}
\includegraphics[width=4in]{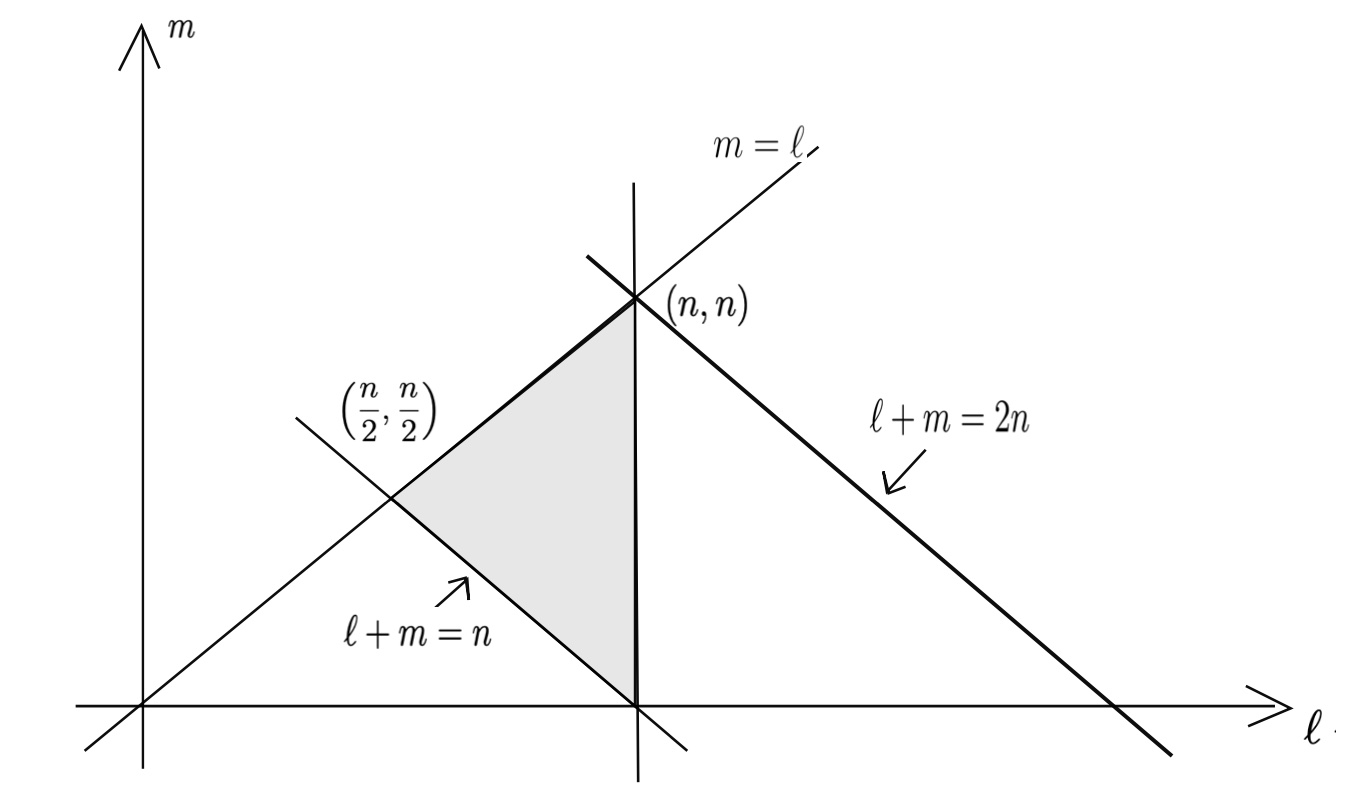}
\caption{The shaded region corresponds to the nonzero terms in (\ref{eq:exact-long-formula}) and the sum in (\ref{eq:Com-bound-0})}
\label{figure:1}
\end{center}
\end{figure}

We then use the following lemma that we prove at the end of this section.
\begin{lem}\label{lem:Com-bound}
For any $0\leq m\leq \ell\leq n$, we have the bound
\begin{equation}\label{eq:Com-bound}
\left\|\left[X, \mathcal{S}_{[0,n-1]}^{\ell,m}\right]\right\|\leq 2^{\ell-m}\ \ell \binom{n}{\ell}\binom{\ell}{m}.
\end{equation}
\end{lem}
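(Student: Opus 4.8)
\textbf{Proof plan for Lemma \ref{lem:Com-bound}.}

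The plan is to bound the commutator term by term. Each summand of $\s_{[0,n-1]}^{\ell,m}$ is an ordered product $P=\prod_{k=1}^{\ell}A_{\eta_k}^{(j_k)}$ of $\ell$ factors, $m$ of which are $C$-type operators and $\ell-m$ of which are $B$-type operators, with superscripts chosen from $\{0,\ldots,n-1\}$. By \eqref{S:number} there are exactly $\binom{n}{\ell}\binom{\ell}{m}$ such products, so it suffices to show $\|[X,P]\|\le 2^{\ell-m}\,\ell$ for each one, and then apply the triangle inequality. First I would record that each $A_\eta^{(j)}$ is a shift-type operator moving the position by exactly one unit: indeed from \eqref{def:gen-BC-U} both $B^{(j)}$ and $C^{(j)}$ are $\mathcal{H}$-linear combinations of $T_{\pm1}$-type shifts with unimodular coefficients. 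Consequently $P$ is a finite sum of operators each of which shifts the position coordinate by some integer of absolute value at most $\ell$; writing $P=\sum_\sigma P_\sigma$ where $P_\sigma$ shifts by $\sigma\in\{-\ell,\ldots,\ell\}$, we get $[X,P_\sigma]=\sigma\,P_\sigma$ (up to the boundary subtlety coming from $X=\mathrm{diag}(|j|)$ rather than $\mathrm{diag}(j)$, where $|{\,\cdot\,}|$ only helps: $\big||j+\sigma|-|j|\big|\le|\sigma|$), hence $\|[X,P]\|\le \ell\,\|P\|_{\text{something}}$ — but I need an operator-norm bound, so the cleaner route is: $\|[X,P]\|\le \max_{\sigma}|\sigma|\cdot(\text{number of nonzero }P_\sigma)\cdot\max_\sigma\|P_\sigma\|$ is wasteful; instead bound $\|[X,P]\|\le\sum_\sigma|\sigma|\,\|P_\sigma\|\le \ell\sum_\sigma\|P_\sigma\|$.

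So the crux reduces to bounding $\sum_\sigma\|P_\sigma\|$, i.e. the sum of operator norms of the homogeneous shift-components of a single product $P$ of $\ell$ factors. Here I would use that each $B^{(j)}=\alpha^{-j}T_{-1}^{\mathcal H}-\alpha^{j}T_1^{\mathcal H}$ splits into \emph{two} unimodular-coefficient single shifts, while each $C^{(j)}=\alpha^{-2j}|\!\downarrow\rangle\langle\downarrow|\otimes T_{-1}+\alpha^{2j}|\!\uparrow\rangle\langle\uparrow|\otimes T_1$ is a sum of two \emph{orthogonal} pieces (distinct spin projections), so expanding $P$ fully gives $2^{\ell-m}\cdot 2^{m}$ words, but the $2^m$ choices inside the $C$-factors, being governed by mutually orthogonal spin projectors, collapse: along any fixed ``spin path'' at most one choice survives at each $C$-factor once earlier spin constraints are fixed, and each surviving word is a product of partial isometries with unimodular coefficients, hence of norm $\le1$ and in fact is itself a single shift (a scalar multiple of some $T^{\mathcal H}_\sigma$ composed with a spin projection). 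Grouping the $\le 2^{\ell-m}$ surviving words by their net shift $\sigma$ and using that words with different spin-endpoint structure or different $\sigma$ act on ``orthogonal'' parts, one obtains $\sum_\sigma\|P_\sigma\|\le 2^{\ell-m}$. Combining, $\|[X,P]\|\le \ell\,2^{\ell-m}$, and summing over the $\binom{n}{\ell}\binom{\ell}{m}$ products $P$ gives \eqref{eq:Com-bound}.

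The main obstacle I anticipate is making the ``the $2^m$ binary choices inside $C$-factors collapse'' step fully rigorous: one must track how the $C^{(j)}=(\cdot)|\!\downarrow\rangle\langle\downarrow|\otimes T_{-1}+(\cdot)|\!\uparrow\rangle\langle\uparrow|\otimes T_1$ alternative interacts with the intervening $B$-factors (which flip spin via $T_{\pm1}^{\mathcal H}$, i.e. they shift in the combined index and thereby change the spin-parity of the site), so that indeed reading the word left to right, each $\langle\downarrow|$ or $\langle\uparrow|$ projection forces the spin arriving from the left, leaving genuinely at most one consistent branch per $C$-factor; equivalently, use the identity \eqref{eq:important}, $B^{(j)}C^{(k)}=(C^{(k)})^*B^{(j)}$, to move all $C$'s to one end and observe $C^{(k_1)}\cdots$ or $(C^{(k)})^*\cdots$ products of the two commuting families telescope. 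A safe fallback, if the collapse is messy to write cleanly, is the crude bound: expand $P$ into $2^{\ell-m}2^{m}=2^{\ell+m}\le 2^{2\ell}$ single-shift words of norm $\le1$, giving $\|[X,P]\|\le \ell\,2^{\ell+m}$ and hence a constant $4$ replaced by a larger one — but since the statement claims the sharper $2^{\ell-m}$, exploiting the orthogonality of the two $C$-summands is exactly what is needed, and that is where I would concentrate the care.
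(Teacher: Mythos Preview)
Your overall strategy (bound each ordered product $P=\prod_{k=1}^{\ell}A_{\eta_k}^{(j_k)}$ individually and then sum over the $\binom{n}{\ell}\binom{\ell}{m}$ products) matches the paper's, but your mechanism for bounding $\|[X,P]\|$ takes a more complicated route and, as written, has a genuine gap. The key inequality you aim for, $\sum_\sigma\|P_\sigma\|\le 2^{\ell-m}$, is \emph{false}. Take $\ell=m=2$ and $P=C^{(j_1)}C^{(j_2)}$: since each $C^{(j)}$ is spin-diagonal one gets
\[
P=\alpha^{-2(j_1+j_2)}\,|\!\downarrow\rangle\langle\downarrow|\otimes T_{-2}
+\alpha^{2(j_1+j_2)}\,|\!\uparrow\rangle\langle\uparrow|\otimes T_{2},
\]
so $\sum_\sigma\|P_\sigma\|=\|P_{-2}\|+\|P_2\|=2$, whereas $2^{\ell-m}=1$. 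The ``collapse'' of the $2^m$ $C$-branches that you describe does occur, but only \emph{after} one fixes the input spin sector: for each $s\in\{\uparrow,\downarrow\}$ the operator $Pa^{s}$ indeed decomposes into at most $2^{\ell-m}$ surviving words of norm $\le 1$, and then, using $[X,a^s]=0$ and that $Pa^{\uparrow}$, $Pa^{\downarrow}$ have orthogonal domains, $\|[X,P]\|=\max_s\|[X,P]a^s\|\le \ell\cdot 2^{\ell-m}$. So the idea is salvageable, but the argument as written conflates ``per spin sector'' with ``for the full $P$'' at exactly the point where it matters.

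The paper bypasses all of this shift--and--spin bookkeeping with a one-line application of the Leibniz rule, using only $\|B^{(j)}\|\le 2$, $\|C^{(j)}\|=1$, and $\|[X,A_\eta^{(j)}]\|=\|[X,A_\eta]\|=1$ (the conjugation by the diagonal unitary $\mathcal{D}_n^j$ commutes with $X$; the last equality is Lemma~\ref{lem:BC}):
\[
\Big\|\Big[X,\textstyle\prod_{k=1}^{\ell}A_{\eta_k}^{(j_k)}\Big]\Big\|
\;\le\;\sum_{r=1}^{\ell}\|[X,A_{\eta_r}]\|\prod_{k\ne r}\|A_{\eta_k}\|
\;\le\;\sum_{r=1}^{\ell}\prod_{k=1}^{\ell}\|A_{\eta_k}\|
\;\le\;\ell\cdot 2^{\ell-m}.
\]
This is considerably shorter and avoids the delicate step you flagged; I would recommend switching to it.
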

Use the bound (\ref{eq:Com-bound}) in (\ref{eq:Com-bound-0}) to find
\begin{eqnarray}
\|L_n\| &\leq&\sum_{\ell=0}^n \ell 2^{\ell}\binom{n}{\ell}\sum_{m=0}^\ell \binom{\ell}{m}2^{-m}\nonumber\\
&=& 3n\, 4^{n-1}
\end{eqnarray}
where we used the binomial sum identity and the elementary fact
\begin{equation}
\sum_{\ell=0}^n \ell \binom{n}{\ell}x^{\ell-1} =\frac{d}{dx}\sum_{\ell=0}^n \binom{n}{\ell}x^{\ell} =n(x+1)^{n-1}.
\end{equation}
This finishes the proof of the \emph{Step 1}, and shows that
\begin{equation}
\frac{1}{N}\|\tau^{U_{t,n}}_N(X)-X\|\leq \frac{1}{n}\|[X,U_{t,n}^n]\|= \frac34\, (4t)^n,
\end{equation}
for any $N$ being a multiple of the period $n$.

It remains to prove Lemma \ref{lem:Com-bound}.
\begin{proof}[Proof of Lemma \ref{lem:Com-bound}]
First recall that for any $0\leq m\leq \ell\leq n$,
\begin{equation}\label{eq:S-def-recall}
[X,\s_{[0,n-1]}^{\ell,m}]=\sum_{
\tiny\begin{array}{c}
0\leq j_1<\ldots<j_\ell\leq n-1\\
\eta=(\eta_1,\ldots,\eta_\ell)\in\{0,1\}^\ell\\
|\eta|=m
\end{array}
}\,\,  \left[X,\prod_{k=1}^\ell A_{\eta_k}^{(j_k)}\right],\quad \text{ where }A_\eta=C^\eta B^{1-\eta}, \, \eta\in\{0,1\}.
\end{equation}
Observe that for any $j\in\Z$, $C^{(j)}=\mathcal{D}_n^j C (\mathcal{D}_n^{j})^*$ is unitary (recall that $C$ is an orthogonal operator, it is a permutation). Hence, $\|C^{(j)}\|=1$. It is also clear that $\|B^{(j)}\|=\|B\|\leq 2$ for any $j\in\Z$. That is
\begin{equation}\label{eq:nom-A}
\|A_{\eta_k}^{(j_k)}\|=\|A_{\eta_k}\|\ \begin{cases}
=1 &\text{ if } \eta_k=1\\
\leq 2 & \text{ if } \eta_k=0\\
\end{cases}.
\end{equation}
Next we use the fact (\ref{eq:XBC}) in Lemma \ref{lem:BC} below, that $\|[X,B]\|=\|[X,C]\|=1$, to see that
\begin{equation}
\|[X,A_{\eta_k}^{(j_k)}]\|=\|[X,\mathcal{D}_n^{j_k} A_{\eta_k}(\mathcal{D}_n^*)^{j_k}]\|=\|[X,A_{\eta_k}]\|= 1.
\end{equation}
Now, we expand the commutator in (\ref{eq:S-def-recall}) using Leibniz rule
\begin{equation}
\left[X,\prod_{k=1}^\ell A_{\eta_k}^{(j_k)}\right]=\sum_{r=0}^{\ell-1} \left(\prod_{k=1}^r A_{\eta_k}^{(j_k)}\right)[X,A_{\eta_{r+1}}^{(j_{r+1})}]\left(\prod_{k=r+2}^\ell A_{\eta_k}^{(j_k)}\right),
\end{equation}
where we recall our convention for the product of operators (\ref{def:prod}).

Take the norm while choosing  $\eta=(\eta_1,
\ldots,\eta_\ell)\in\{0,1\}^\ell$ with $|\eta|=m$ to obtain
\begin{eqnarray}
\left\|\left[X,\prod_{k=1}^\ell A_{\eta_k}^{(j_k)}\right]\right\|&\leq&
 \sum_{r=0}^{\ell-1} \|[X,A_{\eta_{r+1}}^{(j_{r+1})}]\|\prod_{k,\ k\neq r+1}\|A_{\eta_k}^{(j_k)}\|\nonumber\\
 &\leq&  \sum_{r=0}^{\ell-1} \prod_{k=1}^\ell\|A_{\eta_k}\|\nonumber \\
 &\leq& \ell\cdot 2^{\ell-m}
\end{eqnarray}
where we used (\ref{eq:nom-A}) in the last step with the fact that there are $(\ell-m)$ of $B$'s in the product.
We have then
\begin{eqnarray}
\left\|\left[X, \mathcal{S}_{[0,n-1]}^{\ell,m}\right]\right\|
&\leq&
\ell\ 2^{\ell-m} \sum_{
\tiny\begin{array}{c}
0\leq j_1<\ldots<j_\ell\leq n-1\\
\eta=(\eta_1,\ldots,\eta_\ell)\in\{0,1\}^\ell\\
|\eta|=m
\end{array}
} 1 \nonumber\\
&=&
2^{\ell-m}\ \ell \binom{n}{\ell}\binom{\ell}{m},
\end{eqnarray}
see (\ref{S:number}).
\end{proof}

\section{A proof of Theorem \ref{thm:main-step}}\label{sec:Induction}

This section includes the proof of the main technical result, Theorem \ref{thm:main-step}. This is done in the following two subsection. Section \ref{subsec:proof-1}  includes some preparatory results that will be crucial in the proof by induction in the Section \ref{subsec:proof-2}. In short, Lemma \ref{lem:idty} below adds to Lemma \ref{Com-with-D} and shows that if $\s_{[1,n-1]}^{\ell,m}$ commutes with the periodic $n$-local field $\mathcal{D}_n$  defined in (\ref{def:Dn}) for a certain values of $\ell$ and $m$ then $\s_{[1,n-1]}^{\ell,m}$ is not only diagonal (result of Lemma \ref{Com-with-D}) but it is a constant multiple of the identity. Lemma \ref{lem:rec-formulas} and Corollary \ref{cor:s-D} present some recursive formulas that will be essential in the inductive assumptions in Section \ref{subsec:proof-2}.

\subsection{Some properties of the symmetric polynomial $\s_{[0,n-1]}^{\ell,m}$}\label{subsec:proof-1}

In this section, we present some properties that we will use to prove Theorem \ref{thm:main-step}. 

The key step in the proof of Theorem \ref{thm:main-step} is to prove that $\s_{[0,n-1]}^{\ell,m}$ commutes with $\mathcal{D}_n$, and since $\s_{[0,n-1]}^{\ell,m}$ has bandwidth $<n$ when $\ell+m<n$, then this means that it is a diagonal matrix, see Lemma \ref{Com-with-D}. In fact, the following Lemma shows a stronger conclusion if $\s_{[0,n-1]}^{\ell,m}$ commutes with $\mathcal{D}_n$. Let us define \begin{equation}\label{def:Xi}
\Xi_n:=\{(\ell,m)\in\Z^2;\, 0\leq m\leq \ell \leq n \text{ such that }\ell+m<n\}.
\end{equation}
This is the set that corresponds to $\s_{[0,n-1]}^{\ell,m}$ with bandwidth $<n$.

\begin{lem}\label{lem:idty}
If $[\s_{[0,n-1]}^{\ell,m},\mathcal{D}_n]=0$ and $(\ell,m)\in\Xi_n$ then $\s_{[0,n-1]}^{\ell,m}$ is a constant multiple of the identity, that is, $\s_{[0,n-1]}^{\ell,m}$ commutes with any operator on $\mathcal{H}$.
\end{lem}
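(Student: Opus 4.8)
The plan is to exploit the very rigid structure of $\mathcal{D}_n$: it is a block-diagonal multiplication operator whose diagonal entries within one period $\{0,1,\dots,n-1\}$ are the \emph{distinct} numbers $1,\alpha,\dots,\alpha^{n-1}$. So commuting with $\mathcal{D}_n$ is much stronger than being diagonal once we know the operator has small bandwidth. Concretely, suppose $\s:=\s_{[0,n-1]}^{\ell,m}$ commutes with $\mathcal{D}_n$ and $(\ell,m)\in\Xi_n$. Since the bandwidth of $\s$ is at most $\ell+m<n$, the matrix entry $\langle\delta_j,\s\,\delta_k\rangle$ vanishes whenever $|j-k|\geq n$. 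On the other hand, from $\mathcal{D}_n\s=\s\mathcal{D}_n$ we get $(\alpha^j-\alpha^k)\langle\delta_j,\s\,\delta_k\rangle=0$ for all $j,k$. If $0<|j-k|<n$ then $\alpha^j\neq\alpha^k$ because $1,\alpha,\dots,\alpha^{n-1}$ are distinct (this is exactly the role of $\gcd(k,n)=1$ in \eqref{def:alpha}, and the reason property (b) forces $c\neq1$). Combining the two, $\langle\delta_j,\s\,\delta_k\rangle=0$ for \emph{all} $j\neq k$, i.e.\ $\s$ is diagonal — so far this just recovers Lemma~\ref{Com-with-D}.

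The extra input needed to upgrade ``diagonal'' to ``constant multiple of the identity'' is translation structure. I would observe that $\s_{[0,n-1]}^{\ell,m}$ is \emph{periodic} as an operator on $\ell^2(\Z)$ with period $n$: each $B^{(j)},C^{(j)}$ depends on $j$ only through $\alpha^{\pm j}$ or $\alpha^{\pm 2j}$, which are $n$-periodic in $j$ (recall $A^{(n+k)}=A^{(k)}$ from \eqref{imp-1}), and summing over all strictly increasing index tuples $0\le j_1<\dots<j_\ell\le n-1$ produces a symmetric object. More precisely, let $S$ be the shift by $n$ on $\mathcal{H}$, i.e.\ $S=T_n^{\mathcal H}$ in the notation of \eqref{def:TH}. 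Then I expect the relation $S\,\s_{[0,n-1]}^{\ell,m}\,S^{-1}=\s_{[n,2n-1]}^{\ell,m}=\s_{[0,n-1]}^{\ell,m}$, where the first equality is just relabelling superscripts $j\mapsto j+n$ under conjugation by $S$ and the second uses the $n$-periodicity $A^{(j+n)}=A^{(j)}$ together with the fact that the index set $[n,2n-1]$ is a complete period just like $[0,n-1]$. Hence $\s$ commutes with $S$. A diagonal multiplication operator $\diag(d_j)_{j\in\Z}$ that commutes with the shift by $n$ must have $d_{j}=d_{j+n}$ for all $j$; combined with commuting with $\mathcal{D}_n$ (which, as shown above, already forces off-diagonal entries to vanish) I still only get $n$-periodic diagonal entries. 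To collapse these $n$ values to one, I would use that $\s$ also commutes with $\mathcal{D}_n$ itself \emph{as a nontrivial conjugation}: for any $\delta_j$, conjugating $\s$ by $\mathcal{D}_n^p$ permutes nothing (it is diagonal) but the relation $\mathcal{D}_n^p \s (\mathcal{D}_n^p)^* = \s$ is automatic; instead the right move is to note that within one period the diagonal entries $d_0,\dots,d_{n-1}$ of $\s$, being the same symmetric polynomial evaluated in operators whose only $j$-dependence sits in the roots of unity, are related by the intertwining $\mathcal{D}_n^{-1}\s\,\mathcal{D}_n = \s$ shifted: $d_{j+1}$ equals $d_j$ computed with $\alpha$ replaced consistently, but by construction $\s_{[0,n-1]}^{\ell,m}$ is invariant under the cyclic relabelling $j\mapsto j+1\pmod n$ of the superscript window, which sends $d_j\mapsto d_{j+1}$. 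So all $d_j$ are equal, and $\s$ is a scalar.

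Let me restate the clean version I would actually write: (i) $\ell+m<n\Rightarrow$ bandwidth $<n$; (ii) $[\s,\mathcal{D}_n]=0$ plus distinctness of $1,\alpha,\dots,\alpha^{n-1}$ $\Rightarrow$ $\s$ diagonal, $\s=\diag(d_j)$; (iii) the conjugation $T_1^{\mathcal H}\,\s\,(T_1^{\mathcal H})^*$ shifts the superscript window $[0,n-1]$ to $[1,n]$, and by $A^{(n)}=A^{(0)}$ this is again $\s_{[0,n-1]}^{\ell,m}=\s$ as an \emph{equality of the symmetric sums} — wait, this is false in general since $\s$ need not commute with the bare shift $T_1^{\mathcal H}$. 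The correct statement is that $\s$ commutes with $\mathcal{D}_n T_1^{\mathcal H}$ up to the substitution built into $B^{(j)},C^{(j)}$; equivalently $d_{j+1}=d_j$ follows because conjugating by $\mathcal{D}_n$ maps $B^{(j)}\mapsto B^{(j+1)}$, $C^{(j)}\mapsto C^{(j+1)}$ (directly from \eqref{def:BC:j}: $\mathcal{D}_n B^{(j)}\mathcal{D}_n^* = \alpha^{-(j+1)}T_{-1}^{\mathcal H}-\alpha^{j+1}T_1^{\mathcal H}=B^{(j+1)}$, similarly for $C$), hence $\mathcal{D}_n\s_{[0,n-1]}^{\ell,m}\mathcal{D}_n^* = \s_{[1,n]}^{\ell,m} = \s_{[0,n-1]}^{\ell,m}$ by periodicity of the superscripts. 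But the left side also equals $\s$ since $\s$ is diagonal and $\mathcal{D}_n$ is diagonal, so they trivially commute — so this identity gives \emph{no new information at the matrix level} and instead is the key \emph{structural} identity showing the symmetric sum is literally unchanged when the window slides by one. Reading off diagonal entries: $d_j = \langle\delta_j,\s_{[0,n-1]}^{\ell,m}\delta_j\rangle = \langle\delta_j, \mathcal{D}_n\s_{[-1,n-2]}^{\ell,m}\mathcal{D}_n^*\delta_j\rangle = \langle\delta_j,\s_{[-1,n-2]}^{\ell,m}\delta_j\rangle$, and shifting the window down by one more, and using that $\s_{[a,a+n-1]}^{\ell,m}$ is independent of $a$, one gets $d_j$ independent of $j$. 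That independence of $a$ is exactly the $n$-periodicity $A^{(a+n)}=A^{(a)}$. So $\s=d_0\,\idty_{\mathcal H}$.

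The main obstacle, and the step I would write most carefully, is step (iii): pinning down precisely why the diagonal entries $d_j$ are independent of $j$ rather than merely $n$-periodic. The honest route is that $d_j$ is a symmetric polynomial in the finite data $\{\alpha^{j_1},\dots,\alpha^{j_\ell}\}$-type quantities, but actually $d_j$ depends on $j$ through an overall factor coming from the mismatch between $B^{(j)}C^{(k)}=(C^{(k)})^*B^{(j)}$ reorderings — so I would want Corollary~\ref{cor:s-D} / Lemma~\ref{lem:rec-formulas} (the recursive formulas promised in the excerpt) to control exactly how a diagonal entry of $\s_{[0,n-1]}^{\ell,m}$ transforms under the window-shift, confirming it is a pure relabelling with no leftover $j$-dependent scalar. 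Granting that, the lemma follows; the bandwidth bound and the distinctness argument are routine. I would also double-check the edge cases $m=0$ (only $B$'s, so $\s_{[0,n-1]}^{\ell,0}$ is an elementary symmetric polynomial in the commuting operators $B^{(0)},\dots,B^{(n-1)}$, manifestly handled) and $\ell=m$ (only $C$'s, commuting, same story), which serve as sanity checks that the scalar value $d_0$ is what one expects.
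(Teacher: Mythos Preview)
Your argument for diagonality (steps (i)–(ii)) is correct and matches the paper: bandwidth $\ell+m<n$ together with $[\s,\mathcal{D}_n]=0$ and the distinctness of $1,\alpha,\dots,\alpha^{n-1}$ kills all off-diagonal entries, exactly as in Lemma~\ref{Com-with-D}.

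The gap is in step (iii), and it is genuine. Your window-shift idea is circular. The identity $\mathcal{D}_n\,\s_{[0,n-1]}^{\ell,m}\,\mathcal{D}_n^*=\s_{[1,n]}^{\ell,m}$ is correct, and invoking $A^{(n)}=A^{(0)}$ to rewrite $\s_{[1,n]}^{\ell,m}$ is fine, but the statement ``$\s_{[a,a+n-1]}^{\ell,m}$ is independent of $a$'' is \emph{equivalent} to the hypothesis $[\s,\mathcal{D}_n]=0$, not an extra fact. Once $\s$ is known to be diagonal, it automatically commutes with every diagonal operator, so commuting with $\mathcal{D}_n$ (or any power of it) imposes no constraint whatsoever on the diagonal entries $d_j$. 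Your chain $d_j=\langle\delta_j,\s_{[0,n-1]}\delta_j\rangle=\langle\delta_j,\s_{[-1,n-2]}\delta_j\rangle$ just says $d_j=d_j$. To force $d_j=d_{j+1}$ you need $\s$ to commute with something that \emph{shifts} diagonal entries, i.e.\ with $T_{\pm1}^{\mathcal H}$, and as you yourself note, $C^{(k)}$ does not commute with $T_1^{\mathcal H}$ (rather $T_{-1}^{\mathcal H}C^{(k)}=(C^{(k)})^*T_{-1}^{\mathcal H}$), so there is real work to do. The recursive formulas in Lemma~\ref{lem:rec-formulas} will not rescue this: they relate $\s_{[0,n-1]}^{\ell,m}$ to $\s_{[1,n-1]}^{\ell',m'}$ with \emph{smaller} parameters and are used for the induction in Theorem~\ref{thm:main-step}, not for the passage from diagonal to scalar here.

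The paper's proof supplies exactly the missing mechanism, and it is structurally different from what you propose. First, commuting with $T_{-2}^{\mathcal H}=\idty_2\otimes T_{-1}$ (which \emph{does} commute with every $B^{(j)},C^{(j)}$) gives $d_{j}=d_{j+2}$, so at most two values survive: $\s=\gamma_1 a^\uparrow+\gamma_2 a^\downarrow$. To collapse these, the paper splits by the parity of $\ell-m$. If $\ell-m$ is odd, the intertwining $a^\uparrow B^{(j)}=B^{(j)}a^\downarrow$ forces $a^\uparrow\s=\s a^\downarrow$, which combined with diagonality gives $\s=0$. If $\ell-m$ is even, the paper shows $[\s,T_{-1}^{\mathcal H}]=0$ by proving the diagonal entries of $\s$ and of the ``adjoint-twisted'' sum $\hat\s$ (obtained from $T_{-1}^{\mathcal H}\s=\hat\s\,T_{-1}^{\mathcal H}$) agree; this uses the change of variables $(j_1,\dots,j_\ell)\mapsto(-j_\ell,\dots,-j_1)$ together with $A^{(-j)}=\overline{A^{(j)}}$ and shift-invariance of diagonal entries under $j_k\mapsto j_k+c$. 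None of this is captured by sliding the superscript window.
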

While it is needed only to show that $\s_{[0,n-1]}^{\ell,m}$ is a diagonal operator when $(\ell,m)\in\Xi_n$ to prove the main result, we will need the stronger result of Lemma \ref{lem:idty} in an inductive proof (in the inductive assumption) to show that $[\s_{[0,n-1]}^{\ell,m},\mathcal{D}_n]=0$ for all $(\ell, m)\in\Xi_n$ in Theorem \ref{thm:main-step}.
\begin{proof}
It is clear that $\s_{[0,n-1]}^{\ell,m}$ is a banded operator with bandwidth at most $2m+(\ell-m)=\ell+m$. If $[\s_{[0,n-1]}^{\ell,m},\mathcal{D}_n]=0$ and $\ell+m<n$ then Lemma \ref{Com-with-D} gives directly that all non-diagonal elements  $\langle\delta_j,\s_{[0,n-1]}^{\ell,m}\delta_k\rangle$, $j \not= k$ are zeros. 

This proves that $\s_{[0,n-1]}^{\ell,m}$ is diagonal whenever $(\ell, m)\in\Xi_n$ (given that $[\s_{[0,n-1]}^{\ell,m},\mathcal{D}_n]=0$). In the following we prove that 
\begin{equation}
\s_{[0,n-1]}^{\ell,m}=\begin{cases}
\gamma\idty_\mathcal{H} & \text{ if }\ell-m\text{ is even}\\
0 &  \text{ if }\ell-m\text{ is odd}
\end{cases}, \text{ where }\gamma\in\C.
\end{equation}

\noindent\underline{Case 1:} $\ell-m$ is odd.\\
Define the diagonal operators on $\mathcal{H}=\C^2\otimes \ell^2(\Z)$
\begin{equation}\label{def:a}
a^{\uparrow}:=|\uparrow\rangle\langle\uparrow|\otimes \idty_\Z \text{ and }
a^{\downarrow}:=|\downarrow\rangle\langle\downarrow|\otimes \idty_\Z 
\end{equation}
and observe that $a^\uparrow a^\downarrow=0$ and $a^\uparrow+a^\downarrow=\idty_{\mathcal{H}}$. A direct inspection yields that $a^\uparrow$ and $a^\downarrow$ commute with $C^{(j)}$, i.e., 
\begin{equation}
[a^\uparrow,C^{(j)}]=[a^{\downarrow}, C^{(j)}]=0,
\end{equation}
and they satisfy the intertwining relation
\begin{equation}
a^\uparrow B^{(j)}=B^{(j)}a^{\downarrow},  \text{ and }
 a^{\downarrow} B^{(j)} = B^{(j)} a^\uparrow.
\end{equation}
Next observe that since $\s_{[0,n-1]}^{\ell,m}$ with odd $\ell-m$ corresponds to  an odd number of $B$'s, then it is easy to see that
\begin{equation}
a^\uparrow \s_{[0,n-1]}^{\ell,m}=\s_{[0,n-1]}^{\ell,m} a^{\downarrow}=a^{\downarrow}\s_{[0,n-1]}^{\ell,m}
\end{equation}
where the last equality follows from our assumption that $\s_{[0,n-1]}^{\ell,m}$ is diagonal (for $(\ell, m)\in\Xi_n$). This shows directly that $\s_{[0,n-1]}^{\ell,m}=0$.

\noindent\underline{Case 2:} $\ell-m$ is even.
\begin{rem}
Observe that $T_{-2}^{\mathcal{H}}=\idty_2\otimes T_{-1}$ commutes with both $B^{(j)}$ and $C^{(k)}$ operators ($T_{-2}^{\mathcal{H}}$ is the shift operator on $\mathcal{H}$, see (\ref{def:TH})), i.e.,
$
[T_{-2}^{\mathcal{H}},B^{(j)}]=[T_{-2}^{\mathcal{H}},C^{(k)}]=0 \text{  for any }j,k\in\Z.
$
Hence we conclude that $
[T_{-2}^{\mathcal{H}}, \s_{[0,n-1]}^{\ell,m}]=0$,
meaning that $\langle \delta_j,\s_{[0,n-1]}^{\ell,m}\delta_j\rangle=\langle \delta_{j+2},\s_{[0,n-1]}^{\ell,m}\delta_{j+2}\rangle$ for all $j\in\Z$, which says that $\s_{[0,n-1]}^{\ell,m}=\gamma_1 a^\uparrow+\gamma_2 a^{\downarrow}$, for some $\gamma_1,\gamma_2\in\C$, here $a^\uparrow$ and $a^\downarrow$ are the diagonal operators defined in (\ref{def:a}). This short argument does not prove the Theorem, and instead, we need to show that $[\s_{[0,n-1]}^{\ell,m},T_{-1}^{\mathcal{H}}]=0$, which is much more involved.
\end{rem}

Our main goal is to prove that $\s_{[0,n-1]}^{\ell,m}$ commutes with $T_{-1}^{\mathcal{H}}$ noting that
\begin{equation}
[T_{-1}^{\mathcal{H}},B^{(j)}]=0, \text{ but }T_{-1}^{\mathcal{H}}C^{(k)}=(C^{(k)})^* T_{-1}^{\mathcal{H}}.
\end{equation}
$[\s_{[0,n-1]}^{\ell,m},T_{-1}^{\mathcal{H}}]=0$ implies that $\langle \delta_j,\s_{[0,n-1]}^{\ell,m}\delta_j\rangle=\langle \delta_{j+1},\s_{[0,n-1]}^{\ell,m}\delta_{j+1}\rangle$ for all $j\in\Z$, which is the desired result.

Recall that $\s_{[0,n-1]}^{\ell,m}$ can be written as 
\begin{equation}
\s_{[0,n-1]}^{\ell,m}=\sum_{
\tiny\begin{array}{c}
0\leq j_1<\ldots<j_\ell\leq n-1\\
\eta=(\eta_1,\ldots\eta_\ell)\in\{0,1\}^\ell;\\
|\eta|=m
\end{array}
}\, \prod_{k=1}^\ell A_{\eta_k}^{(j_k)},\text{ where }A_{\eta_k}:= C^{\eta_k} B^{1-\eta_k}.
\end{equation}
Then, in the case when there is an even number of $B$'s
\begin{equation}\label{eq:T-1-S}
T_{-1}\s_{[0,n-1]}^{\ell,m}=\hat\s_{[0,n-1]}^{\ell,m} T_{-1}, 
\end{equation}
where 
\begin{equation}\label{def:S-hat}
\hat\s_{[0,n-1]}^{\ell,m}:=\sum_{\tiny\begin{array}{c}
0\leq j_1<\ldots<j_\ell\leq n-1\\
\eta=(\eta_1,\ldots\eta_\ell)\in\{0,1\}^\ell;\\
|\eta|=m
\end{array}}\, \prod_{k=1}^\ell (A_{\eta_k}^{(j_k)})^*.
\end{equation}
In (\ref{def:S-hat}) we used the fact that $(B^{(j)})^*=-B^{(j)}$ for all $j\in\Z$. Observe that since $\s_{[0,n-1]}^{\ell,m}$ is given (assumed) to be a diagonal operator then 
\begin{equation}\label{eq:T-1-S-2}
T_{-1}\s_{[0,n-1]}^{\ell,m}=(\s_{[0,n-1]}^{\ell,m})^{(\uparrow)}T_{-1}
\end{equation}
where $(\cdot)^{(\uparrow)}$ denotes the ``up'' shift by one, i.e.,
\begin{equation}
(\s_{[0,n-1]}^{\ell,m})^{(\uparrow)}:=\sum_{j\in\Z} \langle\delta_j,\s_{[0,n-1]}^{\ell,m}\delta_j\rangle|\delta_{j-1}\rangle\langle\delta_{j-1}|.
\end{equation}
So, (\ref{eq:T-1-S}) and (\ref{eq:T-1-S-2}) show that $\hat\s_{[0,n-1]}^{\ell,m}=(\s_{[0,n-1]}^{\ell,m})^{(\uparrow)}$, and hence $\hat\s_{[0,n-1]}^{\ell,m}$ is also diagonal. Thus, it is enough to show that the diagonal entries of $\hat\s_{[0,n-1]}^{\ell,m}$ and $\s_{[0,n-1]}^{\ell,m}$ are equal. This follows from the following claim, then from the symmetry with respect to $\eta\in\{0,1\}^\ell$.
\begin{cl}
For every fixed $\eta=(\eta_1,\ldots,\eta_\ell)\in\{0,1\}^\ell$ and $k\in \Z$,
\begin{equation}\label{eq:diag-entries}
\sum_{0\leq j_1<\ldots<j_\ell\leq n-1} \left\langle \delta_k, A_{\eta_1}^{(j_1)}A_{\eta_2}^{(j_2)}\ldots A_{\eta_\ell}^{(j_\ell)}\delta_k\right\rangle= \sum_{0\leq j_1<\ldots<j_\ell\leq n-1} \left\langle \delta_k, (A_{\eta_\ell}^{(j_1)})^*(A_{\eta_{\ell-1}}^{(j_2)})^*\ldots (A_{\eta_1}^{(j_\ell)})^* \delta_k\right\rangle.
\end{equation}
\end{cl}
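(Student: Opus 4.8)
The claim asserts an identity between two sums of diagonal matrix elements indexed by increasing tuples $0\le j_1<\dots<j_\ell\le n-1$. The plan is to exhibit an explicit involution on the index set of tuples that matches each term on the left with exactly one term on the right, contributing the same scalar. The natural candidate is the order-reversal map $(j_1,\dots,j_\ell)\mapsto(j_1',\dots,j_\ell')$ defined so that the \emph{set} $\{j_1,\dots,j_\ell\}$ is preserved but the superscripts get re-paired with the reversed $\eta$-string; concretely, the bijection is simply relabelling, since on the right-hand side the operator carrying $\eta_1$ sits in position $\ell$ (it is $(A^{(j_\ell)}_{\eta_1})^*$) while on the left it sits in position $1$. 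So I would first rewrite the right-hand side, using that the tuple $(j_1,\dots,j_\ell)$ ranges over the same set as its reverse, to bring it into the form $\sum \langle\delta_k, (A^{(j_\ell)}_{\eta_\ell})^*\cdots(A^{(j_1)}_{\eta_1})^*\delta_k\rangle$ — i.e., the product of adjoints taken in the reverse order, which is exactly $\bigl(A^{(j_1)}_{\eta_1}\cdots A^{(j_\ell)}_{\eta_\ell}\bigr)^*$.

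Thus the RHS equals $\sum_{j_1<\dots<j_\ell}\langle\delta_k,\bigl(\prod_{r=1}^\ell A^{(j_r)}_{\eta_r}\bigr)^*\delta_k\rangle = \sum_{j_1<\dots<j_\ell}\overline{\langle\delta_k,\prod_{r=1}^\ell A^{(j_r)}_{\eta_r}\delta_k\rangle}$. So the claim reduces to showing that for each fixed $\eta$ and $k$, the number $S:=\sum_{0\le j_1<\dots<j_\ell\le n-1}\langle\delta_k,\prod_{r=1}^\ell A^{(j_r)}_{\eta_r}\delta_k\rangle$ is real. Here I would use the structure of the $A^{(j)}_\eta$ operators: since $\eta$ has $\ell-m$ entries equal to $0$ (the $B$'s) and $\ell-m$ is even in Case 2, and since each $B^{(j)}=\alpha^{-j}T_{-1}^{\mathcal H}-\alpha^{j}T_1^{\mathcal H}$, each $C^{(j)}$ is a permutation with phase, one computes $\langle\delta_k,\prod_r A^{(j_r)}_{\eta_r}\delta_k\rangle$ explicitly. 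A single off-diagonal step must return to $\delta_k$, so the word $\prod_r A^{(j_r)}_{\eta_r}$ must contain equal numbers of ``right'' and ``left'' elementary shifts; the resulting matrix element is a product of $\pm 1$'s (from $B$) times a product of $\alpha$-phases whose net exponent is $\pm(\sum_{\text{right}}j - \sum_{\text{left}}j)$ times a universal sign. The key observation is then a pairing \emph{within} the $j$-sum: the map $(j_1,\dots,j_\ell)\mapsto(n-1-j_\ell,\dots,n-1-j_1)$ sends each admissible increasing tuple to another one and negates the net $\alpha$-exponent (modulo the shift by a constant $(n-1)(\#\text{right}-\#\text{left})$, which vanishes because the walk returns to $\delta_k$), so it conjugates the summand. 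Hence $S=\bar S$.

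\textbf{Main obstacle.} The bookkeeping in the middle step is where the real work lies: I must track precisely, for a given word $A_{\eta_1}\cdots A_{\eta_\ell}$ acting on $\delta_k$, which elementary shifts ($T_{\pm1}$ on $\ell^2(\Z)$, the spin-flips in $B$, the spin-diagonal phases in $C$) compose to give a nonzero diagonal contribution, and to verify that the accompanying sign is real and that the accumulated $\alpha$-phase is exactly reversed under the reflection $j\mapsto n-1-j$ of the index tuple. The relations in \eqref{eq:important} — that the $B^{(j)}$ commute among themselves, the $C^{(j)}$ commute among themselves, and $B^{(j)}C^{(k)}=(C^{(k)})^*B^{(j)}$ — are what make the word's action tractable and what make the adjoint $\bigl(\prod_r A^{(j_r)}_{\eta_r}\bigr)^*$ expressible again in terms of $B$'s and (adjointed) $C$'s in a controlled way; I expect these identities to be the linchpin that converts the formal adjoint-plus-reversal manipulation into the concrete phase cancellation. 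Once the reflection symmetry of the $j$-sum is established, realness of $S$, and hence the claimed equality \eqref{eq:diag-entries}, follows immediately.
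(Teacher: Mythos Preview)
Your first manoeuvre is where the argument breaks. On the right-hand side the product is
\[
(A_{\eta_\ell}^{(j_1)})^*(A_{\eta_{\ell-1}}^{(j_2)})^*\cdots(A_{\eta_1}^{(j_\ell)})^*
=\bigl(A_{\eta_1}^{(j_\ell)}A_{\eta_2}^{(j_{\ell-1})}\cdots A_{\eta_\ell}^{(j_1)}\bigr)^*,
\]
whose diagonal element is $\overline{\langle\delta_k,A_{\eta_1}^{(j_\ell)}\cdots A_{\eta_\ell}^{(j_1)}\delta_k\rangle}$: the $j$-indices sit in \emph{reverse} order against the same $\eta$-string. A plain relabelling of the summation variable cannot turn this into $\overline{\langle\delta_k,A_{\eta_1}^{(j_1)}\cdots A_{\eta_\ell}^{(j_\ell)}\delta_k\rangle}$, because reversing an increasing tuple gives a decreasing one and the factors do not commute. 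So ``$\mathrm{RHS}=\overline S$'' is not a free move; it is equivalent in strength to the claim you are trying to prove. For the same reason your reflection $(j_1,\ldots,j_\ell)\mapsto(n-1-j_\ell,\ldots,n-1-j_1)$ does not yield $S=\overline S$ termwise: tracking the indices honestly (using $A^{(n-1-j)}=\mathcal D_n^{\,n-1}\overline{A^{(j)}}(\mathcal D_n^{\,n-1})^*$ and diagonality of $\mathcal D_n$) gives
\[
\bigl\langle\delta_k,\,A_{\eta_1}^{(n-1-j_\ell)}\cdots A_{\eta_\ell}^{(n-1-j_1)}\delta_k\bigr\rangle
=\overline{\bigl\langle\delta_k,\,A_{\eta_1}^{(j_\ell)}\cdots A_{\eta_\ell}^{(j_1)}\delta_k\bigr\rangle},
\]
again with the $j$'s reversed. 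Your phase bookkeeping tacitly assumed the sequence $(c_r\sigma_r)_r$ is palindromic, which fails for generic $\eta$ and generic closed paths.

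The fix is immediate once you notice that this last display \emph{is} the claim: summing it over increasing tuples, the left side is $\mathrm{LHS}$ and the right side is exactly the generic term of $\mathrm{RHS}$, so $\mathrm{LHS}=\mathrm{RHS}$ follows directly, with no detour through $S=\overline S$ and no path expansion. This is precisely the paper's proof, phrased as the substitution $(j_1,\ldots,j_\ell)\mapsto(-j_\ell,\ldots,-j_1)$ together with the identity $A^{(-j)}=\overline{A^{(j)}}$ and the shift-invariance of diagonal elements, $\langle\delta_k,\mathcal D_n^{\,c}(\cdot)(\mathcal D_n^{\,c})^*\delta_k\rangle=\langle\delta_k,(\cdot)\delta_k\rangle$. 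Your reflection $j\mapsto n-1-j$ is nothing but $j\mapsto -j$ followed by the shift by $n-1$, so you had the right tool in hand; it was only aimed at the wrong target.
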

The following is a proof of the claim.
First, note that the diagonal elements in (\ref{eq:diag-entries}) are invariant under shifts in $j$'s, in the meaning that for any $c\in\Z$
\begin{eqnarray}\label{eq:shift-inv}
\left\langle  \delta_k,A_{\eta_1}^{(j_1+c)}A_{\eta_2}^{(j_2+c)}\ldots A_{\eta_\ell}^{(j_\ell+c)} \delta_k\right\rangle
&=&
\left\langle  \delta_k, \mathcal{D}_n^c A_{\eta_1}^{(j_1)}A_{\eta_2}^{(j_2)}\ldots A_{\eta_\ell}^{(j_\ell)}(\mathcal{D}_n^{c})^* \delta_k\right\rangle \nonumber\\
&=& \left\langle   \delta_k, A_{\eta_1}^{(j_1)}A_{\eta_2}^{(j_2)}\ldots A_{\eta_\ell}^{(j_\ell)} \delta_k\right\rangle
\end{eqnarray}
where we use the fact that $A^{(j+c)}=\mathcal{D}_n^c A^{(j)} (\mathcal{D}_n^{c})^*$ and $\mathcal{D}_n$ is the $n$-periodic field (unitary and diagonal), see (\ref{def:BC:j}).

Next, in the LHS of (\ref{eq:diag-entries}), consider the change of variables 
\[
(j_1,j_2,\ldots, j_\ell)\mapsto(-j_\ell,\ldots,-j_2,-j_1),
\]
 then use the fact $A^{(-j)}=\overline{A^{(j)}}$,
\begin{eqnarray}
\sum_{0\leq j_1<\ldots<j_\ell\leq n-1} \left\langle  \delta_k,\prod_{k=1}^\ell A_{\eta_k}^{(j_k)} \delta_k\right\rangle
&=&
\sum_{-(n-1)\leq j_1<\ldots<j_\ell\leq 0} \left\langle  \delta_k, A_{\eta_1}^{(-j_\ell)}A_{\eta_2}^{(-j_{\ell-1})}\ldots A_{\eta_\ell}^{(-j_1)} \delta_k\right\rangle \nonumber\\
&=&\sum_{-(n-1)\leq j_1<\ldots<j_\ell\leq 0} \overline{\left\langle  \delta_k, A_{\eta_1}^{(j_\ell)}A_{\eta_2}^{(j_{\ell-1})}\ldots A_{\eta_\ell}^{(j_1)} \delta_k\right\rangle} \nonumber\\
&=&
\sum_{0\leq j_1<\ldots<j_\ell\leq n-1} \overline{\left\langle \delta_k, A_{\eta_1}^{(j_\ell)}A_{\eta_2}^{(j_{\ell-1})}\ldots A_{\eta_\ell}^{(j_1)} \delta_k\right\rangle} \nonumber\\
&=&\sum_{0\leq j_1<\ldots<j_\ell\leq n-1}\left\langle \delta_k,(A_{\eta_\ell}^{(j_1)})^*(A_{\eta_{\ell-1}}^{(j_2)})^*\ldots (A_{\eta_1}^{(j_\ell)})^* \delta_k\right\rangle.
\end{eqnarray}
In the second-to-third step we used the  invariance under shifts in $j$'s (\ref{eq:shift-inv}). This marks the end of the proof of the claim and finishes Case 2 (when $\ell-m$ is even) in Lemma \ref{lem:idty}.
\end{proof}

We need to highlight some important recursive properties of $\s_{[0,n-1]}^{\ell,m}$. 
\begin{lem}\label{lem:rec-formulas}
For all $0\leq m\leq \ell\leq n$, we have the recursive formulas
\begin{equation}\label{eq:rec}
\s_{[0,n-1]}^{\ell,m}=\begin{cases}
B\s_{[1,n-1]}^{\ell-1,0}+\s_{[1,n-1]}^{\ell,0}\, \chi_{\ell\neq n} & \text{ if }m=0\\[0.2cm]
C\s_{[1,n-1]}^{\ell-1,m-1}+B\s_{[1,n-1]}^{\ell-1,m}+\s_{[1,n-1]}^{\ell,m} \, \chi_{\ell\neq n}& \text{ if } 1\leq m<\ell\\[0.2cm]
C\s_{[1,n-1]}^{\ell-1,\ell-1}+\s_{[1,n-1]}^{\ell,\ell}\, \chi_{\ell\neq n} & \text{ if } m=\ell\in[1,n]
\end{cases}
\end{equation}
and 
\begin{equation}\label{2}
 \mathcal{D}_n\s_{[0,n-1]}^{\ell,m} (\mathcal{D}_n)^*=
 \begin{cases}
 \s_{[1,n-1]}^{\ell-1,0}B+\s_{[1,n-1]}^{\ell,0}\, \chi_{\ell\neq n} & \text{ if } m=0\\[0.2cm]
 \s_{[1,n-1]}^{\ell-1,m-1}C+\s_{[1,n-1]}^{\ell-1,m}B+\s_{[1,n-1]}^{\ell,m}\, \chi_{\ell\neq n} & \text{ if }1\leq m<\ell\\[0.2cm]
  \s_{[1,n-1]}^{\ell-1,\ell-1}C+\s_{[1,n-1]}^{\ell,\ell} \, \chi_{\ell\neq n}& \text{ if }m=\ell\in[1,n]
 \end{cases}
 \end{equation}
 where $\chi_{\ell\neq n}= 1$ if $\ell\neq n$ and zero otherwise.
\end{lem}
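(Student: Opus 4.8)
The plan is to prove both families of recursions by splitting the sum defining $\s_{[0,n-1]}^{\ell,m}$ according to whether or not the index $j=0$ appears among $j_1<\dots<j_\ell$. Recall from \eqref{eq:S-def} that each term of $\s_{[0,n-1]}^{\ell,m}$ is a product $\prod_{k=1}^{\ell} A_{\eta_k}^{(j_k)}$ over increasing index tuples in $[0,n-1]$ with $|\eta|=m$. If $j_1=0$ then the leftmost factor is either $C^{(0)}=C$ (when $\eta_1=1$) or $B^{(0)}=B$ (when $\eta_1=0$), and the remaining $\ell-1$ factors range over increasing tuples in $[1,n-1]$ with the appropriate number of $C$'s; this produces the $C\s_{[1,n-1]}^{\ell-1,m-1}$ and $B\s_{[1,n-1]}^{\ell-1,m}$ pieces. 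If $0\notin\{j_1,\dots,j_\ell\}$ then all indices lie in $[1,n-1]$ and we get $\s_{[1,n-1]}^{\ell,m}$, but this is only possible when $\ell\le n-1$, which is exactly the role of the factor $\chi_{\ell\neq n}$. The three cases $m=0$, $1\le m<\ell$, $m=\ell$ just reflect which of the two $j_1=0$ subcases can occur. So \eqref{eq:rec} is essentially a bookkeeping identity once the definitions are unwound.

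For \eqref{2} the strategy is to conjugate \eqref{eq:rec} by $\mathcal{D}_n$ and use the intertwining relations to move $\mathcal{D}_n$ through. The key observation is that $\mathcal{D}_n^{-1}$ sends the index set $[1,n-1]$ to itself up to the $n$-periodicity $A^{(n+k)}=A^{(k)}$ from \eqref{imp-1}: concretely, conjugating $A^{(j)}$ by $\mathcal{D}_n$ shifts the superscript, $\mathcal{D}_n A^{(j)}\mathcal{D}_n^* = A^{(j+1)}$, so that $\mathcal{D}_n\s_{[0,n-1]}^{\ell,m}\mathcal{D}_n^*$ is the symmetric sum over increasing tuples in $[1,n]$. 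One then splits \emph{that} sum according to whether or not the index $j=n$ (equivalently $j=0$ by periodicity) appears. When it does, it is the \emph{rightmost} factor (since $n$ is the largest index), giving terms $\s_{[1,n-1]}^{\ell-1,m-1}C^{(n)}=\s_{[1,n-1]}^{\ell-1,m-1}C$ and $\s_{[1,n-1]}^{\ell-1,m}B^{(n)}=\s_{[1,n-1]}^{\ell-1,m}B$ (again using $C^{(n)}=C^{(0)}=C$, $B^{(n)}=B$); when it does not, all indices lie in $[1,n-1]$ and we recover $\s_{[1,n-1]}^{\ell,m}$, possible only if $\ell\le n-1$. This gives \eqref{2} with the factors $B,C$ now on the right, exactly as stated.

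An alternative and perhaps cleaner route for \eqref{2} is to apply the adjoint/reflection symmetry already exploited in the proof of Lemma~\ref{lem:idty}: one can relate $\mathcal{D}_n\s_{[0,n-1]}^{\ell,m}\mathcal{D}_n^*$ to a reversed product and read off the recursion by splitting on the largest index. Either way, the two identities \eqref{eq:rec} and \eqref{2} are ``mirror images'' of each other, with the distinguished index being the smallest ($j=0$, leftmost factor) in the first case and the largest ($j=n\equiv 0$, rightmost factor) in the second.

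The main obstacle is purely notational rather than conceptual: one must be careful about the ordering of factors and about the edge conditions. Specifically, the $\chi_{\ell\neq n}$ factor must be tracked precisely — when $\ell=n$ every index in $[0,n-1]$ is forced to appear, so the ``$j_1=0$ absent'' branch is empty — and one must correctly match the number of remaining $C$'s ($m-1$ versus $m$) to whether the extracted factor at the distinguished index is a $C$ or a $B$. The cases $m=0$ and $m=\ell$ are genuinely degenerate (no $C$'s at all, or no $B$'s at all) and must be listed separately so that the symbols $\s_{[1,n-1]}^{\ell-1,m-1}$ or $\s_{[1,n-1]}^{\ell-1,m}$ never carry an out-of-range superscript. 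Once these boundary bookkeeping details are pinned down, both formulas follow by directly partitioning the defining sums.
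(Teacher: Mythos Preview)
Your proposal is correct and follows essentially the same approach as the paper: for \eqref{eq:rec} you split the defining sum according to whether $j_1=0$ (and then whether $\eta_1=1$ or $\eta_1=0$), and for \eqref{2} you use the intertwining $\mathcal{D}_n A^{(j)}\mathcal{D}_n^*=A^{(j+1)}$ to shift the index range to $[1,n]$ and then split according to whether $j_\ell=n$. The paper does exactly this, including the same handling of the $\chi_{\ell\neq n}$ edge case and the separate treatment of $m=0$ and $m=\ell$.
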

Observe that  $\s_{[1,n-1]}^{\ell-1,0}$ and $\s_{[1,n-1]}^{\ell-1,\ell-1}$ are sums of products of only $B$, and only $C$ operators, respectively. Since $[B,B^{(j)}]=[C,C^{(j)}]=0$ for all $j\in\Z$, we observe that $[B,\s_{[1,n-1]}^{\ell-1,0}]=[C,\s_{[1,n-1]}^{\ell-1,\ell-1}]=0$. Compare  (\ref{eq:rec}) and (\ref{2}) to conclude the results of the following Corollary.

\begin{cor}\label{cor:s-D}
For any $0\leq \ell\leq n$, 
\begin{equation}\label{eq:s-m0-Com-D}
[\s_{[0,n-1]}^{\ell,0},\mathcal{D}_n]=0\text{  and  } [\s_{[0,n-1]}^{\ell,\ell},\mathcal{D}_n]=0
\end{equation}
and for general $1\leq m\leq \ell\leq n$
  \begin{equation}\label{eq:com-with-d}
 [\s_{[0,n-1]}^{\ell,m},\mathcal{D}_n]=0 \iff 
  [C, \s_{[1,n-1]}^{\ell-1,m-1}]+[B,\s_{[1,n-1]}^{\ell-1,m}]=0.
   \end{equation}
 \end{cor}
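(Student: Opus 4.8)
The plan is to read off Corollary~\ref{cor:s-D} directly from the two families of recursive identities in Lemma~\ref{lem:rec-formulas}, by subtracting the formula for $\s_{[0,n-1]}^{\ell,m}$ from the formula for $\mathcal{D}_n\s_{[0,n-1]}^{\ell,m}(\mathcal{D}_n)^*$. The commutator $[\s_{[0,n-1]}^{\ell,m},\mathcal{D}_n]$ vanishes if and only if $\mathcal{D}_n\s_{[0,n-1]}^{\ell,m}(\mathcal{D}_n)^* = \s_{[0,n-1]}^{\ell,m}$ (since $\mathcal{D}_n$ is unitary), so the whole statement reduces to comparing (\ref{eq:rec}) with (\ref{2}) term by term.

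First I would treat the two border cases $m=0$ and $m=\ell$. For $m=0$, (\ref{eq:rec}) gives $\s_{[0,n-1]}^{\ell,0}=B\s_{[1,n-1]}^{\ell-1,0}+\s_{[1,n-1]}^{\ell,0}\chi_{\ell\neq n}$ while (\ref{2}) gives $\mathcal{D}_n\s_{[0,n-1]}^{\ell,0}(\mathcal{D}_n)^*=\s_{[1,n-1]}^{\ell-1,0}B+\s_{[1,n-1]}^{\ell,0}\chi_{\ell\neq n}$; these are equal precisely because $\s_{[1,n-1]}^{\ell-1,0}$ is a sum of products of $B$-operators only, and $[B,B^{(j)}]=0$ for all $j$ by (\ref{eq:important}), so $B\s_{[1,n-1]}^{\ell-1,0}=\s_{[1,n-1]}^{\ell-1,0}B$. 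The case $m=\ell$ is symmetric, using that $\s_{[1,n-1]}^{\ell-1,\ell-1}$ is a sum of products of $C$-operators only and $[C,C^{(j)}]=0$. This establishes (\ref{eq:s-m0-Com-D}).

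For the general case $1\le m\le\ell\le n$, subtracting (\ref{eq:rec}) from (\ref{2}) and noting that the $\s_{[1,n-1]}^{\ell,m}\chi_{\ell\neq n}$ terms cancel, one gets
\begin{equation}
\mathcal{D}_n\s_{[0,n-1]}^{\ell,m}(\mathcal{D}_n)^* - \s_{[0,n-1]}^{\ell,m} = \left(\s_{[1,n-1]}^{\ell-1,m-1}C - C\s_{[1,n-1]}^{\ell-1,m-1}\right) + \left(\s_{[1,n-1]}^{\ell-1,m}B - B\s_{[1,n-1]}^{\ell-1,m}\right),
\end{equation}
which is exactly $-\big([C,\s_{[1,n-1]}^{\ell-1,m-1}]+[B,\s_{[1,n-1]}^{\ell-1,m}]\big)$. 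Hence $[\s_{[0,n-1]}^{\ell,m},\mathcal{D}_n]=0$ if and only if $[C,\s_{[1,n-1]}^{\ell-1,m-1}]+[B,\s_{[1,n-1]}^{\ell-1,m}]=0$, which is (\ref{eq:com-with-d}). There is no real obstacle here; the only point requiring a little care is the bookkeeping of the $\chi_{\ell\neq n}$ terms and making sure they genuinely match and cancel on both sides, and noting that the argument for the border cases is the degenerate version of the general computation where one of the two commutator terms is absent and the other vanishes by the commutation relations in (\ref{eq:important}). All the substantive content is already packaged in Lemma~\ref{lem:rec-formulas} and (\ref{eq:important}), so the proof of the Corollary is essentially a one-line comparison once those are in hand.
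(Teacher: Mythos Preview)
Your proposal is correct and follows exactly the paper's own approach: the paper simply says to compare (\ref{eq:rec}) with (\ref{2}) and to use $[B,\s_{[1,n-1]}^{\ell-1,0}]=[C,\s_{[1,n-1]}^{\ell-1,\ell-1}]=0$ (which follow from (\ref{eq:important})) for the border cases. You have merely written out this comparison explicitly, including the cancellation of the $\chi_{\ell\neq n}$ terms and the identification of the remaining difference as the commutator sum in (\ref{eq:com-with-d}).
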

The strategy to see (\ref{eq:rec}) is that we can write $\s_{[0,n-1]}^{\ell,m}$ as a sum of three big sums: The first sum has all terms that start by $C^{(0)}$ (assuming that $m\geq 1$), the second sum includes the terms that start with $B^{(0)}$, and the last sum includes the remaining products. A similar argument applies for (\ref{2}). Here are the details.

\begin{proof}[Proof of Lemma \ref{lem:rec-formulas}]

We start by presenting the proof for the simple case when $m=0$, then the case $m\geq 1$ is a direct generalization.

Recall that
\begin{equation}\label{eq:s-ell-0}
\s_{[0,n-1]}^{\ell,0}=\sum_{0\leq j_1<j_2<\ldots<j_\ell\leq n-1}B^{(j_1)}B^{(j_2)}\ldots B^{(j_\ell)}.
\end{equation}
Then we can distinguish the terms that start with $B$ as follows
\begin{eqnarray}
\s_{[0,n-1]}^{\ell,0}&=&\sum_{1\leq j_2<j_3<\ldots<j_\ell\leq n-1}B^{(0)}\prod_{k=2}^{\ell}B^{(j_k)}
+\chi_{\ell\neq n}\, 
\sum_{1\leq j_1<j_2<\ldots<j_\ell\leq n-1}\prod_{k=1}^\ell B^{(j_k)} \nonumber\\
&=&B\s_{[1,n-1]}^{\ell-1,0}+\s_{[1,n-1]}^{\ell,0}\, \chi_{\ell\neq n}
\end{eqnarray}
which proves the first case of (\ref{eq:rec}). To see (\ref{2}) when $m=0$,
multiply $\s_{[0,n-1]}^{\ell,0}$ in (\ref{eq:s-ell-0}) from left by $\mathcal{D}_n$ and apply \eqref{def:gen-BC-U}, that $\mathcal{D}_n B^{(j)}=B^{(j+1)}\mathcal{D}_n$ to find 
\begin{eqnarray}
\mathcal{D}_n\s_{[0,n-1]}^{\ell,0}&=&
\sum_{0\leq j_1<j_2<\ldots<j_\ell\leq n-1}\prod_{k=1}^\ell B^{(j_k+1)}\mathcal{D}_n \nonumber\\
&=&
\sum_{1\leq j_1<j_2<\ldots<j_{\ell-1}\leq n-1}\left(\prod_{k=1}^{\ell-1} B^{(j_k)}\right)B\mathcal{D}_n+
\chi_{\ell\neq n}\, \sum_{1\leq j_1<j_2<\ldots<j_\ell\leq n-1}\prod_{k=1}^\ell B^{(j_k)}\mathcal{D}_n
 \nonumber\\
&=& (\s_{[1,n-1]}^{\ell-1,0}B+\chi_{\ell\neq n}\, \s_{[1,n-1]}^{\ell,0})\mathcal{D}_n
\end{eqnarray}
where we used the fact that $B^{(n)}=B^{(0)}$.

For general $m\geq 1$, we note that $\s_{[0,n-1]}^{\ell,m}$ can be written, by considering the cases of $j_1=0$ and $j_1\neq 0$, as the sum of the following terms.
\begin{eqnarray}\label{eq:S-rec-0}
\s_{[0,n-1]}^{\ell,m}
&=&
\sum_{
\tiny\begin{array}{c}
j_1=0,\ 1\leq j_2<\ldots<j_\ell\leq n-1\\
\eta=(\eta_1,\ldots\eta_\ell)\in\{0,1\}^\ell\\
|\eta|=m
\end{array}
}A_{\eta_1}^{(0)}\prod_{k=2}^\ell A_{\eta_k}^{(j_k)}
+\chi_{\ell\neq n}\, 
\s_{[1,n-1]}^{\ell,m}
\end{eqnarray}
and the first sum can also be split with respect to $\eta_1\in\{0,1\}$ as follows
\begin{eqnarray}
1^{\text{st}}\text{ term in (\ref{eq:S-rec-0})}&=&
\sum_{
\tiny\begin{array}{c}
j_1=0,\ 1\leq j_2<\ldots<j_\ell\leq n-1\\
\eta_1=1,\ \eta=(\eta_2,\ldots\eta_\ell)\in\{0,1\}^{\ell-1}\\
|\eta|=m-1
\end{array}
}\ldots
+
\sum_{
\tiny\begin{array}{c}
j_1=0,\ 1\leq j_2<\ldots<j_\ell\leq n-1\\
\eta_1=0,\ \eta=(\eta_2,\ldots\eta_\ell)\in\{0,1\}^{\ell-1}\\
|\eta|=m
\end{array}
}\ldots\nonumber\\
&=&
C \s_{[1,n-1]}^{\ell-1,m-1}+B\s_{[1,n-1]}^{\ell-1,m}.
\end{eqnarray}

We obtain
\begin{equation}
\s_{[0,n-1]}^{\ell,m}=C\s_{[1,n-1]}^{\ell-1,m-1}+B\s_{[1,n-1]}^{\ell-1,m}+\chi_{\ell\neq n}\, \s_{[1,n-1]}^{\ell,m}.
\end{equation}
Next we find $\mathcal{D}_n\s_{[0,n-1]}^{\ell,m}$ using $\mathcal{D}_n C^{(j)}=C^{(j+1)}\mathcal{D}_n$ and $\mathcal{D}_n B^{(j)}=B^{(j+1)}\mathcal{D}_n$ then take $B$ and $C$ as common factors from the ``tails" of products as follows. 
\begin{eqnarray}\label{eq:rec-D-S-0}
\mathcal{D}_n\s_{[0,n-1]}^{\ell,m}\mathcal{D}_n^*
&=&
\sum_{
\tiny\begin{array}{c}
0\leq j_1<\ldots<j_\ell\leq n-1\\
\eta=(\eta_1,\ldots\eta_\ell)\in\{0,1\}^\ell;\\
|\eta|=m
\end{array}
}\, \prod_{k=1}^\ell A_{\eta_k}^{(j_k+1)}
=
\sum_{
\tiny\begin{array}{c}
1\leq j_1<\ldots<j_\ell\leq n\\
\eta=(\eta_1,\ldots\eta_\ell)\in\{0,1\}^\ell;\\
|\eta|=m
\end{array}
}\, \prod_{k=1}^\ell A_{\eta_k}^{(j_k)} \nonumber\\
&=&
\chi_{\ell\neq n}\, \s_{[1,n-1]}^{\ell,m}
+
\sum_{
\tiny\begin{array}{c}
1\leq j_1<\ldots<j_{\ell-1}\leq n-1,\ j_{\ell}=n\\
\eta=(\eta_1,\ldots\eta_\ell)\in\{0,1\}^\ell;\\
|\eta|=m
\end{array}
}\, \left(\prod_{k=1}^{\ell-1} A_{\eta_k}^{(j_k)}\right)A_{\eta_\ell}^{(0)}.
\end{eqnarray}
Then we split the second term (the sum) as
\begin{eqnarray}
2^{\text{nd}} \text{ term in (\ref{eq:rec-D-S-0})} &=&\sum_{
\tiny\begin{array}{c}
1\leq j_1<\ldots<j_{\ell-1}\leq n-1,\ j_{\ell}=n\\
\eta=(\eta_1,\ldots\eta_{\ell-1})\in\{0,1\}^{\ell-1},\ \eta_\ell=0\\
|\eta|=m
\end{array}
}\ldots
+
\sum_{
\tiny\begin{array}{c}
1\leq j_1<\ldots<j_{\ell-1}\leq n-1,\ j_{\ell}=n\\
\eta=(\eta_1,\ldots\eta_{\ell-1})\in\{0,1\}^{\ell-1},\ \eta_\ell=1\\
|\eta|=m-1
\end{array}
}\ldots \nonumber\\
&=& \s_{[1,n-1]}^{\ell-1,m}B+\s_{[1,n-1]}^{\ell-1,m-1}C.
\end{eqnarray}
Thus, we proved that 
 \begin{equation}
 \mathcal{D}_n\s_{[0,n-1]}^{\ell,m} \mathcal{D}_n^* =\s_{[1,n-1]}^{\ell-1,m-1}C+\s_{[1,n-1]}^{\ell-1,m}B+\s_{[1,n-1]}^{\ell,m}\, \chi_{\ell\neq n}.
 \end{equation}
\end{proof}

\subsection{An induction argument on $\ell$ and $m$ }\label{sec:Proof-of-thm}\label{subsec:proof-2}

Now we are ready to present the proof of Theorem \ref{thm:main-step}. That is, we will show that $\s_{[0,n-1]}^{\ell,m}$ is a constant multiple of the identity for all $(\ell,m)\in \Xi_n$, where we recall here (for the reader's convenience) that
\begin{equation}
\Xi_n:=\{(\ell,m)\in\Z^2;\, 0\leq m\leq \ell \leq n \text{ such that }\ell+m<n\}.
\end{equation}

\begin{proof}[Proof of Thm \ref{thm:main-step}]
We use induction on $m$ and $\ell$ to show that $[\s_{[0,n-1]}^{\ell,m},\mathcal{D}_n]=0$ for all $(\ell,m)\in\Xi_n$, then Lemma \ref{lem:idty} gives directly, the desired result.

We use an induction argument for $(\ell,m)\in\Xi_n$ in three steps, see Figure \ref{Induction}, where we prove the following
\begin{itemize}
\item \emph{Step 1}: $[\s_{[0,n-1]}^{\ell,0},\mathcal{D}_n]=0$ for all $1\leq \ell<n$, and $[\s_{[0,n-1]}^{\ell,\ell},\mathcal{D}_n]=0$ for all $1\leq \ell<n/2$.
\item \emph{Step 2}: $[\s_{[0,n-1]}^{\ell,1},\mathcal{D}_n]=0$ for all $2\leq\ell <n-1$.
\item \emph{Step 3}: Here we use the inductive argument, if $[\s_{[0,n-1]}^{\ell,k},\mathcal{D}_n]=[\s_{[0,n-1]}^{\ell,k-1},\mathcal{D}_n]=0$ then $[\s_{[0,n-1]}^{\ell+1,k},\mathcal{D}_n]=0$ for all $2\leq k\leq \ell$ such that $\ell+k<n$.
\end{itemize}

\begin{figure}
\begin{center}
\includegraphics[width=4in]{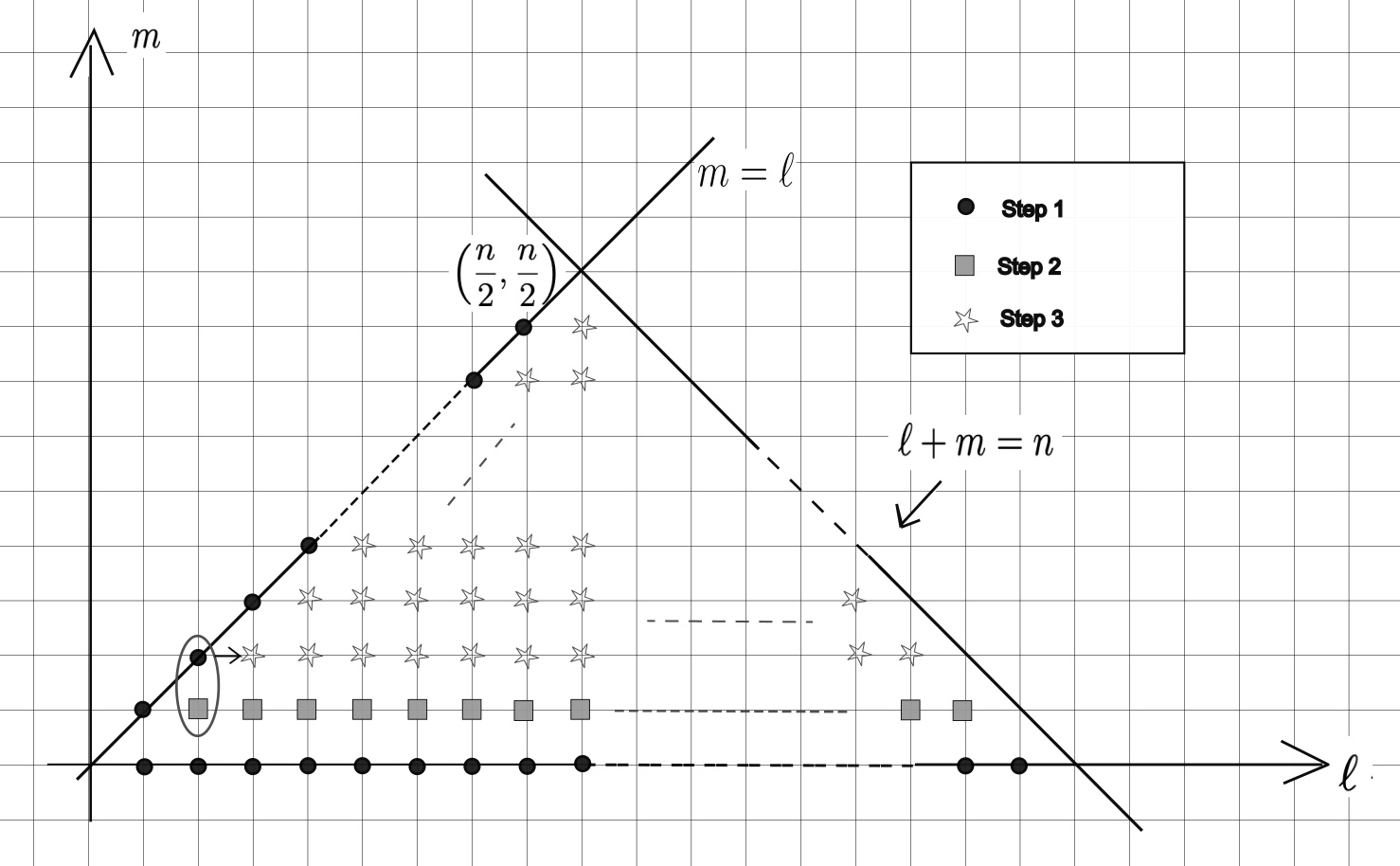}
\caption{Region $\Xi_n\subset\Z^2$ and the steps of proof by induction on $\ell$ and $m$.}
\label{Induction}
\end{center}
\end{figure}
\noindent\underline{Proof of \emph{Step 1}:} This is result (\ref{eq:s-m0-Com-D}) in Corollary \ref{cor:s-D}.
Lemma \ref{lem:idty} implies that $\s_{[0,n-1]}^{\ell,0}$ is constant multiple of the identity for all $\ell<n$, and that $\s_{[0,n-1]}^{\ell,\ell}$ is constant multiple of the identity for all $2\leq 2\ell<n$.



 \noindent\underline{Proof of \emph{Step 2}:} We use induction on $1\leq \ell\leq n-2$. The initial step that $[\s_{[0,n-1]}^{1,1},\mathcal{D}_n]=0$ is proven in \emph{Step 1}.
 For the inductive step, assume  that $[\s_{[0,n-1]}^{\ell,1},\mathcal{D}_n]=0$ for a fixed $1\leq \ell <n-2$ (and hence it is a constant multiple of the identity, see Lemma \ref{lem:idty}). That is, we assume that (see Corollary \ref{cor:s-D})
 \begin{equation}\label{step3-assu}
 [C, \s_{[1,n-1]}^{\ell-1,0}]+[B, \s_{[1,n-1]}^{\ell-1,1}]=0
 \end{equation}
 and we want to show that $[\s_{[0,n-1]}^{\ell+1,1},\mathcal{D}_n]=0$.
Set $m=1$ in (\ref{eq:com-with-d}) in Corollary \ref{cor:s-D} to see that  
 \begin{equation}\label{step3-Result}
 [\s_{[0,n-1]}^{\ell+1,1},\mathcal{D}_n]=0 \iff [C, \s_{[1,n-1]}^{\ell,0}]+[B,\s_{[1,n-1]}^{\ell,1}]=0.
 \end{equation}
So, it is enough to show that
\begin{equation}
[C, \s_{[1,n-1]}^{\ell,0}]+[B,\s_{[1,n-1]}^{\ell,1}]=0.
\end{equation}
Note that Lemma \ref{lem:idty} together with \emph{Step 1} and the inductive assumption give that both operators $\s_{[0,n-1]}^{\ell,0}$ and  $\s_{[0,n-1]}^{\ell,1}$ are constant multiples of the identity, and recall from \eqref{eq:rec} that 
\begin{equation}\label{Step2:0}
\s_{[1,n-1]}^{\ell,0}=\s_{[0,n-1]}^{\ell,0}-B \s_{[1,n-1]}^{\ell-1,0} \text{ and }\s_{[1,n-1]}^{\ell,1}=\s_{[0,n-1]}^{\ell,1}-C\s_{[1,n-1]}^{\ell-1,0}-B\s_{[1,n-1]}^{\ell-1,1},
\end{equation}
and that $\s_{[0,n-1]}^{\ell,0}$  is a constant multiple of the identity, to see that
\begin{equation}\label{Step2:eq1}
[C, \s_{[1,n-1]}^{\ell,0}]=-[C, B\s_{[1,n-1]}^{\ell-1,0}].
\end{equation}
Similarly, since $\s_{[0,n-1]}^{\ell,1}$ is a constant multiple of the identity, (\ref{Step2:0}) gives
\begin{equation}\label{Step2:eq2}
[B,\s_{[1,n-1]}^{\ell,1}] =-[B,C\s_{[1,n-1]}^{\ell-1,0}+B\s_{[1,n-1]}^{\ell-1,1}].
\end{equation}

Then add equations (\ref{Step2:eq1}) and (\ref{Step2:eq2}), use Leibniz rule $[A,BC]=B[A,C]+[A,B]C$, and the fact that  $\s_{[1,n-1]}^{\ell-1,0}$ is a sum of products of the $B^{(j)}$ operators only, and hence it commutes with $B$, to see that
\begin{eqnarray}
[C, \s_{[1,n-1]}^{\ell,0}]+[B,\s_{[1,n-1]}^{\ell,1}]&=&-B\left([C, \s_{[1,n-1]}^{\ell-1,0}]+[B, \s_{[1,n-1]}^{\ell-1,1}]\right)-([C,B]+[B,C]) \s_{[1,n-1]}^{\ell-1,0} \nonumber \\
&=&0
\end{eqnarray}
where we used the induction assumption (\ref{step3-assu}).

\noindent\underline{Proof of \emph{Step 3}:} Assume that for any fixed $\ell$ and $m$ such that  $2\leq m\leq \ell$ and $\ell+m<n$, we have $[\s_{[0,n-1]}^{\ell,m},\mathcal{D}_n]=[\s_{[0,n-1]}^{\ell,m-1},\mathcal{D} _n]=0$, i.e., we assume that
\begin{eqnarray}\label{eq:step4}
[C, \s_{[1,n-1]}^{\ell-1,m-1}]+[B,\s_{[1,n-1]}^{\ell-1,m}]&=&0\nonumber \\[0em]
[C, \s_{[1,n-1]}^{\ell-1,m-2}]+[B,\s_{[1,n-1]}^{\ell-1,m-1}]&=&0
\end{eqnarray}
and we need to prove that 
\begin{equation}\label{step4-goal}
[\s_{[0,n-1]}^{\ell+1,m},\mathcal{D}_n]=0, \text{ that is }[C, \s_{[1,n-1]}^{\ell,m-1}]+[B,\s_{[1,n-1]}^{\ell,m}]=0.
\end{equation}
We proceed as in the previous step. Lemma \ref{lem:idty} together with \emph{Step 1} and the inductive assumption give that the operators $\s_{[0,n-1]}^{\ell,m}$ and  $\s_{[0,n-1]}^{\ell,m-1}$ are constant multiples of the identity. That is, by \eqref{eq:rec},
\begin{eqnarray}
[C, \s_{[1,n-1]}^{\ell,m-1}]&=& -[C,C\s_{[1,n-1]}^{\ell-1,m-2}+B\s_{[1,n-1]}^{\ell-1,m-1}]\\[0em]
[B,\s_{[1,n-1]}^{\ell,m}]&=& -[B,C\s_{[1,n-1]}^{\ell-1,m-1}+B\s_{[1,n-1]}^{\ell-1,m}].
\end{eqnarray}
Add these two formulas to see that $[C, \s_{[1,n-1]}^{\ell,m-1}]+[B,\s_{[1,n-1]}^{\ell,m}]$ can be written as
\begin{eqnarray}
&=& - C\left([C,\s_{[1,n-1]}^{\ell-1,m-2}]+[B,\s_{[1,n-1]}^{\ell-1,m-1}]\right)-\left([C,B]+[B,C]\right)\s_{[1,n-1]}^{\ell-1,m-1}+\nonumber\\
&&\hspace{4cm}-B\left([C,\s_{[1,n-1]}^{\ell-1,m-1}]+[B,\s_{[1,n-1]}^{\ell-1,m}]\right)\nonumber\\
&=& 0
\end{eqnarray}
where we used the induction assumptions (\ref{eq:step4}). This proves \eqref{step4-goal}, and finishes the inductive argument.
\end{proof}

\appendix
\section{Auxiliary Lemmas}

In this appendix we present some Lemmas that we use in the main body of this work.
\begin{lem}\label{Com-with-D}
Let $D=(\diag\{d_1,d_2,\ldots, d_n\})^{\oplus \Z}$ be a periodic operator of (fundamental) period  $n$ on $\ell^2(\Z)$ such  that $\{d_j\in\C,\, j=1,\ldots, n\}$ is a set of distinct complex numbers.
If a banded operator $K$  on   $\ell^2(\Z)$, with bandwidth less than $n$, commutes with $D$ then $K$ is a multiplication operator, i.e., $K$ is diagonal.
\end{lem}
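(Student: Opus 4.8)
The plan is to read the conclusion straight off the matrix entries of the commutator. Write the canonical basis of $\ell^2(\Z)$ as $\{|j\rangle : j\in\Z\}$ and let $\pi\colon\Z\to\{1,\dots,n\}$ denote reduction modulo the period, so $\pi(j)\equiv j \pmod n$ and $D|j\rangle = d_{\pi(j)}|j\rangle$. For the banded operator $K$ put $K_{jk}:=\langle j|K|k\rangle$; the bandwidth hypothesis is precisely that $K_{jk}=0$ whenever $|j-k|\ge n$.

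First I would compute the entries of the commutator: $\langle j|[K,D]|k\rangle = \langle j|KD|k\rangle - \langle j|DK|k\rangle = d_{\pi(k)}K_{jk} - d_{\pi(j)}K_{jk} = (d_{\pi(k)}-d_{\pi(j)})K_{jk}$. Since $[K,D]=0$, this forces $K_{jk}=0$ whenever $d_{\pi(j)}\ne d_{\pi(k)}$. Here is where the distinctness of $d_1,\dots,d_n$ enters: $d_{\pi(j)}=d_{\pi(k)}$ holds if and only if $\pi(j)=\pi(k)$, i.e.\ if and only if $j\equiv k\pmod n$. Hence $K_{jk}=0$ unless $j\equiv k\pmod n$.

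Finally I would combine the two constraints. If $j\equiv k\pmod n$ then $|j-k|$ is a nonnegative multiple of $n$, whereas the bandwidth bound gives $|j-k|<n$; together these force $j=k$. Therefore $K_{jk}=0$ for all $j\ne k$, so $K$ is a multiplication operator, as claimed. There is no real obstacle here: the argument is a two-line computation with commutator entries, and the only points that need care are invoking distinctness of the $d_j$ at exactly the step where one passes from $d_{\pi(j)}=d_{\pi(k)}$ to the congruence $j\equiv k\pmod n$, and observing that the statement uses nothing about $n$ being the \emph{fundamental} period beyond the fact that the $n$ values within a single period are pairwise distinct.
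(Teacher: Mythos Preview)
Your proof is correct and essentially identical to the paper's: both compute the $(j,k)$ entry of $[K,D]$ to obtain $(d_k-d_j)K_{jk}=0$, use distinctness of the $d_j$'s to conclude $K_{jk}=0$ whenever $0<|j-k|<n$, and then invoke the bandwidth hypothesis to finish. Your version is marginally more explicit in tracking the congruence $j\equiv k\pmod n$, but the argument is the same.
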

\begin{proof}
Without loss of generality, we may assume that 
\begin{equation}
D=\sum_{j\in\Z} d_j |j\rangle\langle j|, \text{ where }d_{j+n}=d_j \text{ for all }j\in\Z. 
\end{equation}

If $K$ commutes with $D$, then
$
\langle j | K D | k\rangle=\langle j | D K | k\rangle
$
for all $j,k\in\Z$. So
\begin{equation}
d_k\langle j | K | k \rangle=d_j \langle j | K | k\rangle.
\end{equation}
From the construction of the $n$-periodic operator $D$, we have $d_j\neq d_k$ whenever $0<|j-k|<n$. Thus
$\langle j | K | k\rangle=0$ for all $j,k\in\Z$ such that $0<|j-k|<n$, i.e., $K$ is diagonal.
\end{proof}

\begin{lem}\label{lem:BC}
For the operators $B$ and $C$ defined (\ref{def:BC}) and the position operator $X$ in (\ref{def:X}) we have the following
\begin{equation}\label{eq:XBC}
\|[X,B]\|= \|[X,C]\|= 1, \text{ and }\|[X, r Bt+Ct^2]\|=t.
\end{equation}
\end{lem}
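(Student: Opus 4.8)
The plan is to do everything by hand on the ordered orthonormal basis $\{\delta_k\}_{k\in\Z}$ of $\mathcal{H}$. I would first record that, since $\delta_{2j}=|\uparrow\rangle\otimes|j\rangle$ and $\delta_{2j+1}=|\downarrow\rangle\otimes|j\rangle$ both sit over the site $j$, the position operator acts diagonally by $X\delta_{2j}=|j|\,\delta_{2j}$ and $X\delta_{2j+1}=|j|\,\delta_{2j+1}$. Using $B\delta_k=\delta_{k-1}-\delta_{k+1}$ and the action of $C$ (which one checks sends $\delta_{2j}\mapsto\delta_{2j+2}$ and $\delta_{2j+1}\mapsto\delta_{2j-1}$), a short computation gives, for every $j\in\Z$,
\begin{align*}
[X,B]\,\delta_{2j}&=(|j-1|-|j|)\,\delta_{2j-1}, & [X,B]\,\delta_{2j+1}&=(|j|-|j+1|)\,\delta_{2j+2},\\
[X,C]\,\delta_{2j}&=(|j+1|-|j|)\,\delta_{2j+2}, & [X,C]\,\delta_{2j+1}&=(|j-1|-|j|)\,\delta_{2j-1}.
\end{align*}
The single elementary input that makes everything work is that $\bigl||j\pm1|-|j|\bigr|=1$ for every integer $j$.

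For $\|[X,B]\|=\|[X,C]\|=1$ I would observe that each of these operators sends distinct basis vectors to unimodular scalar multiples of distinct basis vectors: for $[X,B]$, $\delta_{2j}\mapsto\pm\delta_{2j-1}$ and $\delta_{2j+1}\mapsto\pm\delta_{2j+2}$, and odd targets never collide with even ones, so on indices it is a bijection of $\Z$ with signs. Hence $[X,B]$ is a signed permutation of the orthonormal basis, therefore unitary, and of norm exactly $1$ (the weights all having modulus $1$ is what gives "exactly"); the same applies to $[X,C]$.

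For the last equality I set $L:=[X,\,rBt+Ct^{2}]=rt\,[X,B]+t^{2}\,[X,C]$ and read off from the formulas above that $L$ maps $\mathrm{span}\{\delta_{2j},\delta_{2j+1}\}$ into $\mathrm{span}\{\delta_{2j-1},\delta_{2j+2}\}$. Since $\{\mathrm{span}\{\delta_{2j},\delta_{2j+1}\}\}_{j\in\Z}$ and $\{\mathrm{span}\{\delta_{2j-1},\delta_{2j+2}\}\}_{j\in\Z}$ are both orthogonal decompositions of $\mathcal{H}$ into $2$-dimensional blocks, $L$ is block diagonal with respect to them and $\|L\|=\sup_{j}\|L_j\|$. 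Writing $a_j:=|j-1|-|j|\in\{-1,1\}$ and $b_j:=|j+1|-|j|\in\{-1,1\}$, the matrix of $L_j$ in the orthonormal bases $(\delta_{2j},\delta_{2j+1})$ of the domain and $(\delta_{2j-1},\delta_{2j+2})$ of the range is
\[
L_j=\begin{pmatrix}a_j&0\\0&b_j\end{pmatrix}\begin{pmatrix}rt&t^{2}\\ t^{2}&-rt\end{pmatrix}
=\begin{pmatrix}a_j&0\\0&b_j\end{pmatrix}\cdot t\begin{pmatrix}r&t\\ t&-r\end{pmatrix},
\]
a product of an orthogonal matrix and $t$ times an orthogonal matrix (using $r^{2}+t^{2}=1$), so $\|L_j\|=t$ for all $j$ and hence $\|L\|=t$.

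I do not expect a genuine obstacle. The only care needed is the bookkeeping forced by the two-to-one labelling $\delta_{2j},\delta_{2j+1}\leftrightarrow$ site $j$ together with the absolute value in $X$: it is precisely this that makes the commutators collapse to nearest-neighbour weighted shifts with unit weights and makes the blocks $L_j$ orthogonal up to the scalar $t$. One should also note that the boundary cases $t=0$ (everything is $0$) and $t=1$ ($r=0$, the $2\times2$ matrix still orthogonal) are covered automatically, which gives the claimed equalities there.
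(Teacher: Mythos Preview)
Your proposal is correct and follows essentially the same approach as the paper: both compute the action of $[X,B]$, $[X,C]$ on the basis, use that $\bigl||j\pm1|-|j|\bigr|=1$, and conclude norm equalities from the resulting structure. The only cosmetic difference is that the paper packages all three cases uniformly via $AA^*=c\,\idty_{\mathcal H}$ (computing $[X,B][X,B]^*=\idty$, $[X,C][X,C]^*=\idty$, and $LL^*=(r^2t^2+t^4)\idty=t^2\idty$), whereas you phrase the first two as ``signed permutation hence unitary'' and handle $L$ by an explicit $2\times2$ block decomposition; these are the same computation in different clothing.
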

\begin{proof}
Note that $B=T_{-1}^{\mathcal{H}}-T_1^{\mathcal{H}}$ an be written as
\begin{equation}
B=\begin{pmatrix}0 & 1\\-1& 0\end{pmatrix}
\otimes \idty_\Z+|\downarrow\rangle\langle\uparrow| \otimes T_{-1}- |\uparrow\rangle\langle\downarrow|\otimes T_1
\end{equation}
and recall that
\begin{equation}
C=|\downarrow\rangle\langle\downarrow|\otimes T_{-1}+
|\uparrow\rangle\langle\uparrow|\otimes T_1.
\end{equation}
Then a direct calculation shows that
\begin{eqnarray}\label{X:BC}
[X,B]&=&|\downarrow\rangle\langle\uparrow|\otimes \sum_{j\in\Z}\eta_j^- |j-1\rangle\langle j|-
|\uparrow\rangle\langle\downarrow|\otimes \sum_{j\in\Z}\eta_j^+ |j+1\rangle\langle j| \nonumber \\[0cm]
[X,C] &=&
|\uparrow\rangle\langle\uparrow| \otimes \sum_{j\in\Z}\eta_j^+ |j+1\rangle\langle j |+
|\downarrow\rangle\langle\downarrow|\otimes \sum_{j\in\Z}\eta_j^- |j-1\rangle\langle j|
\end{eqnarray}
where
\begin{equation}
\eta_j^{\pm}:=|j\pm1|-|j|\leq |\eta_j^{\pm}|= 1 \text{ for all }j\in\Z.
\end{equation}
Then it is direct to see that, noting that $(\eta_j^\pm)^2=1$ for all $j\in\Z$
\begin{eqnarray}
[X,B]([X,B])^*&=&|\downarrow\rangle\langle\downarrow| \otimes \sum_{j\in\Z}(\eta_{j+1}^-)^2 |j\rangle\langle j |+
|\uparrow\rangle\langle\uparrow|\otimes \sum_{j\in\Z}(\eta_{j-1}^+)^2 |j\rangle\langle j|
=\idty_{\mathcal{H}}.
\end{eqnarray}
Similarly, we have $[X,C]([X,C])^*= \idty_{\mathcal{H}}$, 
showing that
\begin{equation}
\|X,B\|=\|[X,C]\|=1.
\end{equation}
Use the formulas in (\ref{X:BC}) and $(\eta_j^\pm)^2=1$ for all $j\in\Z$ to find directly that
\begin{equation}
[X, r t B+ t^2 C]([X,rt B+ t^2 C])^*=(r^2t^2+t^4)\idty_{\mathcal{H}}
\end{equation}
giving that
\begin{equation}
\|[X,rt B+ t^2 C]\|=\sqrt{r^2t^2+t^4}=t.
\end{equation}
\end{proof}

\section{A Proof of Lemma \ref{lem:v<t}}\label{App:proof:v<t}

Since $U_{t,\mathcal{D}}$ is unitary, we simplify
\begin{eqnarray}
\|\tau^{U_{t,\mathcal{D}}}_N(X)-X\|=\|(U_{t,\mathcal{D}}^*)^N (X U_{t,\mathcal{D}}^N-U_{t,\mathcal{D}}^N X)\|=\|[X,U_{t,\mathcal{D}}^N]\|.
\end{eqnarray}
Then we use Leibniz rule to expand the commutator
\begin{equation}
[X,U_{t,\mathcal{D}}^N]=\sum_{k=0}^{N-1}  U_{t,\mathcal{D}}^k[X,U_{t,\mathcal{D}}]U_{t,\mathcal{D}}^{N-k-1}
\end{equation}
and obtain
\begin{equation}\label{eq:linear:1}
\frac1N\|\tau^{U_{t,\mathcal{D}}}_N (X)-X\|=\frac1N \|[X,U_{t,\mathcal{D}}^N]\|\leq
\frac1N\sum_{k=0}^{N-1}\|[X,U_{t,\mathcal{D}}]\|= \|[X,U_{t,\mathcal{D}}]\|.
\end{equation}
Recall that the local field $\mathcal{D}$ is diagonal and unitary, thus
\begin{equation}\label{eq:linear:2}
\|[X,U_{t,\mathcal{D}}]\|=\|[X,U_t]\| 
=
\|[X, r^2\idty_{\mathcal{H}}+ rB t+C t^2]\| 
=\|[X, rB t+C t^2]\| 
= t,
\end{equation}
where we used Lemma \ref{lem:BC}, and (\ref{U:1BC}) that
\begin{equation}
U_t=r^2\idty_{\mathcal{H}}+r B t+C t^2
\end{equation}
where $B$ and $C$ are given in (\ref{def:BC}). This proves inequality (\ref{eq:v<t}).

To show that $v_{U_1,\mathcal{D}}=1$, we observe first that for any local field $\mathcal{D}$,
\begin{equation}\label{v1D-1}
U_{1,\mathcal{D}}=\mathcal{D}C=|\downarrow\rangle\langle\downarrow|\otimes \sum_j \xi_{2j-2} |j-1\rangle\langle j|+
|\uparrow\rangle\langle\uparrow|\otimes \sum_j \xi_{2j+1} |j+1\rangle\langle j|,
\end{equation}
where $\xi_j=\langle j|\mathcal{D}|j\rangle$ for all $j\in\Z$ and recall that $|\xi_j|=1$ for all $j$. This shows directly that
\begin{eqnarray}
U_{1,\mathcal{D}}^N&=&|\downarrow\rangle\langle\downarrow|\otimes \sum_j \left(\prod_{k=1, k \text{ is even}}^{2N-1}\xi_{2j-k}\right) |j-N\rangle\langle j|+\nonumber\\
&&\hspace{3cm}
+|\uparrow\rangle\langle\uparrow|\otimes \sum_j \left(\prod_{k=1, k \text{ is odd}}^{N}\xi_{2j+k}\right) |j+N\rangle\langle j|.
\end{eqnarray}
With the initial state $\Psi_0=\varphi \otimes |0\rangle$ where $\varphi\in\C^2$ and $\|\varphi\|=1$, we obtain (the unit vector)
\begin{equation}
U_{1,\mathcal{D}}^N\Psi_0=\langle\downarrow|\varphi\rangle\left(\prod_{k=1, k \text{ is even}}^{2N-1}\xi_{2j-k}\right)|\downarrow\rangle\otimes |-N\rangle+\langle\uparrow|\varphi\rangle\left(\prod_{k=1, k \text{ is odd}}^{N}\xi_{2j+k}\right)|\uparrow\rangle \otimes |N\rangle
\end{equation}
then $XU_{1,\mathcal{D}}^N\Psi_0=N U_{1,\mathcal{D}}^N\Psi_0$, meaning hat
$
\frac1N \left\langle X\right\rangle_{U_{1,\mathcal{D}}^N  \Psi_0}=1$.
Hence $v_{1,\mathcal{D}}=1$ for any field $\mathcal{D}$ and
\begin{equation}\label{v1D-last}
1=v_{1,\mathcal{D}}\leq\limsup_{N\rightarrow\infty}\frac1N \|\tau^{U_{t,\mathcal{D}}}_N(X)-X\| \leq 1
\end{equation}
see (\ref{eq:v<t}).




\begin{thebibliography}{99}



\bibitem{QD-1} Y. Aharonov, L. Davidovich, N. Zagury, 
\emph{Quantum random walks}, Phys. Rev. A  \textbf{48},
1687-1690 (1993).

\bibitem{QW-Loc1} A. Ahlbrecht, V. B. Scholz, A.H. Werner,
\emph{Disordered quantum walks in one lattice dimension},
J. Math. Phys. \textbf{52}, 102201  (2011).

\bibitem{QWVelocity} A. Ahlbrecht, H. Vogts, A. H. Werner, R. F. Werner, \emph{Asymptotic evolution of quantum walks with random coin}, J. Math. Phys. \textbf{52}, 042201 (2011).


\bibitem{QDM-1} J. Asch, P. Duclos, P. Exner, \emph{Stability of driven systems with growing gaps, quantum rings, and Wannier ladders}, J. Stat. Phys. \textbf{92}, 1053-1070 (1998).

\bibitem{AK} J. Asch, A. Knauf, \emph{Motion in periodic potentials}, Nonlinearity \textbf{11}, 175 (1998).


\bibitem{BHJ} O. Bourget, J. Howland, A. Joye, \emph{Spectral analysis of unitary band matrices}, Commun. Math. Phys. \textbf{234}, 191-227 (2003).

\bibitem{BB1988} G. Blatter, D. Browne, \emph{Zener tunneling and localization in small conducting rings},
Phys. Rev. B \textbf{37}, 3856 (1988).

\bibitem{CMV1}  M. J. Cantero, L. Moral, L. Vel\'{a}zquez, \emph{Five-diagonal matrices and zeros of orthogonal polynomials on the unit circle}, Linear Algebra Appl. \textbf{362}, 29-56  (2003).


\bibitem{App-secureQM} C. M. Chandrashekar, Th. Busch, \emph{Localized quantum walks as secured quantum memory}, EPL \textbf{110}, 10005 (2015).


\bibitem{App-QGates1} A. M. Childs, D. Gosset, Z. Webb, \emph{Universal computation by multiparticle quantum walk}, Science \textbf{339}, 791-794 (2013).

\bibitem{Phs-PQW4} Chou Chung-I, Ho Choon-Lin, \emph{Localization and recurrence of a quantum walk in a periodic potential on a line}, Chinese Phys. B \textbf{23}, 110302 (2014)


\bibitem{CMV-Math2} D. Damanik, J. Erickson, J. Fillman, G. Hinkle, A. Vu, \emph{Quantum intermittency for sparse CMV matrices with an application to quantum walks on the half-line}, J. Approx. Theory \textbf{208}, 59-84 (2016).

\bibitem{QWVelocity-Damanic} D. Damanik, J. Fillman, D. C. Ong, \emph{spreading estimates for quantum walks on the integer lattice via power-law bounds on transfer matrices},
 J. Math. Pures Appl. \textbf{105}, 293-341  (2016).

\bibitem{DLY} D. Damanik, M. Lukic, W. Yessen, \emph{Quantum dynamics of periodic and limit-periodic Jacobi and block Jacobi matrices with applications to some quantum many body problems},    
Commun. Math. Phys. \textbf{337}, 1535-1561 (2015).

\bibitem{Phs-PQW1} B. Danaci et al., \emph{Disorder-free localization in quantum walks},
Phys. Rev. A \textbf{103}, 022416 (2021).

\bibitem{Phs-PQW3} C. Di Franco, M. Paternostro, \emph{Localizationlike effect in two-dimensional alternate quantum walks with periodic coin operations},
Phys. Rev. A \textbf{91}, 012328 (2015).


\bibitem{Exp-2} C. Esposito  et al., \emph{Quantum walks of two correlated photons in a 2D synthetic lattice}, npj Quantum Inf. \textbf{8}, 34 (2022).


\bibitem{CMV-Math1} L. Fang, D. Damanik, S. Guo, \emph{Generic spectral results for CMV matrices with dynamically defined Verblunsky coefficients},  J. Funct.  Anal. \textbf{279}, 108803 (2020).




\bibitem{HJ11} E. Hamza, A. Joye, \emph{Correlated Markov quantum walks}, Ann. Henri Poincar\'{e} \textbf{13}, 1767-1805 (2012).

\bibitem{HJ14} E. Hamza, A. Joye, \emph{Spectral transition for random quantum walks on trees}, Commun. Math. Phys. \textbf{326}, 415-439 (2014). 

\bibitem{HJS9}  E. Hamza, A. Joye, G. Stolz, \emph{Dynamical localization for unitary Anderson models}, Math Phys. Anal. Geom. \textbf{12}, 381 (2009).

\bibitem{HJS6} E. Hamza, A. Joye, G. Stolz, \emph{Localization for random unitary operators}, Lett. Math. Phys. \textbf{75}, 255-272 (2006).

\bibitem{HS7} E. Hamza, G. Stolz, \emph{Lyapunov exponents for unitary Anderson models}, J. Math. Phys. \textbf{48}, 043301 (2007).



\bibitem{J11} A. Joye, \emph{Random unitary models and their localization properties}, in Entropy and the Quantum II, Contemporary Mathematics, Vol. 552 (American Mathematical Society, Providence, RI, 2011) pp. 117-134.  
 
\bibitem{J4} A. Joye, \emph{Density of states and Thouless formula for random unitary band matrices},
Ann. Henri Poincar\'{e} \textbf{5}, 347-379 (2004).

\bibitem{J5} A. Joye, \emph{Fractional moment estimates for random unitary band matrices}, Lett. Math. Phys. \textbf{72}, 51-64 (2005).

\bibitem{JM10} A. Joye, M. Merkli, \emph{Dynamical localization of quantum walks in random environments}, J. Stat. Phys. \textbf{140}, 1025-1053 (2010).

\bibitem{Exp-3} M. Karski et al., \emph{Quantum Walk in Position Space with Single Optically Trapped Atoms}, Science \textbf{325}, 174-177 (2009).

\bibitem{TopOrd-1} T. Kitagawa, E. Berg, M. Rudner, E. Demler,
\emph{Topological characterization of periodically driven quantum systems},
Phys. Rev. B \textbf{82}, 235114 (2010).

\bibitem{TopOrd3} T. Kitagawa  et al., \emph{Observation of topologically protected bound states in photonic quantum walks},
Nat. Commun. \textbf{3}, 882 (2012).

\bibitem{Kono09} N. Konno, \emph{One-dimensional discrete-time quantum walks on random environments},
Quantum Inf. Process \textbf{8}, 387399 (2009).

\bibitem{QWLectures9} N. Konno, \emph{Quantum Walks}, in ``Quantum Potential Theory'', Franz, Sch\'{u}rmann Edts, Lecture Notes in Mathematics, 1954, 309-452, (2009).

\bibitem{Phy-Localization1} Y. Lahini  et al., \emph{Anderson localization and nonlinearity in one-dimensional disordered photonic lattices}, Phys. Rev. Lett. \textbf{100}, 013906 (2008).

\bibitem{Exp-5} Y. Lahini, G. R. Steinbrecher, A. D. Bookatz, D. Englund, \emph{Quantum logic using correlated one-dimensional quantum walks}, npj
Quant. Inf. \textbf{4}, 2 (2018).



\bibitem{App-Info-Transport} M. Leonetti  et al., \emph{Secure information transport by transverse localization of light}, Sci. Rep. \textbf{6}, 29918 (2016).

\bibitem{CMV-Math3} L. Li, D. Damanik, Q. Zhou, \emph{Cantor spectrum for CMV matrices with almost periodic Verblunsky coefficients}, J. Funct.  Anal. \textbf{283}, 109709 (2022).

\bibitem{App-QGates2} N. B. Lovett, S. Cooper, M. Everitt, M. Trevers, V. Kendon, \emph{Universal quantum computation using the discrete-time quantum walk}, Phys. Rev. A \textbf{81}, 042330 (2010).

\bibitem{QD-2} D. Lenstra, W. van Haeringen, \emph{Elastic scattering in a normal-metal loop causing resistive electronic behavior}, Phys. Rev. Lett. \textbf{57}, 1623-1626 (1986).




\bibitem{ESP-1} I. G. Macdonald, \emph{Symmetric Functions and Hall Polynomials (2nd ed.)}. Oxford: Clarendon Press, 1995.

\bibitem{Phy-Localization2} L. Martin  et al., \emph{Anderson localization in optical waveguide arrays with off-diagonal coupling disorder}, Opt. Express \textbf{19}, 13636-13646 (2011).

\bibitem{Alg-3} A. Montanaro, \emph{Quantum algorithms: An overview}, npj Quant. Inf. \textbf{2}, 15023 (2016).

\bibitem{Phs-QW5}  D. T. Nguyen, D. A. Nolan, N. F. Borrelli, \emph{Localized quantum walks in quasi-periodic Fibonacci arrays of waveguides}, Opt. Express \textbf{27}, 886-898 (2019).



\bibitem{Phs-PQW2} D. T. Nguyen et al., \emph{Quantum Walks in Periodic and Quasi-periodic Fibonacci Fibers}, Sci. Rep. \textbf{10}, 7156 (2020). 

\bibitem{Exp-4} C. Noh, D. G. Angelakis, \emph{Quantum simulations and many-body physics with light}, Rep. Prog. Phys. \textbf{80}, 016401 (2017).


\bibitem{QDM-2} C. R. de Oliveira, M. S. Simsen, \emph{A Floquet operator with purely point spectrum
and energy instability}, Ann. H. Poincar\'{e} \textbf{7}, 1255-1277 (2008).


\bibitem{Alg-2} R. Portugal, \emph{Quantum walks and search algorithms}, Springer New York, 2013. 






\bibitem{QD-3} J.W. Ryu, G. Hur, S. W. Kim, \emph{Quantum localization in open chaotic systems},
Phys. Rev. E Stat. Nonlin. Soft Matter Phys. \textbf{78}, 037201 (2008).



\bibitem{App-QS1} L. Sansoni, F. Sciarrino, G. Vallone, P. Mataloni, A. Crespi, \emph{Two-particle bosonic-Fermionic quantum walk via integrated photonics}, Phys. Rev. Lett. \textbf{108}, 010502 (2012).

\bibitem{Exp-1} B. Sephton et al., \emph{A versatile quantum walk resonator with bright classical light}, PLoS ONE \textbf{14}, e0214891 (2019).

\bibitem{Alg-1} N. Shenvi, J. Kempe, K. B. Whaley, \emph{A quantum random walk search algorithm}, Phys. Rev. A \emph{67}, 052307 (2003).

\bibitem{CMV-Simon} B. Simon, \emph{CMV matrices: Five years after}, J. Comput. Appl. Math \textbf{208}, 120-154 (2006).


\bibitem{App-QS2} J. Spring  et al., \emph{Boson sampling on a photonic chip}, Science \textbf{339}, 798-801 (2013).

\bibitem{ESP-2} R. P. Stanley, \emph{Enumerative combinatorics}, Vol. 2. Cambridge: Cambridge University Press, 1999.





\bibitem{TopOrd-2} J. Wu, W.W. Zhang, B. C. Sanders, \emph{Topological quantum walks: Theory and experiments}, Front. of Phys. \emph{14}, 61301 (2019).

\bibitem{TopOrd4} L. Xiao  et al.,
\emph{Observation of topological edge states in parity-time-symmetric quantum walks}, Nat. Phys. \textbf{13}, 1117-1123 (2017).









\end{thebibliography}
\end{document}